\begin{document}
\newtheorem{theoreme}{Theorem}
\newtheorem{ex}{Example}
\newtheorem{definition}{Definition}
\newtheorem{lemme}{Lemma}
\newtheorem{remarque}{Remark}
\newtheorem{exemple}{Example}
\newtheorem{proposition}{Proposition}
\newtheorem{corolaire}{Corollary}
\newtheorem{hyp}{Hypothesis}
\newtheorem*{rec}{Recurrence Hypothesis}
\newcommand\Bound{\partial\overline{X}}
\newcommand\bdf{b_\xi}
\newcommand\Euc{\mathbb{R}^d}
\newcommand\COIn{\tilde{\chi}}
\newcommand\Comp{\overline{X}}
\newcommand\Dir{\varphi}
\newcommand\Lag{\mathcal{L}_0}
\newcommand\ins{\gamma_{uns}}
\newcommand\guil{`}
\newcommand\ext{\mathcal{W}_{0}}
\newcommand\brut{A_7}
\newcommand\inter{\mathcal{W}_2}
\newcommand\close{\mathcal{W}_1}
\newcommand\tins{N_{uns}}
\newcommand\din{\delta_{in}}
\newcommand\inco{\mathcal{DE}_-}
\newcommand\se{\epsilon_{sec}}
\newcommand\Nlag{N_{lag}}
\newcommand\cmin{c_{min}}
\newcommand\ind{J_a^{h,r}}
\newcommand\sor{\mathcal{T}}
\newcommand\secur{\varepsilon_0}
\newcommand\diam{\varepsilon_6}
\newcommand\tim{T_0}
\newcommand\zone{\overline{a}}
\newcommand\exit{N_\epsilon}
\newcommand\wait{N_1}
\newcommand\cli{\varepsilon_7}
\newcommand\nuag{\chi_{7}}
\newcommand\rain{\Pi_7}
\newcommand\sui{A_1^{\leq N}}
\newcommand\symp{\mathcal{C}_0}
\newcommand\zeit{t_2}
\newcommand\petit{\varepsilon_2}
\newcommand\temps{t_5}
\newcommand\set{\mathcal{W}_3}
\newcommand\taille{\varepsilon_2}
\newcommand\bel{>}
\newcommand\R{\mathbb{R}}
\title{Distorted plane waves in chaotic scattering}
\author{Maxime Ingremeau}

\maketitle

\begin{abstract}
In this paper we provide a precise description of distorted plane waves for semiclassical Schr\"odinger
operators under the assumption that the classical trapped set is hyperbolic and that a certain
topological pressure (a quantity defined using thermodynamical formalism) is negative.
 Distorted plane waves are generalised eigenfunctions of the Schr\"odinger operator which differ
from free plane waves, $ e^{i \langle x, \xi \rangle/h} $, by an outgoing term. Under our assumptions
we show that they can be written as a convergent sum of Lagrangian states. That provides a description of their semiclassical defect measures in the spirit of quantum ergodicity and extends results of Guillarmou--Naud obtained for hyperbolic quotients to our setting.
\end{abstract}

\section{Introduction}

In this paper, we will consider on $\mathbb{R}^d$ a semiclassical Hamiltonian of the form
\begin{equation*}P_h=-h^2\Delta + V(x), ~~V\in C_c^\infty (\mathbb{R}^d).
\end{equation*}

We will study the “distorted plane waves”, or “scattering states”  associated to $P_h$. They are a family of functions $E^{\xi}_h\in C^\infty(\mathbb{R}^d)$ with parameter
$\xi\in \mathbb{S}^d$ (the
direction of propagation of the incoming wave) which are generalized eigenfunctions of $P_h$, that is to say, they satisfy the differential equation
\begin{equation}\label{boite}
(P_h - 1)E^{\xi}_h=0,
\end{equation}
but which are not in $L^2(\mathbb{R}^d)$ (since $P_h$ has no embedded eigenvalues in $\mathbb{R}^+$).

These distorted plane waves resemble the actual plane waves, in the sense that we may write
\begin{equation}\label{chti}
E_h^{\xi}(x) = e^{\frac{i}{h}x\cdot\xi} + E^\xi_{out},
\end{equation}
where, $E_{out}$ is outgoing in the sense that it satisfies the Sommerfeld radiation condition:
\begin{equation}\label{sommerfeld}
\lim \limits_{|x|\rightarrow \infty} |x|^{(d-1)/2} \Big{(} \frac{\partial}{\partial |x|} - \frac{i}{h} \Big{)} E^\xi_{out}(x) = 0.
\end{equation}

One can show (see for instance \cite[\S 2]{Mel} or \cite[\S 4]{dyatlov2016mathematical})  that for any $\xi\in \mathbb{S}^{d-1}$ and $h>0$, there exists a unique function $E_h^\xi$ satisfying conditions (\ref{boite}), (\ref{chti}) and (\ref{sommerfeld}).

Condition (\ref{sommerfeld}) may be equivalently stated by asking that $E^\xi_{out}$ is the image of a function in
$C_c^\infty(\mathbb{R}^d)$ by the outgoing resolvent
$(P_h-(1+i0)^2)^{-1}$, or by asking that $E_{out}^\xi$ may be put in the form
\begin{equation*}
E^\xi_{out}(x)= e^{i|x|/h} |x|^{-\frac{1}{2}(d-1)} \Big{(} a^\xi_h(\omega) + O\Big{(}\frac{1}{|x|}\Big{)} \Big{)},
\end{equation*}
where $\omega=x/|x|$. The function $a_h(\xi,\omega):=a_h^\xi(\omega)$ is called the \emph{scattering amplitude}, and is the integral kernel of the scattering matrix minus identity. The scattering amplitude, and hence the distorted plane waves, are central objects in scattering theory.

The aim of this paper is to discuss the behaviour of distorted plane waves in the semiclassical limit $h\rightarrow 0$.
Distorted plane waves can be seen as an analogue, on manifolds of infinite volume, of the eigenfunctions of a Schrödinger operator on a compact manifold. It is therefore natural to ask questions similar to those in the compact case: what can be said about the semiclassical measures of distorted plane waves ? About the behaviour of their $L^p$ norms as $h\rightarrow 0$ ? About their nodal sets and nodal domains ? 

The answer to these questions will depend in a drastic way on the properties of the underlying \emph{classical dynamics}. Let us define the classical Hamiltonian by
\begin{equation*}p(x,\xi)= |\xi|^2+V(x),
\end{equation*}
and the layer of energy $1$ as
$$\mathcal{E}=\{\rho\in T^*\mathbb{R}^d; ~ p(\rho)=1\}.$$ Note that this is a non-compact set, but its intersection with
any fibre $T^*_xX$ is compact.

We also denote, for each $t\in \mathbb{R}$, the Hamiltonian flow generated
by $p$ by
$\Phi^t : T^*\mathbb{R}^d\longrightarrow T^*\mathbb{R}^d$.
For $\rho\in \mathcal{E}$, we will say that $\rho\in \Gamma^\pm$ if $\{\Phi^t(\rho), \pm t\leq 0\}$
is a bounded subset of $T^*\R^d$; that is to say, $\rho$ does not “go to
infinity”, respectively in the past or
in the future. The sets $\Gamma^\pm$ are called respectively the
\textit{outgoing} and \textit{incoming} tails (at energy $1$).

The \textit{trapped set} is defined as
\begin{equation}\label{ensemblecapte}
K:=\Gamma^+\cap \Gamma^-.
\end{equation}
It is a flow invariant set, and it is compact, because $V$ is compactly
supported.

If the trapped set is empty, then we can easily describe the distorted plane waves in the semiclassical limit. Namely, one can show (cf. \cite[\S 5.1]{DG}) that $E_h^\xi$ is a \emph{Lagrangian (WKB) state}. Furthermore, for any $\chi\in C_c^\infty(\mathbb{R}^d)$, the norm $\|\chi E_h^\xi\|_{L^2}$ is bounded independently of $h$.

However, if the trapped set is non-empty, the distorted plane waves may not be bounded uniformly in $L^2_{loc}$ as $h\rightarrow 0$. Actually, $\|\chi E_h^\xi\|_{L^2}$ could grow exponentially fast as $h\rightarrow 0$. If we want this quantity to remain bounded uniformly in $h$, we must therefore make some additional assumptions on the classical dynamics. Let us now detail these assumptions.

\subsubsection*{Hypotheses on the classical dynamics}
\begin{itemize}
\item \emph{Hyperbolicity assumption}:
In the sequel, we will suppose that the potential $V$ is such that the trapped set contains no fixed point, and is a \emph{hyperbolic set}. We refer to section \ref{averse} for the definition of a hyperbolic set. The potential in figure \ref{exhyp} is an example of such a potential.

\begin{figure}\label{exhyp}
    \center
   \includegraphics[scale=0.35]{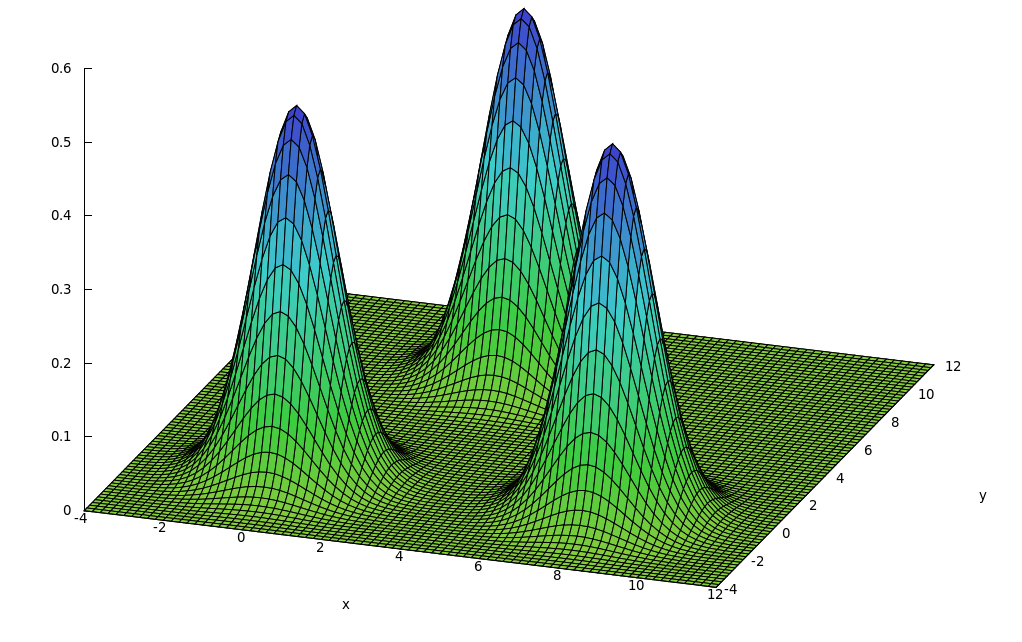}
    \caption{An example of a potential on $(\mathbb{R}^2,g_{flat})$ such that the dynamics is hyperbolic on the trapped set in some energy range. (See \cite[Appendix C]{Sj90} for details.)}
\end{figure}

\item \emph{Topological pressure assumption}: For our result on distorted plane waves to hold, we must also make the
assumption (Hypothesis \ref{Husserl}) that the
topological pressure associated to half the logarithm of the unstable
Jacobian of the flow on $K$ is negative.
The definition of the topological pressure will be recalled in section
\ref{press}. Hypothesis \ref{Husserl} roughly says that the system is “very open”. One should note that in dimension 2, this
condition is
equivalent to the fact that the Hausdorff dimension of $K$ is strictly
smaller than 2. In the three-bumps potential of figure \ref{exhyp},  this condition is satisfied if the three bumps are far enough from each other, but it is not satisfied if the bumps are close to each other.

\item \emph{Transversality assumption}: Our last assumption does not concern directly the classical dynamics, but the Lagrangian manifold\footnote{By a Lagrangian manifold, we mean a $d$-dimensional submanifold of a $2d$-dimensional symplectic manifold, on which the symplectic form vanishes. We will allow Lagrangian manifolds to have boundaries, and to be disconnected.}
\begin{equation}\label{swing}
\Lambda_\xi:=\{(x,\xi), x\in \mathbb{R}^d\}.
\end{equation}
Note that the plane wave $e^{\frac{i}{h} x\cdot \xi}$ is a Lagrangian state associated with the Lagrangian manifold $\Lambda_\xi$.

We need to make a \emph{transversality assumption} on $\Lambda_\xi$.
This assumption roughly says that the direction $\xi$ defining $\Lambda_\xi$ is such that the incoming tail $\Gamma^-$ and
$\Lambda_\xi$ intersect transversally. We postpone
the precise statement of this assumption to Hypothesis \ref{Happy} in section \ref{scotiabank}.
This assumption is probably generic in $\xi$, although we don't know how to prove it. In \cite{Ing2}, we will show that it is always satisfied for every $\xi$, when we consider geometric scattering on a manifold of non-positive curvature.
\end{itemize}

\subsubsection*{Statement of the results}

In Theorem \ref{ibrahim}, we will give a precise description of $E_h^{\xi}$ as a sum of WKB states, under the assumptions above. Since the precise statement of the theorem is a bit technical, we postpone it to section \ref{statementMain}, and only state two important consequences of this result.

The first one is a bound analogous to what we would get in the non-trapping case.
\begin{theoreme}\label{davies}
 Suppose that the Hypothesis \ref{sieste} on hyperbolicity  holds, that the topological pressure Hypothesis \ref{Husserl} is satisfied, and that $\xi\in
\mathbb{S}^{d-1}$ is such that $\Lambda_\xi$ satisfies Hypothesis \ref{Happy} of transversality. 

Let $\chi\in C_c^\infty (X)$. Then there exists a constant $C_{\xi,\chi}$ independent of $h$ such that for any $h\bel 0$, we have
\begin{equation}\label{heatwave}
\|\chi E_h^{\xi}\|_{L^2}\leq C_{\xi,\chi}.
\end{equation}
\end{theoreme}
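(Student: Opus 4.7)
The plan is to deduce this $L^2$ bound from the detailed description of $E_h^\xi$ announced in Theorem \ref{ibrahim} and anticipated in the abstract: under the three hypotheses, $E_h^\xi$ can be written as a convergent sum of Lagrangian states,
\begin{equation*}
E_h^\xi = \sum_{\alpha} u_\alpha^h + O_{L^2_{loc}}(h^\infty),
\end{equation*}
where each $u_\alpha^h$ is a WKB state supported on a piece of $\Lambda_\xi$ propagated along $\Phi^{t_\alpha}$ for some time $t_\alpha$. The indices $\alpha$ are organised combinatorially via a symbolic coding of orbit segments passing close to the trapped set $K$, the length of the code being essentially proportional to $t_\alpha$. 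Given such a decomposition, the desired bound reduces to a uniform summability statement, obtained by applying the triangle inequality term by term.

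For each fixed $\alpha$, the norm $\|\chi u_\alpha^h\|_{L^2}$ is estimated by a change of variables on the Lagrangian parametrisation. The transversality Hypothesis \ref{Happy}, propagated forward by the flow and combined with the hyperbolicity Hypothesis \ref{sieste}, guarantees that the projected Lagrangian $\pi(\Phi^{t_\alpha}(\Lambda_\xi))$ has controlled caustics on $\mathrm{supp}\,\chi$. The resulting $L^2$ mass is, up to a harmless constant depending only on $\chi$, of order $(J_\alpha^u)^{-1/2}$, where $J_\alpha^u$ denotes the unstable Jacobian accumulated along the corresponding orbit segment, because the Lagrangian manifolds iterated in such a hyperbolic setting become transverse to the stable foliation and get stretched along the unstable one. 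Summing yields
\begin{equation*}
\|\chi E_h^\xi\|_{L^2} \leq C_\chi \sum_{\alpha} (J_\alpha^u)^{-1/2} + O(h^\infty).
\end{equation*}

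The remaining sum is, up to multiplicative constants and a partition over time intervals, a partition-function type sum for the observable $-\tfrac{1}{2}\log J^u$ on the flow restricted to $K$. By the very definition of the topological pressure, the contribution of orbits of length between $T$ and $T+1$ is bounded above by $\exp(T(P+o(1)))$, where $P$ is the pressure of $-\tfrac{1}{2}\log J^u$ on $K$. Hypothesis \ref{Husserl} asserts $P<0$, so the sum over $T$ is a convergent geometric series bounded independently of $h$, producing the constant $C_{\xi,\chi}$ asserted in \eqref{heatwave}.

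The real difficulty is not this $L^2$ bookkeeping but the construction of the decomposition itself in Theorem \ref{ibrahim}: one must iteratively propagate $\Lambda_\xi$ under WKB, cut off the parts of the flowed Lagrangian that escape along $\Gamma^+$, resolve caustics when the spatial projection degenerates, and organise the resulting pieces into a symbolic sum whose cardinality grows appropriately with time. The errors introduced at each WKB step must be kept under polynomial control in $h^{-1}$ and $t_\alpha$, so that they do not accumulate in a way that would destroy the exponential decay provided by the pressure condition. Once the decomposition and its pointwise estimates are in place, Theorem \ref{davies} follows essentially immediately from the three paragraphs above.
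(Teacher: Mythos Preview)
Your proposal is correct and follows essentially the same route as the paper: Theorem~\ref{davies} is deduced from Theorem~\ref{ibrahim} by applying the triangle inequality to the decomposition into Lagrangian states, using that the amplitudes decay like $(J^u_\alpha)^{-1/2}$ (this is packaged into estimate~(\ref{sheriff})), and summing the resulting geometric series via the pressure hypothesis. One small refinement worth noting: rather than controlling caustics of the base projection on $\mathrm{supp}\,\chi$, the paper works in the adapted symplectic coordinates $(y^{\rho_b},\eta^{\rho_b})$ near $K$ where the propagated Lagrangians are automatically graphs, and for the remaining finite-time propagation away from $K$ simply invokes $\|\chi\tilde{U}(t_0)\|_{L^2\to L^2}\le 1$ and $\|\mathcal{U}_b\|_{L^2\to L^2}\le 1$; this is spelled out in the one-line deduction immediately after the proof of Theorem~\ref{ibrahim}.
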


\begin{remarque}
The bound (\ref{heatwave}) could not be obtained directly from resolvent estimates. Indeed, as we will see in section \ref{defdistorted}, the term $E_{out}$ in (\ref{chti}) can be written as the outgoing resolvent $(P_h- (1+i0)^2)^{-1}$ applied to a term which is compactly supported, and whose $L^2$ norm is $O(h)$. Therefore, we have a priori that $\|\chi E_h^\xi\|_{L^2}\leq O(h) \|\chi (P_h- (1+i0)^2)^{-1}\chi\|_{L^2\rightarrow L^2}$, as least if the support of $\chi$ is large enough. But under hypotheses \ref{sieste} and \ref{Husserl}, it is known since \cite{NZ} (see Theorem \ref{weasel}) that $\|\chi (P_h- (1+i0)^2)^{-1}\chi\|_{L^2\rightarrow L^2}\leq C \frac{|\log h|}{h}$, and such estimates are sharp in the presence of trapping (see \cite{bonyburqramondminoration}.) Such a priori estimates would therefore only give $\|\chi E_h^\xi\|_{L^2}\leq C |\log h |$.
\end{remarque}

Our next result concerns the \emph{semiclassical measure} of $E_h^\xi$. Consider on $T^*\mathbb{R}^d$ the measure $\mu^\xi_0$ given by
\begin{equation*}
\mathrm{d}\mu_0^\xi(x,v)= \mathrm{d}x \delta_{v= \xi}.
\end{equation*}
The measure $\mu^\xi_0$ is the semiclassical measure associated to $e^{\frac{i}{h} x\cdot \xi}$, in the sense that for any $\psi\in C_c^\infty(T^*\mathbb{R}^d)$ and any $\chi\in C_c^\infty(\mathbb{R}^d)$, we have
\[\lim\limits_{h\rightarrow 0} \langle Op_h(\psi) \chi e^{\frac{i}{h} x\cdot \xi}, \chi e^{\frac{i}{h} x\cdot \xi}\rangle = \int_{T^*\mathbb{R}^d} \chi^2(x) \psi(x,v) \mathrm{d}\mu_0^\xi(x,v) \]
For the definition and properties of the Weyl quantization $Op_h$, we refer the reader to section \ref{greve}.

We then define a measure $\mu^\xi$ on $T^*\R^d$ by 
\begin{equation*}\int_{T^*\R^d} a \mathrm{d}\mu^\xi := \lim \limits_{t\rightarrow \infty} \int_{T^*\R^d} a\circ\Phi^{t}
\mathrm{d}\mu^\xi_0,
\end{equation*}
for any $a\in C_c^0(T^*\R^d)$.

We will show in section \ref{appendice} that this limit exists under our above assumptions. Actually, the proof will not use the hypothesis \ref{Husserl} that the topological pressure of half the unstable jacobian is negative, but the much weaker assumption that the topological pressure of the unstable jacobian is negative.

The following theorem tells us that, under our hypotheses, $\mu^\xi$ is the semiclassical measure associated to $E_h^\xi$, and it gives us a precise description of $\mu^\xi$ close to the trapped set.

 \begin{theoreme} \label{blacksabbath20}
 Suppose that the Hypothesis \ref{sieste} on hyperbolicity  holds, that the topological pressure Hypothesis \ref{Husserl} is satisfied, and that $\xi\in
\mathbb{S}^{d-1}$ is such that $\Lambda_\xi$ satisfies Hypothesis \ref{Happy} of transversality.

Then for any $\psi\in C_c^\infty(T^*\mathbb{R}^d)$ and any $\chi\in C_c^\infty(\mathbb{R}^d)$, we have

\begin{equation*}\langle Op_h(\psi) \chi  E_h^{\xi}, \chi E_h^{\xi}\rangle = \int_{T^*\R^d} \psi(x,v)
\mathrm{d}\mu^\xi(x,v) +
O(h^c),
\end{equation*}

Furthermore, for any $\rho\in K$, there exists a small neighbourhood $U_\rho\subset T^*\mathbb{R}^d$ of $\rho$, 
  and a local change of symplectic coordinates $\kappa_\rho : U_\rho \rightarrow T^*\mathbb{R}^d$ with $\kappa_\rho(\rho)=0$ such that the following holds. There exists a constant $c>0$ and two sequences of functions $f_n, \phi_n \in C_c^{\infty}(\mathbb{R}^d)$ for $n\in \mathbb{N}$, such that for any $(y,\eta)\in \kappa_\rho (U_\rho)$, we have
\begin{equation*}\mathrm{d}\mu^\xi(\kappa_\rho^{-1}(y,\eta)) = \sum_{n=0}^{\infty}  f_{n}(y) \delta_{\{\eta=\partial
\phi_{n}(y)\}} d y,
\end{equation*}
and where the functions $f_{n}$ satisfy 
\begin{equation}\label{julienisback}
\sum_{n=0}^\infty \|f_n\|_{C^0} < \infty.
\end{equation}
\end{theoreme}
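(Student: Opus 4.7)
The plan is to deduce both statements from the decomposition of $E_h^\xi$ as a convergent sum of Lagrangian states provided by Theorem \ref{ibrahim}. That result will give, up to an $O(h^\infty)$ remainder, $\chi E_h^\xi \equiv \sum_n u_n$ with $u_n(x) = a_n(x;h)\,e^{iS_n(x)/h}$, each $u_n$ supported on a Lagrangian manifold $\Lambda_n$ obtained by propagating a piece of $\Lambda_\xi$ along $\Phi^{t_n}$ for some time $t_n$, and with $L^2$-convergence of the sum guaranteed by Hypothesis \ref{Husserl}. Both conclusions will then be read off from this decomposition.

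For the global statement on the semiclassical measure, I would expand bilinearly
\[\langle Op_h(\psi)\chi E_h^\xi,\chi E_h^\xi\rangle = \sum_{n,m} \langle Op_h(\psi)\chi u_n,\chi u_m\rangle + O(h^\infty),\]
and treat the diagonal and off-diagonal contributions separately. The diagonal pairings $n=m$ are standard WKB computations; by the symbolic calculus and stationary phase they equal $\int \psi\,|a_n|^2\,\mathrm{d}\sigma_{\Lambda_n} + O(h)$ for the natural measure $\sigma_{\Lambda_n}$ on $\Lambda_n$, and summing in $n$ reconstructs $\int \psi\,\mathrm{d}\mu^\xi$ by the very definition of $\mu^\xi$ as the weak limit of $(\Phi^t)_*\mu_0^\xi$. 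For the off-diagonal terms $n\ne m$, the integrand has oscillatory phase $(S_n-S_m)/h$; the transversality Hypothesis \ref{Happy}, propagated by the flow, forces any critical point of this phase to be nondegenerate, which yields $O(h^{d/2})$ for each pair and, combined with the geometric decay of the amplitudes, an overall error $O(h^c)$.

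For the local structure of $\mu^\xi$ near $\rho\in K$, I would introduce symplectic coordinates $\kappa_\rho$ on a neighborhood $U_\rho$ adapted to the hyperbolic splitting at $\rho$, for instance so that the local stable and unstable manifolds straighten to the coordinate planes $\{y=0\}$ and $\{\eta=0\}$ respectively. Each Lagrangian piece $\kappa_\rho(\Lambda_n\cap U_\rho)$ arises from a long forward iterate of a compact Lagrangian and is therefore $C^1$-close to the unstable leaf through $\rho$, hence is a graph $\{\eta=\partial \phi_n(y)\}$ for some $\phi_n\in C^\infty_c(\mathbb{R}^d)$. Reading $|a_n|^2\,\mathrm{d}\sigma_{\Lambda_n}$ in the base coordinate $y$ via the corresponding Jacobian then produces exactly the announced decomposition, with $f_n$ supported in the $y$-projection of $\kappa_\rho(\Lambda_n\cap U_\rho)$.

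Finally, the summability \eqref{julienisback} reflects a balance between two competing effects: the amplitudes $|a_n|^2$ decay at rate $J^u(t_n)^{-1}$, the inverse unstable Jacobian along the orbit, while the number of branches $n$ reaching $U_\rho$ in time $\lesssim t$ grows at most like the exponential of the topological entropy. Their product is controlled by the topological pressure of $\log J^u$ on $K$, which is bounded above by $2P(\tfrac{1}{2}\log J^u)<0$ under Hypothesis \ref{Husserl}. I expect the main obstacle to be the careful treatment of the off-diagonal stationary-phase contributions in the second step: verifying that the Lagrangians $\Lambda_n,\Lambda_m$ intersect transversally uniformly in $(n,m)$, and that the resulting $h^{d/2}$-errors are summable over a doubly-exponentially growing index set, will require exploiting the hyperbolic structure together with the propagated form of Hypothesis \ref{Happy}.
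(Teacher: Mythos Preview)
Your overall architecture is correct and matches the paper: decompose $\chi E_h^\xi$ via Theorem \ref{ibrahim} into a sum of Lagrangian states, expand the pairing bilinearly, and read off the measure from the diagonal. The local description near $\rho\in K$ as graphs $\{\eta=\partial\phi_n(y)\}$ and the summability \eqref{julienisback} via the pressure estimate \eqref{sheriff} are also essentially right.

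The genuine gap is your treatment of the off-diagonal terms. You propose that Hypothesis \ref{Happy} forces the critical points of $(S_n-S_m)/h$ to be nondegenerate, giving $O(h^{d/2})$ per pair. This is the wrong mechanism, and it would not close: the index set has exponentially many elements in $|\log h|$, so $h^{d/2}$ per pair is not summable. More fundamentally, Hypothesis \ref{Happy} concerns transversality of $\Lambda_\xi$ to the \emph{stable} manifold; it is what makes each $\Lambda_n$ a $\gamma$-unstable graph over $y$. It says nothing about transversality between two such graphs $\Lambda_n$ and $\Lambda_m$, which are in fact nearly parallel (both $\gamma$-unstable for small $\gamma$), so any intersection would be almost tangential, not uniformly transverse.

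What the paper does instead is a \emph{non}-stationary phase argument. One proves (Proposition \ref{velvetunderground} and Corollary \ref{petrole}) a dichotomy: for each $y$, either $\partial\phi_{n,\beta}(y)=\partial\phi_{n',\beta'}(y)$ (the leaves locally coincide, and one merges the amplitudes into a single term $\tilde a_{n,\beta}$), or one has the quantitative lower bound
\[
|\partial\phi_{n,\beta}(y)-\partial\phi_{n',\beta'}(y)|\geq C_1' e^{-C_2'\min(n,n')}.
\]
This comes from back-propagating: if the leaves differ, there is some time $k\leq\min(n,n')$ at which the orbits lie in disjoint $V_b$'s, hence are at distance $\geq c_0$, and the forward flow contracts this by at most $e^{Ck}$. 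One then truncates the sum at $n\leq M|\log h|$ with $M<1/(2C_2')$, so that the phase gradient stays $\geq h^{1/2-\epsilon}$; non-stationary phase then kills each genuinely off-diagonal pair by $O(h^\infty)$, and the truncation itself costs only $O(h^c)$ by \eqref{sheriff}. This is the step you should replace your stationary-phase argument with.
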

\begin{remarque}
Theorem \ref{blacksabbath20} tells us that the distorted plane waves $E_h^\xi$ have a unique semiclassical measure. This result is therefore analogous to the Quantum Unique Ergodicity conjecture for eigenfunctions of the Laplace-Beltrami operator on manifolds of negative curvature. However, on compact manifolds of negative curvature, the semiclassical measure we expect is the Liouville measure. Here, the semiclassical measure given by Theorem \ref{blacksabbath20} is very different from the Liouville measure, since, close to the trapped set, it is concentrated on a countable union of Lagrangian sub-manifolds of $T^*X$. There is therefore a deep difference between compact and non-compat manifolds concerning the semiclassical measure of eigenfunctions, a fact which was already noted in \cite{GN}.
\end{remarque}

\subsubsection*{Idea of proof}\label{boulgakov}

Theorems \ref{blacksabbath20} and \ref{davies} will be deduced from a precise description of the distorted plane waves $E_h^{\xi}$ microlocally near the trapped set. In Theorem \ref{ibrahim}, we will show that, microlocally near the trapped set, $E_h^{\xi}$ can be written as a convergent sum of WKB states. Let us now explain how this result is obtained.

By definition, the distorted plane waves $E_h^{\xi}$ are generalized eigenfunctions of the operator $P_h$. Therefore, if we write $U(t)=e^{-\frac{i}{h} P_h}$ for the Schrödinger propagator associated to $P_h$, we would like to write formally that $U(t) E_h^{\xi} = e^{-i t/h} E_h^{\xi}$. Of course, this expression can only be formal, since $E_h^\xi\notin L^2$, but we will give it a precise meaning by truncating it by some cut-off functions.

By equation (\ref{chti}), $E_h^{\xi}$ may decomposed into two terms, which we will write $E_h^0$ and $E_h^1$ in the sequel. $E_h^0$ is a Lagrangian state associated to the Lagrangian manifold $\Lambda_\xi$, while $E_h^1$ is the image of a smooth compactly supported function by the resolvent $(P_h-(1+i0)^2)^{-1}$.

Using some resolvent estimates and hyperbolic dispersion estimates, we will show in the sequel that, for any compactly supported function $\chi$, we have $\lim\limits_{t\rightarrow\infty} \|\chi U(t) E_h^1\|=0$.

Therefore, in order to describe $E_h^{\xi}$, we only have to study $U(t) E_h^0$ for some very long times. Since $E_h^0$ is a Lagrangian state, its evolution can be described using the WKB method. To do this, we will have to understand the classical evolution of the Lagrangian manifold $\Lambda_\xi$ for large times. We will
show that for any
$t>0$, the restriction of $\Phi^t(\Lambda_\xi)$ to a region close to the
trapped set consists of finitely many Lagrangian manifolds,
most of which are very close to the “outgoing tail” of the trapped set (see Theorem
\ref{Cyril} for more details).

\subsubsection*{Relation to other works}
The study of the high frequency behaviour of eigenfunctions of Schrödinger operators, and of their semiclassical measures, in the case where the associated classical dynamics has a chaotic behaviour, has a long story. It goes back to the classical works \cite{Shn},\cite{Zel} and \cite{CdV} dealing with Quantum Ergodicity on compact manifolds.

Analogous results on manifolds of infinite volume are much more recent.
In \cite{DG}, the authors studied the semiclassical measures associated to distorted plane waves in a very general framework, with very mild assumptions on the classical dynamics. The counterpart of this generality is that the authors have to average on directions $\xi$ and on an energy interval of size $h$ to be able to define the semiclassical measure of distorted plane waves. Their result can be seen as a form of Quantum Ergodicity result on non-compact manifolds, although no “ergodicity” assumption is made.

In \cite{GN}, the authors considered the case where $X=\Gamma\backslash \mathbb{H}^{d}$ is a manifold of infinite volume, with sectional curvature constant equal to $-1$ (convex co-compact hyperbolic manifold), and with the assumption that the Hausdorff dimension of the limit set of $\Gamma$ is smaller that $(d-1)/2$. In this setting, distorted plane waves are often called \emph{Eisenstein series}. The authors prove that there is a unique semiclassical measure for the Eisenstein series with a given incoming direction, and they give a very explicit formula for it. This result can hence be seen as a Quantum Unique Ergodicity result in infinite volume.

Our result is a generalization of those of \cite{GN}. Indeed, we also obtain a unique semiclassical measure for the distorted plane waves with a given incoming direction. Our assumption on the topological pressure is a natural generalization of the assumption on the Hausdorff dimension of the limit set of $\Gamma$ to the case of nonconstant curvature. As in \cite{GN}, the main ingredient of the proof is a decomposition of the distorted plane waves as a sum of WKB states. Although our description of the distorted plane waves and of their semiclassical measure is slightly less explicit than that of \cite{GN}, our methods are much more versatile, since they rely on the properties of the Hamiltonian flow close to the trapped set, instead of relying on the global quotient structure.

In \cite{Dcusp}, the author was able to obtain semiclassical convergence of distorted plane waves on manifolds of finite volume (with cusps), by working at complex energies; see also \cite{Yannick} for more precise results. The main argument of \cite{Dcusp}, \cite{Yannick} and of \cite{DG}, which is to describe the distorted plane waves as plane waves propagated during a long time by the Schrödinger flow, is the starting point of our proof. However, the reason of the convergence in the long-time limit is very different in \cite{Dcusp} and \cite{Yannick}, \cite{DG} and in the present paper.

Many of the tools used in this paper were inspired by \cite{NZ}. We will use the notations and methods of this paper a lot.

Let us notice that most of the results of the present paper can be made more precise if we suppose that we work on a manifold of non-positive sectional curvature, without a potential. This has been studied in \cite{Ing2}, where the author is able to show, by using the methods developed in the present paper, that distorted plane waves are bounded in $L^\infty_{loc}$ independently of $h$, and to give sharp bounds on the Hausdorff measure of nodal sets of the real part of distorted plane waves restricted to a compact set.

\subsubsection*{Organisation of the paper}

In section \ref{houx}, we will state and prove a result concerning the propagation by the Hamiltonian flow of Lagrangian manifolds similar to $\Lambda_\xi$ near the trapped set, under general assumptions. In part \ref{GE}, we will state Theorem \ref{ibrahim}, which is our main theorem, giving a description of distorted plane waves as a sum of WKB states. We will deduce Theorem \ref{davies} as an easy corollary. In section \ref{tools}, we will recall various tools which were introduced in \cite{NZ}, and which will play a role in the proof of Theorem \ref{ibrahim}. We shall then prove Theorem \ref{ibrahim} in section \ref{preuve}.
Section \ref{SC} will be devoted to the proof of the Theorem \ref{blacksabbath20}.

The main reason why we want to state Theorem \ref{ibrahim} for generalized eigenfunctions that are more general than distorted plane waves on $\mathbb{R}^d$ is that our results do also apply if the manifold is hyperbolic near infinity (which allows us to recover some of the results of \cite{GN}), as is shown in \cite[Appendix B]{Ing2}. Our results do probably also apply if the manifold is asymptotically hyperbolic; this shall be pursued elsewhere. 

\paragraph{Acknowledgements}
The author is partially supported by the Agence Nationale de la Recherche project GeRaSic (ANR-13-BS01-0007-01).

The author would like to thank Stéphane Nonnenmacher for suggesting this project, as well as for his advice during the redaction of this paper.
He also thanks the anonymous referee for suggesting several clarifications and improvements in the paper.

\section{Propagation of Lagrangian manifolds}\label{houx}
\subsection{General assumptions for propagation of Lagrangian manifolds}
\label{troll}

Let $(X,g)$ be a noncompact complete Riemannian manifold of dimension $d$,
and let $V : X \longrightarrow \mathbb{R}$ be a smooth compactly supported
potential.

We denote by $p(x,\xi)= p(\rho) : T^*X\longrightarrow \mathbb{R}$,
$p(x,\xi) = \|\xi\|^2+V(x)$ the
classical Hamiltonian.

For each $t\in \mathbb{R}$, we denote by $\Phi^t:T^*X\longrightarrow T^*X$
the Hamiltonian flow at time $t$ for the Hamiltonian $p$.

Given any smooth function $f : X \longrightarrow \mathbb{R}$, it may be
lifted to a function $f : T^*X \longrightarrow \mathbb{R}$, which we
denote by the same letter. We may then define $\dot{f}, \ddot{f}\in
C^\infty (T^*X)$ to be the derivatives of $f$ with respect to the
Hamiltonian flow.
\begin{equation*}\dot{f}(x,\xi):= \frac{ d f(\Phi^t(x,\xi))}{d
t}\big{|}_{t=0},~~ \ddot{f}(x,\xi):= \frac{ d^2
f(\Phi^t(x,\xi))}{d t^2}\big{|}_{t=0}.
\end{equation*}

\subsubsection{Hypotheses near infinity} \label{Hector}

We suppose the following conditions are fulfilled.

\begin{hyp}[Structure of $X$ near infinity] \label{Guepard}
We suppose that the manifold $(X,g)$ is such that the following holds:

(1) There exists a compactification $\Comp$ of $X$, that is, a compact
manifold with boundaries $\Comp$ such that $X$ is diffeomorphic to the
interior of $\Comp$. The boundary $\Bound$ is called the boundary at
infinity.

(2) There exists a boundary defining function $b$ on $X$, that is, a
smooth function $b : \Comp \longrightarrow [0,\infty)$ such that $b>0$ on
$X$, and $b$ vanishes to first order on $\Bound$.

(3) There exists a constant $\epsilon_0>0$ such that for any point
$(x,\xi)\in \mathcal{E}$,
\begin{equation*}\text{if } b(x,\xi)\leq \epsilon_0 \text{ and } \dot{b}(x,\xi)=0 \text{
then } \ddot{b}(x,\xi)<0.
\end{equation*}
\end{hyp}

Note that, although part $(3)$ of the hypothesis makes reference to the Hamiltonian flow, it is only an assumption on the manifold $(X,g)$ and not on the potential $V$, because $V$ is assumed to be compactly supported. 

\begin{ex} \label{nokia}
$\mathbb{R}^d$ fulfils the Hypothesis \ref{Guepard}, by taking the
boundary defining function $b(x)=(1+|x|^2)^{-1/2}$. We then have $\overline{X}\equiv B(0,1)$.
\end{ex}
\begin{ex} \label{samsung}
The Poincaré space $\mathbb{H}^{d}$
also fulfils the Hypothesis \ref{Guepard}. Indeed, in the ball model
$B_0(1)=\{x\in \mathbb{R}^d; |x|<1\}$, where $|\cdot|$ denotes the
Euclidean norm, then $\mathbb{H}^{d}$ compactifies to the close unit ball,
and the boundary defining function $b(x)=2\frac{1-|x|}{1+|x|}$ fulfills
conditions (2) and (3).
\end{ex}
We will write $X_0:=\{x\in X; b(x)\geq\epsilon_0/2\}$
By possibly taking $\epsilon_0$ smaller, we can assume that $supp (V)
\subset \{x\in X; b(x)\bel \epsilon_0\}$.
We will call $X_0$ the \emph{interaction region}. We will also write
\begin{equation}\label{frite}
W_0:=T^*(X\backslash X_0) = \{\rho\in T^*X; b(\rho) < \epsilon_0/2\},~~~~ \ext = W_0\cap \mathcal{E}.
\end{equation}

By possibly taking $\epsilon_0$ even smaller, we may ask that 
\begin{equation}\label{descendance}
\forall \rho \in \ext, b(\Phi^1(\rho)) < \epsilon_0.
\end{equation}

\begin{definition}
If $\rho=(x,\xi)\in \mathcal{E}$, we say that $\rho$ escapes directly in the forward
direction, denoted $\rho\in\mathcal{D}\mathcal{E}_+$, if $b(x) <
\epsilon_0/2$ and
$\dot{b}(x,\xi)\leq 0$.

If $\rho=(x,\xi)\in \mathcal{E}$, we say that $\rho$ escapes directly in the backward
direction, denoted $\rho\in \mathcal{D}\mathcal{E}_-$, if $b(x)<
\epsilon_0/2$ and
$\dot{b}(x,\xi)\geq 0$.
\end{definition}
Note that we have
\begin{equation*}
\mathcal{W}_0=\mathcal{DE}_-\cup \mathcal{DE}_+.
\end{equation*}

Part (3) of Hypothesis \ref{Guepard} implies the following \emph{geodesic convexity} result, which reflects the fact that once a trajectory has left the interaction region, it cannot come back to it.

\begin{lemme}\label{saladin}
For any $t\geq 0$, we have
\begin{equation*}
\Phi^t(\mathcal{E}\cap T^*X_0) \cap \mathcal{DE}_-=\emptyset.
\end{equation*}
\end{lemme}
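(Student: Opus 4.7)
The plan is to argue by contradiction, exploiting the second-order convexity built into part (3) of Hypothesis \ref{Guepard}: any critical point of $t\mapsto b(\Phi^t(\rho))$ lying in the region $\{b\le\epsilon_0\}$ is a strict local maximum. Thus $b$, restricted to a trajectory, cannot attain an interior local minimum in that region; the lemma will follow because reaching $\mathcal{DE}_-$ from $T^*X_0$ would force exactly such a minimum.

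Suppose for contradiction that there exist $\rho\in\mathcal{E}\cap T^*X_0$ and $t_*\ge 0$ with $\Phi^{t_*}(\rho)\in\mathcal{DE}_-$. First, $t_*>0$, because $b(\rho)\ge\epsilon_0/2$ whereas $\mathcal{DE}_-\subset\{b<\epsilon_0/2\}$. I would then define $g:[0,t_*]\to\mathbb{R}$ by $g(t)=b(\Phi^t(\rho))$; by the definitions of $\dot b,\ddot b$, one has $\dot g(t)=\dot b(\Phi^t(\rho))$ and $\ddot g(t)=\ddot b(\Phi^t(\rho))$. The hypotheses give $g(0)\ge\epsilon_0/2$, $g(t_*)<\epsilon_0/2$, and $\dot g(t_*)\ge 0$.

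Pick $t_m\in[0,t_*]$ at which $g$ attains its minimum on $[0,t_*]$. Since $g(t_m)\le g(t_*)<\epsilon_0/2\le g(0)$, the minimum is not at $0$, so $t_m\in(0,t_*]$. At an interior minimum we have $\dot g(t_m)=0$ automatically; and if $t_m=t_*$, the right-endpoint inequality $\dot g(t_*)\le 0$ combined with $\dot g(t_*)\ge 0$ (from $\Phi^{t_*}(\rho)\in\mathcal{DE}_-$) again yields $\dot g(t_m)=0$. Since $g(t_m)<\epsilon_0/2<\epsilon_0$ and $\Phi^{t_m}(\rho)\in\mathcal{E}$ by flow invariance of the energy layer, Hypothesis \ref{Guepard}(3) applies and gives $\ddot g(t_m)<0$. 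Hence $t_m$ is a strict local maximum of $g$, contradicting its being a minimum.

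The only delicate point is the right-endpoint case $t_m=t_*$: it is precisely here that the sign condition $\dot b\ge 0$ encoded in the definition of $\mathcal{DE}_-$ is used, without which one could envisage a trajectory just barely dipping below $\epsilon_0/2$ with a residual inward velocity. Otherwise the proof is the standard convexity argument showing that a flow transverse to a strictly concave boundary-defining function cannot re-enter.
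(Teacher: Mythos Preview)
Your proof is correct and follows essentially the same approach as the paper's: both argue by contradiction, define the function $t\mapsto b(\Phi^t(\rho))$ along the trajectory, and use that Hypothesis~\ref{Guepard}(3) forbids an interior local minimum in the region $\{b<\epsilon_0\}$. The paper's version is terser---it simply records the boundary values $f(0)\ge\epsilon_0/2$, $f(t)<\epsilon_0/2$, $f'(t)\ge 0$ and asserts the contradiction---whereas you make the argument explicit by locating the minimum $t_m$ and treating the endpoint case $t_m=t_*$ separately; your handling of that endpoint (combining the first-order necessary condition $\dot g(t_*)\le 0$ with the defining inequality $\dot b\ge 0$ of $\mathcal{DE}_-$) is exactly the detail the paper's proof leaves implicit.
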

\begin{proof}
Suppose that there exists a $\rho\in \Phi^t(\mathcal{E}\cap T^*X_0) \cap
\inco$ for some $t\geq 0$. Then there exists $\rho'\in \mathcal{E}\cap T^* X_0$ such that
$\rho = \Phi^t(\rho')$. Let us consider $f(s):= b(\Phi^s(\rho'))$. We
have $f(0)>\epsilon_0/2$, $f(t)<\epsilon_0/2$ and $f'(t)\geq 0$ by hypothesis.
This is impossible, because by Hypothesis \ref{Guepard}, point (3),
whenever $f(s)\leq \epsilon_0$ and $f'(s)=0$, we  have $f''(s)<0$.
\end{proof}

\subsubsection{Hyperbolicity}\label{averse}
Recall that the \emph{trapped set} was defined in (\ref{ensemblecapte}). In the sequel, we will always suppose that the trapped set is a \emph{hyperbloc set}, as follows.
\begin{hyp}[Hyperbolicity of the trapped set] \label{sieste}
We assume that $K$ is a hyperbolic set for the flow
$\Phi^t_{|\mathcal{E}}$. That is to say,
there exists a metric $g_{ad}$ on a neighbourhood of $K$ included in
$\mathcal{E}$, and $\lambda>0$, such that the following holds. For each
$\rho\in K$, there is a decomposition \begin{equation*}T_\rho
\mathcal{E}=\mathbb{R}H_p(\rho) \oplus E_\rho^+\oplus
E_\rho^-
\end{equation*} such that
\begin{equation*}\|d\Phi_\rho^t(v)\|_{g_{ad}}\leq  e^{-\lambda|t|}\|v\|_{g_{ad}}
\text{   for all } v\in E_\rho^\mp, \pm t\geq 0.
\end{equation*}
\end{hyp}

We will call $E^\pm$ the \emph{unstable} (resp. \emph{stable}) subspaces at the point $\rho$.

We may extend $g_{ad}$ to a metric
on the whole energy layer, so that outside of the interaction region, it coincides with the metric on $T^*X$ induced from the Riemannian metric on $X$. From now on, $d$ will denote the Riemannian
distance associated to this metric on $\mathcal{E}$.

\begin{figure}
    \center
   \includegraphics[scale=0.4]{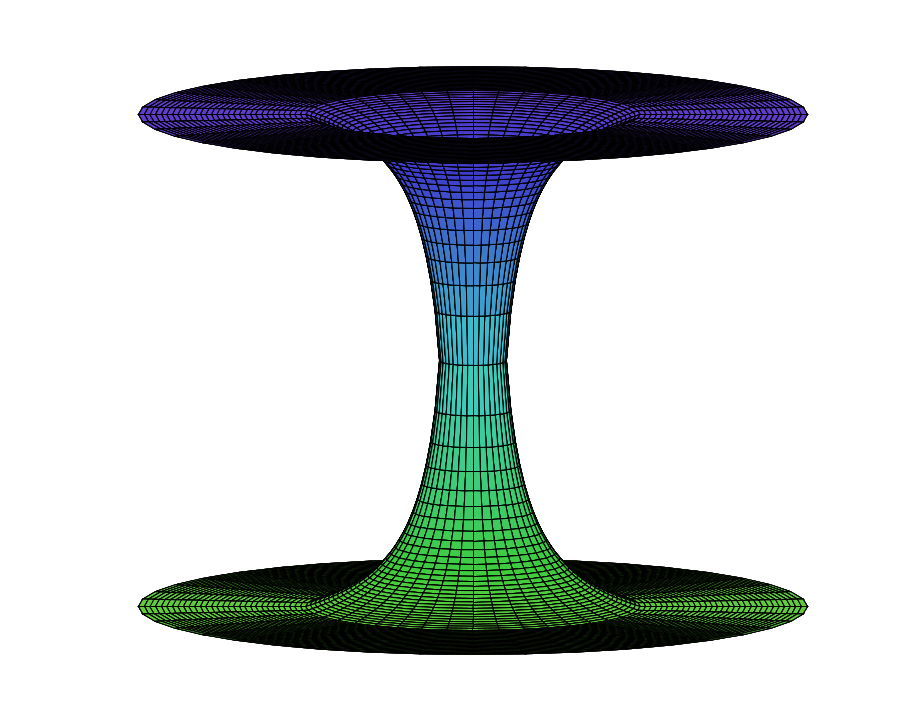}
    \caption{A surface which has negative curvature close to the trapped
set of the geodesic flow, and which is isometric to two copies of
$\mathbb{R}^2\backslash B(0,R_0)$ outside of a compact set. It satisfies Hypothesis \ref{sieste} near the trapped set and Hypothesis \ref{Guepard} at infinity.} \label{example}
\end{figure}

Let us recall a few properties of hyperbolic
dynamics (see \cite[Chapter 6]{KH} for the proofs of the statements).

i) The hyperbolic set is \textit{structurally stable}, in the following sense. For $\mathbf{E}>0$, define the layer of energy $\mathbf{E}$
\begin{equation}\label{deflayer}
\mathcal{E}_\mathbf{E}:= \{\rho\in T^*X; p(\rho)=\mathbf{E}\},
\end{equation}
and the trapped set at energy $\mathbf{E}$ as
\begin{equation}\label{deftrap}
K_\mathbf{E}:=\{\rho\in \mathcal{E}_\mathbf{E} \text { and } \Phi^t(\rho) \text{ remains in a compact set for all } t\in \mathbb{R}\}.
\end{equation}
If $K$ is a hyperbolic set for $\Phi^t_{|\mathcal{\mathbf{E}}}$, then
\begin{equation}\label{struc}
\exists \delta>0, \forall \mathbf{E}\in (1-\delta,1+\delta), ~~ K_{\mathbf{E}}
\text{ is a hyperbolic set for } \Phi^t_{|\mathcal{E}_\mathbf{E}}.
\end{equation}

ii) $d\Phi_\rho^t(E_\rho^\pm)=E_{\Phi^t(\rho)}^\pm$

iii) $K\ni \rho \mapsto E_\rho^\pm \subset T_\rho(\mathcal{E}) $
is Hölder-continuous

iv) Any $\rho\in K$  admits local strongly (un)stable manifolds $
W_{loc}^\pm (\rho)$  tangent to $E^\pm_\rho$,  defined by
\begin{equation*}W_{loc}^\pm (\rho)=\{\rho'\in \mathcal{E};
d(\Phi^t(\rho),\Phi^t(\rho'))<\epsilon \text{ for all } \pm t\leq 0 \text{
and} \lim\limits_{t\rightarrow \mp
\infty} d(\Phi^t(\rho'),\Phi^t(\rho))=0\}, 
\end{equation*}
where $\epsilon>0$ is some small number.

We call
\begin{equation*}E_\rho^{+0}:= E_\rho^+ \oplus \mathbb{R} H_p(\rho),~~~~ E_\rho^{-0}:=
E_\rho^- \oplus \mathbb{R} H_p(\rho),
\end{equation*}
the \textit{weak unstable} and \textit{weak stable} subspaces at the point
$\rho$ respectively.

\subsubsection{Adapted coordinates}\label{banff}

Let us now describe the construction of a local system of coordinates which is adapted to the stable and unstable directions near a point. In the sequel, these coordinates will be considered as fixed, and used to state Theorem \ref{Cyril}.
\begin{lemme} \label{adap}
Let $\rho\in K$. There exists an adapted
system of symplectic coordinates $(y^\rho,\eta^\rho)$ on a neighbourhood
of $\rho$ in
$T^*X$ such that the following holds:
\[
\begin{aligned}
&(i)~~\rho\equiv (0,0)\\
&(ii)~~E_{\rho}^+ = span \{\frac{\partial}{\partial y^\rho_i}(\rho), ~~~
i=2,...,d\} ,\\
&(iii)~~E_{\rho}^- = span \{\frac{\partial}{\partial \eta_i^\rho}(\rho), ~~~
i=2,...,d\} ,\\
&(iv)~~ \eta_1^\rho= p-1  \text{ is the energy coordinate},\\
&(v)~~\big{\langle}\frac{\partial}{\partial
y^\rho_i}(\rho),\frac{\partial}{\partial
y^\rho_j}(\rho)\big{\rangle}_{g_{ad}(\rho)}=\delta_{i,j},~~~i,j=2,...,d.
\end{aligned}
\]
\end{lemme}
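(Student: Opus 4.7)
The plan is to build the coordinates in two main stages: first prescribe the required linear-algebraic structure in $T_\rho(T^*X)$, then extend it to actual local symplectic coordinates via a Darboux normal form with a prescribed function.

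For the linear step, I would construct a symplectic basis $\{e_1,\dots,e_d,f_1,\dots,f_d\}$ of $T_\rho(T^*X)$ realizing the required properties. Since $K$ contains no fixed point, $H_p(\rho)\neq 0$, and I set $e_1:=H_p(\rho)$. A classical flow-invariance argument shows that $E_\rho^\pm$ are Lagrangian in the symplectic subspace $E_\rho^+\oplus E_\rho^-$: for $v,w\in E_\rho^+$, invariance of $\omega$ under the Hamiltonian flow gives $\omega(v,w)=\omega(d\Phi^t v,d\Phi^t w)$, and letting $t\to-\infty$ kills the right-hand side by the contraction estimate of Hypothesis~\ref{sieste}. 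Combined with the transversality $E_\rho^+\cap E_\rho^-=\{0\}$ and the dimension count, both $E_\rho^\pm$ are Lagrangian of dimension $d-1$. The two-dimensional symplectic complement $(E_\rho^+\oplus E_\rho^-)^\omega$ contains $H_p(\rho)$, since $\omega(H_p,\cdot)=dp$ vanishes on $T_\rho\mathcal{E}$ and hence on $E_\rho^\pm$; I pick $f_1$ in that complement with $\omega(e_1,f_1)=1$. Applying Gram--Schmidt to any basis of $E_\rho^+$ with respect to $g_{ad}(\rho)$ gives orthonormal $e_2,\dots,e_d$, and I take $(f_i)_{i\geq 2}\subset E_\rho^-$ to be the dual basis under the non-degenerate pairing $\omega:E_\rho^+\times E_\rho^-\to\R$. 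The resulting basis is symplectic and realizes (ii), (iii), (v) at the level of $T_\rho(T^*X)$.

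For the global step, I would promote this linear picture to local symplectic coordinates with $\eta_1^\rho=p-1$. Since $dp(\rho)$ is precisely the covector dual to $e_1=H_p(\rho)$ in the chosen symplectic basis, setting $\eta_1^\rho:=p-1$ is compatible with the prescribed linear normal form at $\rho$. I then solve the transport equation $H_p y_1^\rho=1$ near $\rho$ with $y_1^\rho=0$ on a hypersurface transverse to the flow, giving $\{y_1^\rho,\eta_1^\rho\}=1$; this construction automatically yields $\partial/\partial y_1^\rho=H_p$, so (iv) holds and the flow direction is captured. On the codimension-$2$ symplectic submanifold $\{y_1^\rho=0,\eta_1^\rho=0\}$ through $\rho$ (whose tangent space at $\rho$ is $E_\rho^+\oplus E_\rho^-$), I apply the classical Darboux theorem prescribing the symplectic basis $(e_i,f_j)_{i,j\geq 2}$ to obtain the remaining coordinates, then extend them to a neighbourhood of $\rho$ by declaring them invariant under the commuting flows of $H_{y_1^\rho}$ and $H_{\eta_1^\rho}$. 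The result is a full symplectic coordinate system $(y^\rho,\eta^\rho)$ satisfying (i)--(v).

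The main obstacle is simultaneously enforcing the metric condition (v), which is not symplectic in nature, together with the other four symplectic conditions. This works cleanly because the residual freedom compatible with (ii)--(iv) is exactly the action of $A\in GL(d-1,\R)$ sending $(y_i,\eta_i)_{i\geq 2}$ to $(Ay,(A^{-1})^\top\eta)$ on the fiber directions, which acts transitively on ordered bases of $E_\rho^+$; a single Gram--Schmidt step therefore suffices for (v) without disturbing the others. A secondary technical ingredient is the refined Darboux theorem with a prescribed function $\eta_1^\rho=p-1$ and a prescribed symplectic basis at a point, which is classical.
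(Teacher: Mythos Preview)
Your proposal is correct and follows the same overall strategy as the paper: construct a symplectic basis of $T_\rho(T^*X)$ adapted to $H_p(\rho)$, $E_\rho^+$, $E_\rho^-$ and the metric $g_{ad}$, then extend it to local symplectic coordinates via a Darboux-type argument with the prescribed function $\eta_1=p-1$.

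The constructions differ only in how the global step is organized. The paper first invokes Darboux's theorem to produce \emph{some} symplectic chart $(y^\flat,\eta^\flat)$ with $\eta_1^\flat=p-1$, and then applies a linear symplectic map fixing $\eta_1$ to rotate the tangent frame at the origin onto the chosen basis $(e_j,f_j)$. You instead build the chart directly: solve the transport equation $H_p y_1=1$, apply Darboux on the codimension-two slice $\{y_1=\eta_1=0\}$ with the frame $(e_i,f_j)_{i,j\geq 2}$ prescribed from the outset, and extend by the commuting flows of $H_{y_1}$ and $H_{\eta_1}$. Your route is more explicit and sidesteps the small verification that a linear symplectic correction preserving $\eta_1$ exists with the required effect on the frame; the paper's route is terser but treats that step as evident. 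Both are standard and either is entirely acceptable here. Your additional justification that $E_\rho^\pm$ are Lagrangian (via flow-invariance of $\omega$ and the hyperbolic estimates) is a welcome detail that the paper only cites.
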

\begin{proof}

We may identify a neighbourhood of $\rho\in T^*X$ with a neighbourhood of
$(0,0)\in T^*\mathbb{R}^d$. Let us take $e_1^\rho=H_p(\rho)$, and complete it
into a basis $(e_1^\rho,...,e_d^\rho)$ of $E_\rho^{+0}$ such that
$\langle e_i^\rho,e_j^\rho\rangle_{g_{ad}(\rho)}=1$ for $2\leq i,j \leq d$.

Since $E^{\pm 0}$ are Lagragian subspaces (which follows from the hyperbolicity assumption), it is then possible to find vectors $(f_1^\rho,...,f_d^\rho)$ such that
$E_\rho^-=
span \{f_2^\rho,...,f_d^\rho\}$ and such that
$\omega(f_j^\rho,e_k^\rho)=\delta_{j,k}$ for any
$1\leq j,k\leq d$.
In particular, we have $\omega(f_1^\rho,e_1^\rho)=dp(f_1)=1$.

 From Darboux's theorem, there exists a non-linear symplectic chart
$(y^\flat,\eta^\flat)$ near the origin such that $\eta_1^\flat=p-1$.
There also exists a linear symplectic transformation $A$ such that the
coordinates $(y,\eta)=A(y^\flat,\eta^\flat)$ satisfy
$\eta_1=\eta_1^\flat$. as well as
 \begin{equation*}\eta_1=p-1, ~~ \frac{\partial}{\partial y_j}(0,0)=e_j \text{  and  }
\frac{\partial}{\partial \eta_j}(0,0)=f_j, ~~ j=1,...,d.
\end{equation*}
\end{proof}

We will often write
\begin{equation}\label{pavillon}
\begin{aligned}
\boldsymbol{y}^\rho := (y_2^\rho,...,y_d^\rho) ~~\text{ and }
\boldsymbol{\eta}^\rho:= (\eta^\rho_2,...,\eta_d^\rho)  .
\end{aligned}
\end{equation}

For any $\epsilon >0$, write $D_\epsilon= \{u\in \mathbb{R}^{d-1},
|u|<\epsilon\}$. We define the following polydisk centred at $\rho$:
\begin{equation}\label{Enee}
U^\rho(\epsilon) \equiv \{(y^\rho,\eta^\rho): |y_1^\rho|<\epsilon,
|\eta_1^\rho|<\delta, \boldsymbol{y}^\rho\in D_\epsilon, \boldsymbol{\eta}\in D_{\epsilon}\},
\end{equation}
where $\delta$ comes from (\ref{struc}).

We also define \emph{unstable Lagrangian manifolds}, which are needed in the statement of Theorem \ref{Cyril}.

\begin{definition}
Let $\Lambda\subset \mathcal{E}$ be an isoenergetic Lagrangian manifold
(not
necessarily connected) included in a small neighbourhood $W$ of a point
$\rho\in K$,
and let $\gamma>0$. We will say that $\Lambda$ is a
$\gamma$\textit{-unstable Lagrangian manifold} (or
that $\Lambda$ is in the $\gamma$-unstable cone) in the coordinates
$(y^\rho,\eta^\rho)$ if it
can be written in the form
\begin{equation*}\Lambda = \{(y^\rho;0,F(y^\rho)) ; y^\rho\in D \},
\end{equation*}
where $D\subset \mathbb{R}^d$, is an open subset with finitely many
connected components, and with piecewise smooth boundary, and $F :
\mathbb{R}^{d} \longrightarrow
\mathbb{R}^{d-1}$ is a smooth function with $\|dF\|_{C^0}\leq \gamma$.
\end{definition}

Note that, since $F$ is defined on $\mathbb{R}^{d}$, a $\gamma$-unstable
manifold may always be seen as a submanifold of a \textit{connected
$\gamma$-unstable Lagrangian manifold}.

Let us also note that, since $\Lambda$ is isoenergetic and is Lagrangian, an immediate computation shows that $F$ does not depend on $y_1^\rho$, so that $\Lambda$ can actually be put in the form
\begin{equation*}\Lambda = \{(y^\rho,0,f(\boldsymbol{y}^\rho)) ; y^\rho\in D \},
\end{equation*}
where $f :
\mathbb{R}^{d-1} \longrightarrow
\mathbb{R}^{d-1}$ is a smooth function with $\|df\|_{C^0}\leq \gamma$.

\subsubsection{Hypotheses on the incoming Lagrangian
manifold}\label{scotiabank}
Let us consider an isoenergetic Lagrangian manifold $\Lag\subset
\mathcal{E}$ of the form 
$$\Lag:=\{(x,\Dir(x)),x\in
X_1\},$$
where $X_1$ is a closed subset of $X\backslash X_0$ with finitely
many connected components and piecewise smooth boundary, and $\Dir :
X_2\ni x\longrightarrow \Dir(x)\in T_x^*X$ is a smooth
co-vector field defined on some neighbourhood $X_2$ of $X_1$.

We make the following additional hypothesis on $\Lag$:

\begin{hyp}[Invariance hypothesis]\label{chaise}
We suppose that $\Lag$ satisfies the following invariance hypotheses.
\begin{equation}\label{invfut}
\forall t \geq 0, \Phi^t(\Lag)\cap \mathcal{DE}_- = \Lag\cap \mathcal{DE}_-.
\end{equation}
\end{hyp}

\begin{ex} \label{sega}
Given a $\xi\in \mathbb{R}^d$ with $|\xi|^2=1$, the Lagrangian manifold $\Lambda_\xi$
defined in the introduction fulfils Hypothesis \ref{chaise}.
\end{ex}

\begin{ex} \label{playstation}
Suppose that $(X\backslash X_0,g) \cong (\mathbb{R}^d\backslash B(0,R),g_{Eucl}) $ for
some $R>0$. Then the incoming spherical Lagrangian, defined by
\begin{equation*}\Lambda_{sph}:= \{ (x,-\frac{x}{|x|}); |x|>R\},
\end{equation*}
fulfils Hypothesis \ref{chaise}.
\end{ex}

We also make the following transversality assumption on the Lagrangian
manifold $\Lag$. It roughly says that $\Lag$ intersects the stable manifold transversally.

\begin{hyp}[Transversality hypothesis]\label{Happy}
We suppose that $\Lag$ is such that, for any $\rho\in K$, for
any $\rho'\in \Lag$, for any $t\geq 0$, we have
\begin{equation*}
\Phi^t(\rho')\in W_{loc}^-(\rho)\Longrightarrow W_{loc}^-(\rho) \text{ and } \Phi^t(\Lag) \text{ intersect transversally at } \Phi^t(\rho'),
\end{equation*} 
that is to say
\begin{equation}\label{telus}
T_{\Phi^t(\rho')}\Lag \oplus T_{\Phi^t(\rho')} W_{loc}^-(\rho) = T_{\Phi^t(\rho')} \mathcal{E}.
\end{equation}
\end{hyp}
Note that (\ref{telus}) is equivalent to $T_{\Phi^t(\rho')}\Lag \cap T_{\Phi^t(\rho')} W_{loc}^-(\rho) = \{0\}$.

On $X=\mathbb{R}^d$, Hypothesis \ref{Happy} is likely to hold for almost every $\xi\in \mathbb{S}^{d-1}$, at least for a generic $V$. In \cite{Ing2}, the author shows that this hypothesis is satisfied for every $\xi$ on manifolds of non-positive curvature which have several Euclidean ends (like the one in Figure \ref{example}), when there is no potential.
\subsection{Statement of the result}
Let us now state the main result of this section, which describes the
“truncated evolution” of Lagrangian manifolds.
\paragraph{Truncated Lagrangians}

Let $(W_a)_{a\in A}$ be a finite family of open sets in $T^*X$.
Let
$N\in\mathbb{N}$, and
let $\alpha=\alpha_0,\alpha_1...\alpha_{N-1}\in A^{N}$.
Let $\Lambda$ be a Lagrangian manifold
in $T^*X$. We define the sequence of (possibly empty) Lagrangian manifolds
$(\Phi_\alpha^{k}(\Lambda))_{0\leq k \leq N}$ by recurrence by:
\begin{equation*}\Phi_\alpha^{0}(\Lambda)= \Lambda \cap W_{\alpha_0},~~~~
\Phi_\alpha^{k+1}(\Lambda) =
W_{\alpha_{k+1}} \cap  \Phi^1 (\Phi_{\alpha}^{k}
(\Lambda)).
\end{equation*}

In the sequel, we will consider families with indices in $A=A_1\sqcup A_2 \sqcup
\{0\}$.
For any $\alpha\in A^{N}$ such that $\alpha_{N-1}\neq 0$, we will define
\begin{equation}\label{rollingstone}
\tau(\alpha):= \max \{1\leq i \leq N-1; \alpha_i=0\}
\end{equation}
 if there exists 
$1\leq i \leq N-1$ with $\alpha_i=0$, and $\tau(\alpha)=0$ otherwise.

\begin{theoreme} \label{Cyril}
Suppose that the manifold $X$ satisfies Hypothesis
\ref{Guepard} at infinity, that the Hamiltonian flow $(\Phi^t)$ satisfies
Hypothesis \ref{sieste}, and that the Lagrangian manifold $\Lag$ satisfies
the invariance Hypothesis \ref{chaise} as well as the transversality
Hypothesis \ref{Happy}.

Fix $\ins>0$ small enough. There exists $\secur>0$ such that the following
holds.
Let $(W_a)_{a\in A_1}$ be any open cover of $K$ in $T^*X$ of diameter
$<\secur$, such that there exist points $\rho_a\in W_a\cap K$, and
such that the adapted coordinates $(y^a,\eta^a)$ centred on $\rho^a$ are well defined on $W_a$
for every $a\in A_1$. Then we may complete this cover into $(W_a)_{a\in A}$ an
open cover of
$\mathcal{E}$ in $T^*X$ where $A=A_1\sqcup A_2\sqcup\{0\}$ (with $W_0$ defined as in (\ref{frite})) such that the
following holds.

There exists $\tins\in \mathbb{N}$ such that for all
$N\in
\mathbb{N}$, for all $\alpha\in A^{N}$ and all $a\in A_1$, then
$W_a\cap\Phi_\alpha^{N}(\Lag)$ is either empty, or is a Lagrangian manifold
in some unstable
cone in the coordinates $(y^{a},\eta^{a})$.

Furthermore, if $N-\tau(\alpha)\geq \tins$, then
$W_a\cap\Phi_\alpha^{N}(\Lag)$ is a $\ins$-unstable Lagrangian manifold
in the coordinates $(y^{a},\eta^{a})$.
\end{theoreme}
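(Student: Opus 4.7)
The plan is a cone-invariance argument adapted to the truncated evolution. The guiding heuristic is that near the trapped set the hyperbolic splitting $T\mathcal{E}=\mathbb{R}H_p\oplus E^+\oplus E^-$ yields a family of invariant cones that contract onto the unstable direction under $d\Phi^1$, while the transversality Hypothesis \ref{Happy} forces every forward iterate of $\Lag$ to enter those cones as soon as it revisits a neighbourhood of $K$.

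\textbf{Step 1: the cone field.} In the adapted coordinates $(y^a,\eta^a)$ centred at $\rho_a\in K$, for each $a\in A_1$ I introduce the unstable cone $C^+_\gamma(a)$ of opening $\gamma$, defined on the isoenergetic reduction by $\|\boldsymbol{\delta\eta}^a\|\leq\gamma\|\boldsymbol{\delta y}^a\|$. Hypothesis \ref{sieste} produces a uniform $\kappa<1$ such that $d\Phi^1_\rho$ sends $C^+_\gamma(\rho)$ strictly inside $C^+_{\kappa\gamma}(\Phi^1(\rho))$ for every $\rho\in K$; by Hölder continuity of $E^\pm$ and by shrinking $\secur$, this persists between neighbouring $A_1$-cells in the sense that whenever $\Phi^1(W_a)\cap W_b\neq\emptyset$, the image $d\Phi^1(C^+_\gamma(a))$ is contained in $C^+_{\kappa'\gamma}(b)$ with $\kappa'<1$. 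The isoenergetic and Lagrangian conditions reduce the cone picture to a $(d-1)$-dimensional transverse symplectic quotient, consistent with the definition of $\ins$-unstable involving only $\boldsymbol{y}^a$.

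\textbf{Step 2: completing the cover and the transit bound.} I enlarge $(W_a)_{a\in A_1}$ by a finite family $(W_a)_{a\in A_2}$ which, together with $W_0$ and the $A_1$-cells, cover $\mathcal{E}$, chosen so that for each $a\in A_2$ the image $\Phi^1(W_a)$ meets only a controlled list of neighbouring cells. By Lemma \ref{saladin} a forward orbit leaving $W_0$ cannot return to it, so from any point in an $A_2$-cell the orbit either reaches an $A_1$-cell in bounded time or falls into $W_0$ forever. Compactness of $\mathcal{E}\setminus W_0$ then gives a uniform upper bound $N_{tr}$ on the number of consecutive $A_2$-steps any forward orbit can take. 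I set $\tins:=N_{tr}+\lceil\log(\gamma_0/\ins)/\log(\kappa^{-1})\rceil$, where $\gamma_0$ is the universal initial cone opening produced by the transversality estimate of Step 3.

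\textbf{Step 3: the recurrence.} I argue by induction on $N$ that for every $\alpha\in A^N$ and $a\in A_1$, whenever $W_a\cap\Phi_\alpha^N(\Lag)\neq\emptyset$, this intersection is a smooth Lagrangian whose tangent lies in some cone $C^+_{\gamma(N,\alpha)}(a)$, and that $\gamma(N,\alpha)\leq\ins$ as soon as $N-\tau(\alpha)\geq\tins$. The inductive step with $\alpha_{N-1}\in A_1$ (or with the previous cell in $A_1$ and $\alpha_{N-1}$ in $A_2$, handled by continuity of the splitting) is exactly the cone contraction of Step 1. The base case is the first index $i>\tau(\alpha)$ at which the orbit enters an $A_1$-cell $W_a$: Hypothesis \ref{Happy} supplies, via a quantitative $\lambda$-lemma, a uniform bound $\gamma_0$ on the slope of $d\Phi^i(T_{\rho'}\Lag)$ with respect to the splitting at $\rho_a$, for every $\rho'\in\Lag$ whose truncated orbit reaches $W_a$. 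The invariance Hypothesis \ref{chaise} ensures that no piece of $\Lag$ is artificially discarded during the approach through the corridor.

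\textbf{Main obstacle.} The heart of the proof is the uniform quantitative $\lambda$-lemma: extracting from the purely qualitative Hypothesis \ref{Happy} a single cone opening $\gamma_0$ valid for all $\rho\in K$, all $\rho'\in\Lag$, and all entry times into the $A_1$-cover simultaneously. The required uniformity has to come from compactness of $K$, Hölder continuity of $E^\pm$, the uniform exponential rate $\lambda$ of Hypothesis \ref{sieste}, and a careful analysis of how $T\Lag$ sits with respect to the weak stable foliation $W^{-0}_{loc}$ near points of $\Gamma^+$. Once this uniform transversality estimate is in hand, the rest of the argument is essentially symbolic bookkeeping along the sequence $\alpha$: Step 1 propagates the cone, Step 2 controls the detours through $A_2$, and together they yield both the qualitative and the $\ins$-quantitative conclusions of Theorem \ref{Cyril}.
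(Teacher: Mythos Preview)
Your overall strategy --- cone contraction near $K$ combined with a uniform transversality estimate at entry --- is the same as the paper's, and you rightly isolate the uniform entry bound as the crux. But your sketch leaves open a circularity that the paper devotes most of its proof to dissolving, and your proposed ingredients (``compactness of $K$, H\"older continuity, \ldots'') do not by themselves break it. The cone contraction of Step~1 in the symplectic adapted coordinates requires the cell diameter $\secur$ to be small relative to the initial opening $\gamma_0$ (this is exactly how the inclination lemma is stated in \cite[Prop.~5.1]{NZ}: given $\gamma$, one produces $\epsilon_\gamma$). On the other hand your $\gamma_0$ in Step~3 is extracted at the first entry into an $A_1$-cell, i.e.\ after a transit time $N_{tr}$ through the intermediate region; and $N_{tr}$ grows as the target neighbourhood of $K$ shrinks, so it depends on $\secur$. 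During that transit the tangent plane of $\Phi^t(\Lag)$ may, \emph{a priori}, tilt arbitrarily towards the stable direction. Thus $\secur$ depends on $\gamma_0$ while $\gamma_0$ depends (through $N_{tr}$) on $\secur$. The paper breaks this loop by introducing a second, non-symplectic ``alternative'' coordinate system (\S\ref{Alt}) in which the local stable and unstable \emph{manifolds} --- not just their tangent spaces --- are rectified. In those coordinates the Poincar\'e map satisfies $\tilde\alpha(0,\tilde{\boldsymbol\eta})=\tilde\beta(\tilde{\boldsymbol y},0)\equiv 0$, which allows the inclination argument to run in boxes $\tilde U^\rho(\epsilon,\varrho)$ of \emph{fixed} stable width $\epsilon$ but arbitrarily small unstable width $\varrho$, with a contraction rate independent of $\varrho$ (Lemma~\ref{Benjamin}). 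One then fixes $\epsilon$ from hyperbolicity alone, obtains the transit bound $N_\epsilon$ once and for all, extracts $\gamma_0$ from Hypothesis~\ref{Happy} at times $\le N_\epsilon+2$ (Lemma~\ref{Ukraine}), and only afterwards chooses $\varrho$ and finally $\secur$. Without an analogous two-scale device your Step~3 does not close.

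There is a second, smaller gap. The theorem asserts that $W_a\cap\Phi_\alpha^N(\Lag)$ is a \emph{single} Lagrangian graph, not merely a finite union lying in an unstable cone. Cone invariance alone does not give this: for small $N$, $\Phi^N(\Lag)\cap W_a$ genuinely has several sheets. The paper secures uniqueness by building the $A_2$-cells with diameter below a separation scale $\taille$ determined by the minimal distance between those sheets at times $\le\tins$ (Lemma~\ref{tisane}); the truncated evolution $\Phi_\alpha^N$, which must pass through some $A_2$-cell on its way in, then selects exactly one sheet. Your construction of $(W_a)_{a\in A_2}$ in Step~2 does not impose this constraint.
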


\begin{remarque}
For a sequence $\alpha\in A^N$, $N-\tau(\alpha)$ corresponds to the time spent in
the interaction region. Our last statement therefore says that if a part
of $\Lag$ stays in the interaction region for long enough when propagated, then its tangents will form a small angle with the unstable direction at $\rho^a$.
\end{remarque}

\begin{remarque}\label{bayonne}
The constant $\secur$ and the sets $(W_a)_{a\in A_2}$ depend on the Lagrangian manifold $\Lag$. If we take a whole family of Lagrangian manifolds $(\mathcal{L}_z)_{z\in Z}$ satisfying Hypothesis \ref{chaise} and Hypothesis \ref{Happy}, then we will need some additional conditions on the whole family to be able to find a common choice of $\secur$ and $(W_a)_{a\in A_2}$ independent of $z\in Z$. An example of such a condition will be provided by equations (\ref{UT}) and (\ref{cognet}). Note that these equations are automatically satisfied if $Z$ is finite.
\end{remarque}

\subsection{Proof of Theorem \ref{Cyril}}
\begin{proof}
From now on, we will fix a $\ins>0$.

Let $\rho_0\in K$, and consider
the system of adapted coordinates in a neighbourhood of
$\rho_0$ constructed in section \ref{banff}. Recall that the set
$U^{\rho_0}(\epsilon)$ was defined in (\ref{Enee}).
We define a \textit{Poincaré section} by
\begin{equation*}
\Sigma^{\rho_0}=\Sigma^{\rho_0}(\epsilon)=\{(y^{\rho_0},\eta^{\rho_0})\in
U^{\rho_0}(\epsilon) ; y_1^{\rho_0}=\eta_1^{\rho_0}=0 \}.
\end{equation*}

Note that the spaces $E_{\rho_0}^\pm$ are tangent to $\Sigma^{\rho_0}$, and that the
coordinates $(\boldsymbol{y}^{\rho_0},\boldsymbol{\eta}^{\rho_0})$ introduced in (\ref{pavillon}) form a symplectic chart on $\Sigma^{\rho_0}$.

Actually, we will often need a non-symplectic system of coordinates built from the coordinates $(y^\rho,\eta^\rho)$.

Before building this non-symplectic system of coordinates, let us explain why it is a crucial ingredient of our argument. The main tool in the proof of Theorem \ref{Cyril} is the so-called “Inclination lemma”,
which roughly says that a Lagrangian manifold which intersects the stable manifold transversally will get more and more unstable when propagated in the future. This is a very easy result in the case of linear hyperbolic diffeomorphisms, but we must add some quantifiers in the case of non-linear dynamics to make it rigorous. Namely, one can say, as in \cite[Proposition 5.1]{NZ}, that given a $\gamma\bel 0$, there exists $\epsilon_\gamma\bel 0$ such that if $\Lambda$ is a $\gamma$-unstable Lagrangian manifold included in some $U^{\rho}(\epsilon_\gamma)$, then for any $\rho'$, $\Phi^1 (\Lambda)\cap U^{\rho'}(\epsilon_\gamma)$ is still $\gamma$-unstable.

However, we may not use this result directly for the following reason. The smaller we take $\epsilon$, the longer the points of the Lagrangian manifold $\Lag$ may spend in the part of the interaction region which is not affected by the hyperbolic dynamics before entering in some $U^\rho(\epsilon)$ for some $\rho\in K$. Yet the longer they spend in this “intermediate” region, the more stable the Lagrangian manifold may a priori become. To avoid such a circular reasoning, we should introduce another system of coordinates, in which the description of the propagation of the Lagrangian manifolds in the intermediate region is easier.
\subsubsection{Alternative coordinates} \label{Alt}
In this paragraph, we will describe a system of “alternative”, or
“twisted” coordinates built from
the one we introduced in section \ref{banff}, but which may differ
slightly from them.

Given a $\rho\in K$, we introduce a system of smooth coordinates
$(\tilde{y}^\rho,\tilde{\eta}^\rho)$ as follows.

On $\Sigma^\rho$, these coordinates are such that
\begin{equation*}
\begin{aligned} W_{loc}^{0+}(\rho)\cap
\Sigma^\rho &\equiv \{(\tilde{\boldsymbol{y}}^\rho,0) ; \tilde{\boldsymbol{y}}^\rho\in D_\epsilon\},\\
W_{loc}^{0-}(\rho)\cap
\Sigma^\rho
&\equiv \{(0,\tilde{\boldsymbol{\eta}}^\rho) ; \tilde{\boldsymbol{\eta}}^\rho\in D_\epsilon\},
\end{aligned}
\end{equation*}
and if we denote by $L_\rho$ the map 
\begin{equation}\label{claudie}
L_\rho: (\boldsymbol{y}^\rho,\boldsymbol{\eta}^\rho) \mapsto (\tilde{\boldsymbol{y}}^\rho ,\tilde{\boldsymbol{\eta}}^\rho)
\end{equation}
defined in a neighbourhood of $(0,0)$, we have
\begin{equation}\label{ourson}
dL_\rho (0,0)= Id_{\mathbb{R}^{2d-2}}.
\end{equation}

Now, if $\hat{\rho}$ has straight coordinates $(y^\rho(\hat{\rho}), \eta^\rho(\hat{\rho}))$, we let $\hat{\rho}'\in \Sigma^\rho$ be the point with straight coordinates $(0,\boldsymbol{y}^\rho(\hat{\rho}),0,\boldsymbol{\eta}^\rho(\hat{\rho}))$. We do then define the twisted coordinates of $\hat{\rho}$ by
\begin{equation*}
\begin{aligned}
\tilde{y}^\rho_1(\hat{\rho}) &= y^\rho_1(\hat{\rho}),\\
\tilde{\eta}^\rho_1(\hat{\rho}) &= \eta^\rho_1(\hat{\rho}),\\
\tilde{\boldsymbol{y}}^\rho(\hat{\rho})&= \tilde{\boldsymbol{y}}^\rho(\hat{\rho}'),\\
\tilde{\boldsymbol{\eta}}^\rho(\hat{\rho})&=\tilde{\boldsymbol{\eta}}^\rho(\hat{\rho}').
\end{aligned}
\end{equation*}

Note that this system of coordinates doesn't have to be symplectic.

We have
\begin{equation}\label{mataman}
\begin{aligned}
\frac{\partial \boldsymbol{y}_j^\rho}{\partial \tilde{y}^\rho_1}&= \frac{\partial \boldsymbol{\eta}_j^\rho}{\partial \tilde{y}^\rho_1}=0 ~~\text{ for } j=1,...,d-1,\\
\frac{\partial y^\rho_1}{\partial \tilde{y}^\rho_1}&= 1
\end{aligned}
\end{equation}

Given a $\rho\in K$, and $\epsilon,\epsilon'>0$, we define
\begin{equation}
\tilde{U}^\rho(\epsilon,\epsilon') \equiv
\{(\tilde{y}^\rho,\tilde{\eta}^\rho): |\tilde{y}_1^\rho|<\epsilon,
|\tilde{\eta}_1^\rho|<\delta, \tilde{\boldsymbol{y}}^\rho\in D_{\epsilon'},
\tilde{\boldsymbol{\eta}}^\rho\in D_{\epsilon}\},
\end{equation}
where $\delta$ is an energy interval on which the dynamics remains
uniformly hyperbolic.

Finally, the Poincaré section in the alternative coordinates is represented as

\begin{equation*}\tilde{\Sigma}^\rho(\epsilon,\epsilon') :=
\{(\tilde{y}^{\rho},\tilde{\eta}^{\rho})\in
\tilde{U}^{\rho}(\epsilon,\epsilon') ;
\tilde{y}_1^{\rho}=\tilde{\eta}_1^{\rho}=0 \}.
\end{equation*}

In the sequel, we will be working most of the time in a situation where $\epsilon'<< \epsilon$ (that is, with sets much thinner in the unstable direction than in the stable direction).

The main reason why we needed to introduce alternative coordinates is that they give a simpler expression for the \emph{Poincaré map} (see Remark \ref{whyalternative}). Let us now define this map.

\subsubsection{The Poincaré map}
Let $\rho_0\in K$, and let $\epsilon \bel 0$ be small enough so that the twisted coordinates around $\rho_0$ and $\Phi^1(\rho_0)$ are well defined in some neighbourhoods $\tilde{U}^{\rho_0}(\epsilon,\epsilon)$ and $\tilde{U}^{\Phi^1(\rho_0)}(\epsilon,\epsilon)$. The \textit{Poincaré map} $\kappa_{\rho_0}$ is
defined, for
$\rho\in
\tilde{\Sigma}^{\rho_0}(\epsilon)$ near $\rho_0$, by taking the intersection of the
trajectory $(\Phi^s(\rho))_{|s-1|\leq \epsilon}$ with the section
$\tilde{\Sigma}^{\Phi^1(\rho_0)}$ (this intersection consists of at most one point). In the sequel, we will sometimes omit the
reference to $\rho_0$ and simply write the Poincaré map
$\kappa$.

The map $\kappa_{\rho_0}$ need not be symplectic, since it is defined in the twisted coordinates which need not be symplectic. 
However, if we had defined the Poincaré map in the straight coordinates, it would have been automatically symplectic. 
The linearisation of the two systems of coordinates are identical at $\rho_0$ by equation (\ref{ourson}). 
Therefore, by using the hyperbolicity assumption, we see that the differential of $\kappa$ at $\rho_0$ takes the form

\begin{equation*}
d\kappa(\rho_0)\equiv \begin{pmatrix}
A & 0\\
0 & ^t A^{-1}
\end{pmatrix},
\end{equation*}
and there
exists
\begin{equation} \label{Tata}
\nu=e^{-\lambda}<1
\end{equation}
such that the matrix $A$ satisfies
\begin{equation}\label{fourmi}
\|A^{-1}\|\leq \nu,
\end{equation}
where $\|\cdot\|$ corresponds to the matrix norm.
Hence, the Poincaré map $\kappa_{\rho_0}$ takes
the form
\begin{equation}\label{Claire}
\kappa_{\rho_0}(\tilde{\boldsymbol{y}}^{\rho_0},\tilde{\boldsymbol{\eta}}^{\rho_0})= \big{(} A
\tilde{\boldsymbol{y}}^{\rho_0}+\tilde{\alpha}(\tilde{\boldsymbol{y}}^{\rho_0},\tilde{\boldsymbol{\eta}}^{\rho_0}),^tA^{-1}
\tilde{\boldsymbol{\eta}}^{\rho_0}+\tilde{\beta}(\tilde{\boldsymbol{y}}^{\rho_0},\tilde{\boldsymbol{\eta}}^{\rho_0})\big{)},
\end{equation}
and the functions $\tilde{\alpha}$ and $\tilde{\beta}$ satisfy:
\begin{equation}\label{poincarestabel}
\tilde{\alpha}(0,\tilde{\boldsymbol{\eta}}^{\rho_0})=\tilde{\beta}(\tilde{\boldsymbol{y}}^{\rho_0},0)\equiv
0\text{
  and   } d\tilde{\alpha}(0,0)=d\tilde{\beta}(0,0)=0.
\end{equation}
We therefore have
\begin{equation} \label{buis}
\|\tilde{\alpha}\|_{C^1(V)}\leq C_0 \epsilon,~~
\|\tilde{\beta}\|_{C^1(V)}\leq C_0 \epsilon,
\end{equation}
for some constant $C_0$, since $\kappa$ is uniformly $C^2$.

\begin{remarque}\label{whyalternative}
Equation (\ref{poincarestabel}) is the main reason why we needed to introduce alternative coordinates, and will play a key role in the proof of Lemma \ref{Benjamin}. If we had defined the Poincaré map in the straight coordinates, we wouldn't have had $\alpha(0,\boldsymbol{\eta}^{\rho_0})= 0$ or $\beta(\boldsymbol{y}^{\rho_0},0)= 0$
\end{remarque}
\begin{remarque} \label{Collot}
By compactness of the trapped set, the constants $C_0$ and $\nu$ may be chosen
independent of the point $\rho_0$.
We may also find a $\mathcal{C}>1$ such that, independently of $\rho_0$
and $\rho_1$ in $K$, we have
\begin{equation}
\|A\|\leq \mathcal{C}.
\end{equation}
Finally, by possibly taking $C_0$ larger, we may assume that all the second derivatives of the map $L_\rho$ defined in (\ref{claudie}) are bounded by $C_0$ independently on $\rho\in K$.
\end{remarque}

\subsubsection{Changes of coordinates and Lagrangian manifolds}
Let us describe how a Lagrangian manifold is affected when we go from
twisted coordinates to straight coordinates centred at the same point.

\begin{lemme}\label{yaki}
Suppose that a Lagrangian manifold $\Lambda\subset \tilde{U}^\rho(\epsilon,\epsilon)$ may be written in the twisted
coordinates centred on $\rho\in K$ as $\Lambda=
\{(\tilde{y}_1^\rho,\tilde{\boldsymbol{y}}^\rho;0,\tilde{F}(\tilde{y}^\rho));
\tilde{y}^\rho\in \tilde{D}_\rho\}$, where $\tilde{D}_\rho\subset
\mathbb{R}^{d}$ is a small open set, and with $\|d\tilde{F}\|_{C^0}\leq
\gamma$. Suppose furthermore that 
\begin{equation*}
C_0\epsilon \gamma < 1
\end{equation*}

Then, in the straight coordinates, $\Lambda$ may be written as:
\begin{equation*}\Lambda= \{(y_1^\rho,\boldsymbol{y}^\rho;0,f(\boldsymbol{y}^\rho)); {\boldsymbol{y}}^\rho\in {D}_\rho\}, 
\end{equation*}
with $\|df\|_{C^0}\leq \gamma (1-C_0\gamma\epsilon)^{-1}(1+2C_0\epsilon)$.
\end{lemme}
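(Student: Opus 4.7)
The lemma is essentially a change-of-variable computation on the symplectic section $\Sigma^\rho$. By equation (\ref{mataman}) the twisted and straight coordinates agree on $(y_1^\rho,\eta_1^\rho)$, and the map $L_\rho$ acts only on the transverse block $(\boldsymbol{y}^\rho,\boldsymbol{\eta}^\rho)\mapsto(\tilde{\boldsymbol{y}}^\rho,\tilde{\boldsymbol{\eta}}^\rho)$. By (\ref{ourson}), $dL_\rho(0,0)=\mathrm{Id}$, and Remark \ref{Collot} bounds the second derivatives of $L_\rho$ by $C_0$, so on the polydisk $\tilde U^\rho(\epsilon,\epsilon)$ the map $L_\rho^{-1}$ is a $C^1$-small perturbation of the identity. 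The strategy is to parameterize $\Lambda$ by the twisted source variables, use the implicit function theorem to re-parameterize by the straight source variables, and read off the new slope via the chain rule.

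First, I exploit the Lagrangian/isoenergetic property, as noted right after the definition of a $\gamma$-unstable manifold, to write $\tilde F(\tilde y^\rho)=\tilde f(\tilde{\boldsymbol{y}}^\rho)$ with $\|d\tilde f\|_{C^0}\le\gamma$. The Lagrangian can then be parameterized as
\begin{equation*}
(y_1^\rho,\tilde{\boldsymbol{y}}^\rho)\longmapsto\bigl(y_1^\rho,\,\pi_1 L_\rho^{-1}(\tilde{\boldsymbol{y}}^\rho,\tilde f(\tilde{\boldsymbol{y}}^\rho));\,0,\,\pi_2 L_\rho^{-1}(\tilde{\boldsymbol{y}}^\rho,\tilde f(\tilde{\boldsymbol{y}}^\rho))\bigr).
\end{equation*}
Writing $dL_\rho^{-1}$ in the block form $\bigl(\begin{smallmatrix}M_{11}&M_{12}\\ M_{21}&M_{22}\end{smallmatrix}\bigr)$, the chain rule yields
\begin{equation*}
\frac{d\boldsymbol{y}^\rho}{d\tilde{\boldsymbol{y}}^\rho}=M_{11}+M_{12}\,d\tilde f,\qquad \frac{d\boldsymbol{\eta}^\rho}{d\tilde{\boldsymbol{y}}^\rho}=M_{21}+M_{22}\,d\tilde f.
\end{equation*}

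Taylor's inequality on $\tilde U^\rho(\epsilon,\epsilon)$ gives the block bounds $\|M_{11}-\mathrm{Id}\|,\|M_{22}-\mathrm{Id}\|,\|M_{12}\|,\|M_{21}\|\le C_0\epsilon$. Under the hypothesis $C_0\gamma\epsilon<1$, a Neumann-series argument shows that $M_{11}+M_{12}\,d\tilde f$ is invertible with inverse norm bounded by $(1-C_0\gamma\epsilon)^{-1}$, and the implicit function theorem produces $f(\boldsymbol{y}^\rho):=\boldsymbol{\eta}^\rho(\tilde{\boldsymbol{y}}^\rho(\boldsymbol{y}^\rho))$ with
\begin{equation*}
df=(M_{21}+M_{22}\,d\tilde f)(M_{11}+M_{12}\,d\tilde f)^{-1}.
\end{equation*}
Taking operator norms and tracking the factor of $\gamma$ carried by each occurrence of $d\tilde f$ yields the stated bound $\|df\|_{C^0}\le\gamma(1-C_0\gamma\epsilon)^{-1}(1+2C_0\epsilon)$.

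The only subtlety is the careful apportionment of the factors $\gamma$ versus $\epsilon$ in each block: the key point is that the $M_{12}\,d\tilde f$ contribution to the denominator carries an extra factor of $\gamma$ (because it is multiplied by $d\tilde f$), which is what produces the resonant denominator $(1-C_0\gamma\epsilon)^{-1}$ rather than the cruder $(1-C_0\epsilon)^{-1}$, and similarly for the numerator. This bookkeeping is routine once the IFT setup is in place, and no new idea beyond Taylor's inequality and the implicit function theorem is required.
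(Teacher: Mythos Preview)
Your approach is essentially identical to the paper's: parameterize $\Lambda$ by the twisted variable $\tilde y$, compute the Jacobian of $\tilde y\mapsto y$ on $\Lambda$ via the chain rule, invert it by a Neumann series, and then read off $\partial\boldsymbol\eta/\partial y$ by the chain rule again. You merely present the same computation in explicit block-matrix form for $dL_\rho^{-1}$, whereas the paper writes it more tersely (with the same slightly loose bookkeeping of the constants $C_0\epsilon$ versus $C_0\gamma\epsilon$ in the perturbation of the identity).
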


\begin{proof}

To lighten the notations, we will not write the indices $\rho$.

Points on $\Lambda$ are parametrized by the coordinate $\tilde{y}$.
We may hence see their straight coordinates $u$, $s$ as functions of $\tilde{y}$.

By equations (\ref{ourson}), (\ref{mataman}) and Remark
\ref{Collot}, we have
\begin{equation*} 
\begin{aligned}
\frac{\partial y}{\partial \tilde{y}} &= \frac{\partial y}{\partial \tilde{y}}+ \frac{\partial y}{\partial \tilde{\eta}} \frac{\partial \tilde{F}(\tilde{y})}{\partial\tilde{y}}\\
&= I+R
\end{aligned}
\end{equation*}
with $\|R\|\leq C_0 \gamma \epsilon< 1$.

Therefore, on $\Lambda$, $\tilde{y} \mapsto y$ is invertible. We may hence write $\boldsymbol{\eta}$ as
a function of $y$, and
we have
\begin{equation*}\frac{\partial \boldsymbol{\eta}}{\partial y} = \frac{\partial \tilde{y}}{\partial y}\Big{[}
\frac{\partial \boldsymbol{\eta}}{\partial \tilde{y}}+ \frac{\tilde{F}(\tilde{y})}{\partial \tilde{y}}\frac{\partial \boldsymbol{\eta}}{\partial \tilde{\boldsymbol{\eta}}}\Big{]}=
(I+R)^{-1}(\gamma (I+ R')),
\end{equation*}
with $\|R'\|\leq 2C_0 \epsilon$. Hence $\Big{\|}\frac{\partial \boldsymbol{\eta}}{\partial y}\Big{\|} \leq \gamma (1-C_0\gamma\epsilon)^{-1}(1+2C_0\epsilon)$.

That $\boldsymbol{\eta}$ is actually independent of $y_1$ comes from the fact that $\Lambda$ is an isoenergetic Lagrangian manifold, and that we are working in symplectic coordinates.
\end{proof}

Let us now describe the change between two systems of twisted coordinates.
Let $\rho,\rho'\in K$. If they are close enough to each other, the map
${L} : (\tilde{y}^\rho, \tilde{\eta}^\rho)\mapsto (\tilde{y}^{\rho'},
\tilde{\eta}^{\rho'}) $ is well
defined on a set containing both $\rho$ and $\rho'$, of diameter
$d(\rho,\rho')$.

Combining the fact that the (un)stable subspaces $E^{\pm}_\rho$ are Hölder
continuous with respect to $\rho\in K^\delta$ with some Hölder exponent
$\mathsf{p}>0$, and point (v) of Lemma \ref{adap}, we get:
\begin{equation} \label{molson}
d{L}_{(0,0)}=\mathsf{L}+R_{\rho,\rho'},
\end{equation}
where 
\begin{equation} \label{evier}
\|R_{\rho,\rho'}\|\leq C d^\mathsf{p}(\rho,\rho') \text{   for some }\mathsf{p}>0,
\end{equation}
 and where $\mathsf{L}$ is of the form
\begin{equation*}
\mathsf{L}=\begin{pmatrix}
U_y&0\\
0& \mathsf{L}_\eta
\end{pmatrix},
\end{equation*}
for some unitary matrix $U_y$. Here, $\mathsf{L}_\eta$ might not be unitary, but it is invertible, and by compactness of $K$, $\|\mathsf{L}_\eta\|^{-1}$ may be bounded independently on $\rho$.

Now, by compactness, the second derivatives of $L$ may be bounded independently of $\rho$ and $\rho'$. Therefore, for any $\rho''$ in a neighbourhood of $\rho$, we have
\begin{equation}\label{heyjoe}
dL_{\rho''}= dL_{(0,0)} + R_{\rho''},
\end{equation}
with $R_{\rho''}\leq C' d(\rho,\rho'')$ and $C'$ independent of $\rho'$.

 By possibly enlarging $C_0$, we may assume that $\|\mathsf{L}_\eta\|^{-1}\leq C_0$. We may also assume that $C_0/2$ is larger than the constants $C$ and $C'$ appearing in the bounds on $R_{\rho,\rho'}$ and $R_{\rho''}$.

We will use the previous remarks in the form of the following lemma, which
describes the effect of a change of twisted coordinates on a Lagrangian
manifold.
\begin{lemme}\label{Gates}
Let $\rho,\rho'\in K$ be such that $d(\rho,\rho')<\epsilon$, and let
$\Lambda$ be a Lagrangian manifold which
may be written in the twisted coordinates centred on $\rho$ as $\Lambda=
\{(\tilde{y}_1^\rho,\tilde{\boldsymbol{y}}^\rho;0,\tilde{F}^\rho(\tilde{y}^\rho));
\tilde{y}^\rho\in \tilde{D}_\rho\}$, where $\tilde{D}_\rho\subset
\mathbb{R}^{d}$ is a small open set, and with
$\|d\tilde{F}^\rho\|_{C^0}\leq
\gamma< \frac{1}{4 C_0 \epsilon^\mathsf{p}}$.

Then, $\Lambda\cap \tilde{U}^{\rho'}(\epsilon,\epsilon)$ may be
written in the coordinates centred at $\rho'$ as
\begin{equation*}\Lambda\cap \tilde{U}^{\rho'}(\epsilon,\epsilon)=
\{(\tilde{y}_1^{\rho'},\tilde{\boldsymbol{y}}^{\rho'};0,\tilde{F}^{\rho'}(\tilde{y}^{\rho'}));
\tilde{y}^{\rho'}\in \tilde{D}_{\rho'}\},
\end{equation*} where $\tilde{D}_{\rho'}\subset
\mathbb{R}^{d}$ is a small open set, and with
\begin{equation*}\|d\tilde{F}^{\rho'}\|_{C^0}\leq (
\gamma(1+C_0\epsilon^\mathsf{p})+C_0\epsilon^\mathsf{p})
(1-2\gamma C_0\epsilon^\mathsf{p})^{-1}<\infty.
\end{equation*}
\end{lemme}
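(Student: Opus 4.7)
The strategy is to push $\Lambda$ forward through the change-of-chart map $L:(\tilde{y}^\rho,\tilde{\eta}^\rho)\mapsto(\tilde{y}^{\rho'},\tilde{\eta}^{\rho'})$ and then reparametrize the image as a graph over $\tilde{y}^{\rho'}$ via an inverse-function-theorem argument. First, parametrize $\Lambda$ in the $\rho$-twisted coordinates by $\tilde{y}\mapsto(\tilde{y},G(\tilde{y}))$ with $G(\tilde{y}):=(0,\tilde{F}^\rho(\tilde{y}))$, so that $\|dG\|\le\gamma$. Combining (\ref{molson})--(\ref{evier}) with (\ref{heyjoe}) (and using $d(\rho,\rho')<\epsilon$ together with the fact that $\tilde{U}^{\rho'}(\epsilon,\epsilon)$ has diameter $O(\epsilon)$), at every point in the relevant neighbourhood one has the decomposition $dL=\mathsf{L}+\mathcal{R}$ with $\|\mathcal{R}\|\le C_0\epsilon^{\mathsf{p}}$, where $\mathsf{L}=\mathrm{diag}(U_y,\mathsf{L}_\eta)$ is block-diagonal with $U_y$ unitary and $\mathsf{L}_\eta$ invertible. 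Moreover, since $L$ reduces to the identity when $\rho=\rho'$, the Hölder-continuity argument behind (\ref{molson}) also yields $\|\mathsf{L}_\eta\|\le 1+C_0\epsilon^{\mathsf{p}}$.

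Applying the chain rule to the pushed-forward parametrization $\tilde{y}\mapsto L(\tilde{y},G(\tilde{y}))$ and decomposing $dL$ into its four blocks gives
\[
\frac{d\tilde{y}^{\rho'}}{d\tilde{y}}=U_y+R_1+R_2\cdot dG,\qquad \frac{d\tilde{\eta}^{\rho'}}{d\tilde{y}}=R_3+(\mathsf{L}_\eta+R_4)\cdot dG,
\]
with all $\|R_i\|\le C_0\epsilon^{\mathsf{p}}$. Factoring out $U_y$, the first Jacobian reads $U_y(I+E)$ where $\|E\|\le C_0\epsilon^{\mathsf{p}}(1+\gamma)<1/2$ under the hypothesis $\gamma<(4C_0\epsilon^{\mathsf{p}})^{-1}$ (and $\epsilon$ small), so a Neumann series yields invertibility with $\|(d\tilde{y}^{\rho'}/d\tilde{y})^{-1}\|\le(1-2\gamma C_0\epsilon^{\mathsf{p}})^{-1}$ after absorbing the leading $O(\epsilon^{\mathsf{p}})$ error into the $\gamma$-term. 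Hence $\tilde{y}\mapsto\tilde{y}^{\rho'}$ is a diffeomorphism onto some open $\tilde{D}_{\rho'}\subset\mathbb{R}^d$, and the second component defines $\tilde{F}^{\rho'}$ by $\tilde{\eta}^{\rho'}=(0,\tilde{F}^{\rho'}(\tilde{y}^{\rho'}))$ (the first coordinate is zero since $\tilde{\eta}_1=p-1$ in both charts and $\Lambda\subset\mathcal{E}$). The chain-rule identity
\[
d\tilde{F}^{\rho'}=\frac{d\tilde{\eta}^{\rho'}}{d\tilde{y}}\cdot\Big(\frac{d\tilde{y}^{\rho'}}{d\tilde{y}}\Big)^{-1},
\]
together with the numerator bound $\|d\tilde{\eta}^{\rho'}/d\tilde{y}\|\le C_0\epsilon^{\mathsf{p}}+(1+C_0\epsilon^{\mathsf{p}})\gamma$, delivers the announced $C^0$-estimate. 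Independence of $\tilde{F}^{\rho'}$ from $\tilde{y}_1^{\rho'}$ follows from $\Lambda$ being isoenergetic Lagrangian, exactly as in the discussion preceding (\ref{scotiabank}).

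The main obstacle is the constant tracking: one must dovetail the various Hölder-type errors arising from (\ref{molson})--(\ref{heyjoe}) so that the $\epsilon^{\mathsf{p}}$-errors in the $\tilde{y}$-block repackage into the precise coefficient $2\gamma C_0\epsilon^{\mathsf{p}}$ in the denominator, while the $\eta$-block error is controlled via $\|\mathsf{L}_\eta\|\le 1+C_0\epsilon^{\mathsf{p}}$ to produce the coefficient $\gamma(1+C_0\epsilon^{\mathsf{p}})+C_0\epsilon^{\mathsf{p}}$ in the numerator. Everything else is a routine application of the Neumann series and the chain rule.
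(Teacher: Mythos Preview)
Your argument is the same as the paper's: parametrize $\Lambda$ by $\tilde{y}^\rho$, push it through the transition map $L$, bound the Jacobian $\partial\tilde{y}^{\rho'}/\partial\tilde{y}^\rho$ as a unitary plus a small perturbation, invert by Neumann series, and then read off the slope bound for $\tilde{F}^{\rho'}$ by the chain rule. The paper carries out exactly these four steps in the same order.

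Two remarks on the constant bookkeeping. First, your ``absorbing the leading $O(\epsilon^{\mathsf{p}})$ error into the $\gamma$-term'' to pass from $\|E\|\le C_0\epsilon^{\mathsf{p}}(1+\gamma)$ to the denominator $(1-2\gamma C_0\epsilon^{\mathsf{p}})^{-1}$ is not literally valid when $\gamma<1$; note, however, that the paper writes the very same bound $\|\tilde R\|\le 2\gamma C_0\epsilon^{\mathsf{p}}$ at this spot, so you are matching its (somewhat loose) convention, and in all later uses of the lemma only the qualitative smallness of these coefficients matters (cf.\ condition~(\ref{diamant})). Second, your added observation $\|\mathsf{L}_\eta\|\le 1+C_0\epsilon^{\mathsf{p}}$, obtained from $L=\mathrm{Id}$ when $\rho=\rho'$ together with H\"older continuity, is precisely what the paper uses implicitly to produce the factor $(1+C_0\epsilon^{\mathsf{p}})$ in the numerator; making it explicit is a small improvement in clarity. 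Finally, your closing remark on independence of $\tilde{F}^{\rho'}$ from $\tilde{y}_1^{\rho'}$ is not needed for the lemma as stated (the statement allows dependence on the full $\tilde{y}^{\rho'}$), and one should be careful invoking the Lagrangian argument there since the twisted coordinates are not symplectic; but this does not affect the proof.
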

\begin{proof}
Consider points on $\Lambda$. By assumption, their $\tilde{\eta}^\rho$
coordinate is a function of their $\tilde{y}^\rho$ coordinate. Therefore,
using the map $L$, their coordinates
$(\tilde{y}^{\rho'},\tilde{\eta}^{\rho'})$ may be seen as functions of
$\tilde{y}^\rho$.

Let us denote by $L_y$ and $L_\eta$ the two components of $L$.
 By definition, we have
\begin{equation*}\tilde{y}^{\rho'}=L_y(\tilde{y}^\rho,\tilde{\eta}^\rho)=
L_y(\tilde{y}^\rho,\tilde{F}^\rho(\tilde{y}^\rho)),
\end{equation*}
where $\tilde{F}^\rho(\tilde{y}^\rho)$
satisfies $\big{\|}\frac{\partial \tilde{F}^\rho(\tilde{y}^\rho)}{\partial
\tilde{y}^\rho}\big{\|}\leq \gamma$. Therefore, we have:
\[
\begin{aligned}\frac{\partial \tilde{y}^{\rho'}}{\partial \tilde{y}^{\rho}}
&=\frac{\partial L_y}{\partial \tilde{y}^\rho} +
\frac{\partial \tilde{F}^\rho(\tilde{y}^\rho)}{\partial \tilde{y}^\rho}
\frac{\partial L_y}{\partial \tilde{\eta}^\rho}\\
&= U +\tilde{R},
\end{aligned}
\]
where $U$ is unitary.

By equations (\ref{molson}) and (\ref{heyjoe}), we have $\|\tilde{R}\|\leq 2\gamma C_0
\epsilon^\mathsf{p}<1$ by assumption.
Therefore, $\tilde{y}^\rho\mapsto \tilde{y}^{\rho'}$ is invertible, and we
have $\big{\|}\frac{\partial
\tilde{y}^{\rho}}{\partial\tilde{y}^{\rho'}}\big{\|}\leq (1-2\gamma C_0
\epsilon^\mathsf{p})^{-1}$. We may see $\tilde{\eta}^{\rho'}$ as a function
of $\tilde{y}^{\rho'}$, and we have
\[
\begin{aligned}
\Big{\|}\frac{\partial \tilde{\eta}^{\rho'}}{\partial\tilde{y}^{\rho'}}\Big{\|} &=
\Big{\|}\frac{\partial \tilde{y}^{\rho}}{\partial\tilde{y}^{\rho'}}\frac{\partial
\tilde{\eta}^{\rho'}}{\partial\tilde{y}^{\rho}}+ \frac{\partial
\tilde{y}^{\rho}}{\partial\tilde{y}^{\rho'}}\frac{\partial
\tilde{\eta}^{\rho}}{\partial\tilde{y}^{\rho}} \frac{\partial
\tilde{\eta}^{\rho'}}{\partial\tilde{\eta}^{\rho}}\Big{\|}\\
& \leq (1-2\gamma C_0\epsilon^\mathsf{p})^{-1}(C_0\epsilon^\mathsf{p}+
\gamma(1+C_0\epsilon^\mathsf{p})),
\end{aligned}
\]
and the lemma follows.

\end{proof}

\subsubsection{Propagation for bounded times}
Let us fix a $\nu_1\in (\nu,1)$, where $\nu$ was defined in (\ref{Tata}). Recall that $\mathsf{p}$ was defined in 
(\ref{evier}) as the Hölder exponent of the stable and unstable directions. From now on, we fix an $\epsilon>0$ small
enough so that
\begin{equation} \label{belette}
\frac{\nu+C_0\epsilon^\mathsf{p}}{\nu^{-1}-C_0\epsilon^\mathsf{p}}<\nu_1,
\text{ and   }
\frac{C_0\epsilon^\mathsf{p}}{\nu^{-1}-2C_0 \epsilon^\mathsf{p}} <
\frac{\ins(1-\nu_1)}{8}.
\end{equation}
\begin{equation} \label{diamant}
\big{(}1-\frac{(1+\nu_1)\ins}{1+2C_0\epsilon^\mathsf{p})}C_0\epsilon^{\mathsf{p}}\big{)}^{-1}
\Big{(}\ins \frac{(1+\nu_1)(1+C_0\epsilon^\mathsf{p})}{2+4C_0\epsilon^\mathsf{p}}
+C_0\epsilon^\mathsf{p}\Big{)}<
\frac{\ins}{1+2C_0\epsilon^\mathsf{p}}.\end{equation}
This is possible because $\frac{1+\nu_1}{2}<1$. We also ask that
$C_0\epsilon^\mathsf{p}<1/2$. Note that, although condition (\ref{diamant}) looks horrible, it is designed to work well with Lemma \ref{Gates}.

Let us introduce a first decomposition of the energy layer.
Recall that we defined $\mathcal{W}_0$ in (\ref{frite}) as the external part of the energy layer. We define
 $\close := \{\rho \in \mathcal{E} \backslash \ext ; d(\rho, K) <
\epsilon/2)\}$ for the part of the energy layer close to the trapped set, and
$\inter:= \{\rho \in \mathcal{E} \backslash \ext ; d(\rho, K) \geq
\epsilon/2)\}$ for the intermediate region. See figure \ref{schema} for a
representation of these different sets.
Note that we will later introduce a finer open cover of the energy layer,
using the sets $W_a$ appearing in the statement of the theorem.

\begin{figure}

   \includegraphics[scale=0.4]{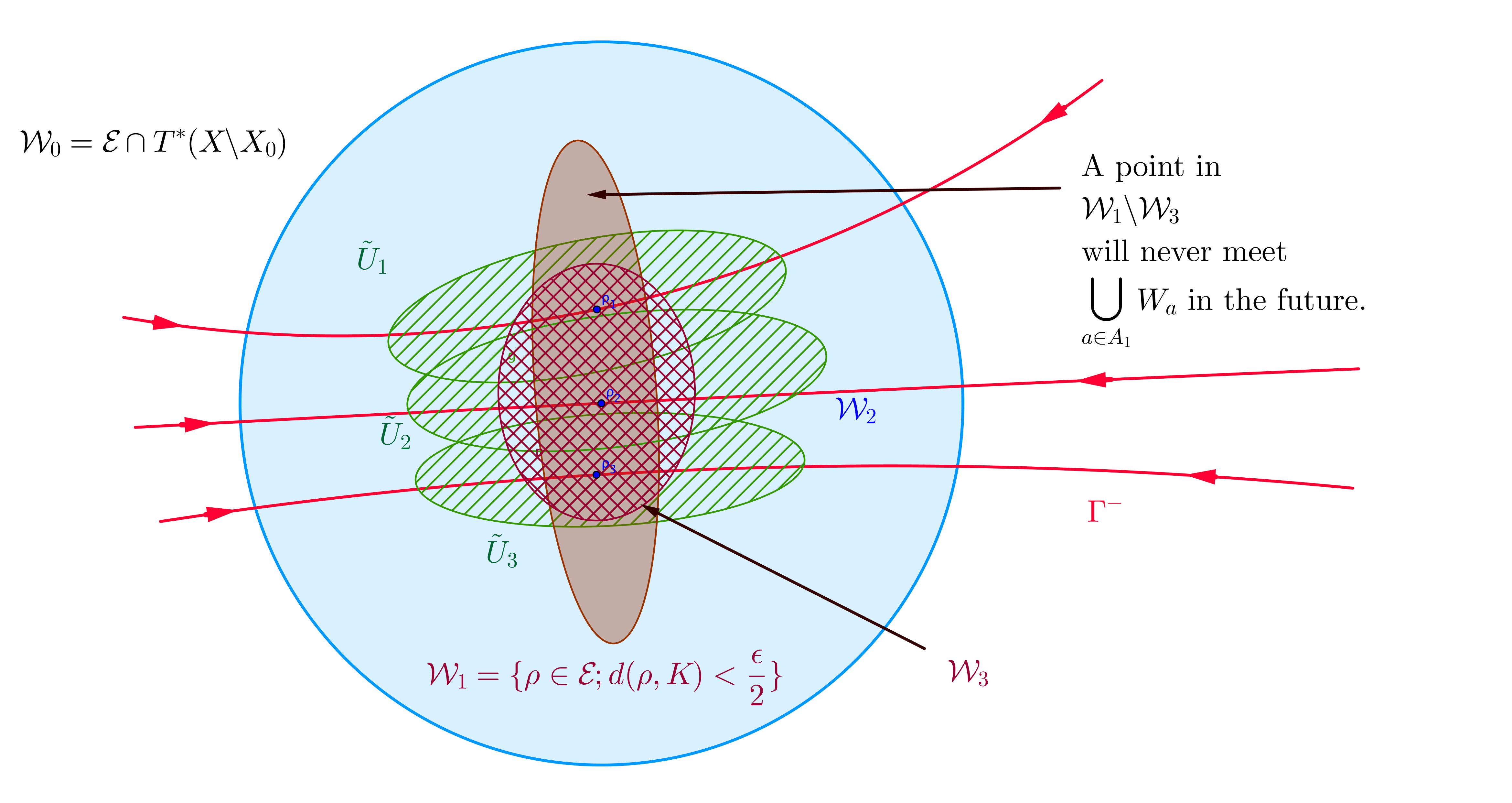}
    \caption{A representation of some of the different sets we introduce
in the proof of Theorem \ref{Cyril}, intersected with a Poincaré section.} \label{schema}
\end{figure}

The following lemma tells us that the set $\inter$ is a transient set,
that is to say, points spend only a finite time inside it.

\begin{lemme} \label{abeille}
There exists $N_\epsilon\in \mathbb{N}$ an integer which depends on
$\epsilon$ such that $\forall \rho \in \inter$, we have either
$\Phi^{\exit} (\rho) \in \ext$ or $\Phi^{-\exit} (\rho) \in \ext$.
\end{lemme}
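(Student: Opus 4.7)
My plan is to argue by contradiction, relying on the compactness of the interaction region together with the geodesic convexity built into Hypothesis~\ref{Guepard}(3). If the conclusion of Lemma~\ref{abeille} failed, then for every $N\in\mathbb{N}$ one could find $\rho_N\in\inter$ such that both $\Phi^{N}(\rho_N)$ and $\Phi^{-N}(\rho_N)$ lie outside $\ext$.

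The first step would be to strengthen this to: $\Phi^{t}(\rho_N)\notin\ext$ for every $t\in[-N,N]$. Consider $f(s):=b(\Phi^{s}(\rho_N))$; it is smooth and satisfies $f(0)\geq \epsilon_0/2$ and $f(\pm N)\geq \epsilon_0/2$. If $f$ dipped below $\epsilon_0/2$ somewhere in $(-N,N)$, one could extract by continuity a sub-interval $(s_1,s_2)$ with $f(s_1)=f(s_2)=\epsilon_0/2$ and $f(s)<\epsilon_0/2$ on $(s_1,s_2)$. The minimum of $f$ on $[s_1,s_2]$ would then be attained at some interior point $s^{*}$, at which $f'(s^{*})=0$ and $f(s^{*})<\epsilon_0$. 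Hypothesis~\ref{Guepard}(3) would give $f''(s^{*})<0$, contradicting $f''(s^{*})\geq 0$ at a local minimum. This is the same convexity mechanism that underlies Lemma~\ref{saladin}.

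Once this first step is in hand, I would conclude by soft compactness. The set $\mathcal{E}\setminus\ext=\mathcal{E}\cap T^{*}X_0$ is compact, since $X_0$ is a compact subset of $\Comp$ and the fibres of $\mathcal{E}$ over $X_0$ are spheres; $\inter$ is closed inside it and hence also compact. Extract a subsequence $\rho_{N_k}\to\rho_\infty\in\inter$; in particular $d(\rho_\infty,K)\geq\epsilon/2$, so $\rho_\infty\notin K$. For any fixed $t\in\mathbb{R}$ and $N_k\geq|t|$, the first step gives $\Phi^{t}(\rho_{N_k})\in\mathcal{E}\setminus\ext$, and continuity of the flow combined with the closedness of $\mathcal{E}\setminus\ext$ yields $\Phi^{t}(\rho_\infty)\in\mathcal{E}\setminus\ext$. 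Hence the whole orbit of $\rho_\infty$ remains in the compact interaction region, so $\rho_\infty\in\Gamma^{+}\cap\Gamma^{-}=K$, which contradicts $d(\rho_\infty,K)\geq\epsilon/2$.

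I expect the only non-routine ingredient to be the convexity argument ruling out re-entry into $\ext$; the remainder is a standard sequential compactness argument. Note that this approach produces $N_\epsilon$ only through a non-constructive extraction, which is sufficient for the statement of the lemma.
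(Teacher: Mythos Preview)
Your argument is correct. The paper takes a somewhat different route: rather than arguing by contradiction and sequential compactness, it invokes the uniform transversality of $\Gamma^{+}$ and $\Gamma^{-}$ near $K$ (a consequence of compactness of $K$) to produce an explicit constant $d_1(\epsilon)>0$ with
\[
d(\rho,\Gamma^{+})+d(\rho,\Gamma^{-})\leq 2d_1 \ \Longrightarrow\ d(\rho,K)\leq \epsilon/2
\]
for $\rho$ in the interaction region. This immediately decomposes $\inter$ as $\{d(\rho,\Gamma^{-})>d_1\}\cup\{d(\rho,\Gamma^{+})>d_1\}$; points in the first set escape in finite forward time, points in the second in finite backward time, and compactness yields the uniform $N_\epsilon$.

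Your approach is more self-contained: it uses only the convexity at infinity (Hypothesis~\ref{Guepard}(3)) and the bare definition of $K$, never touching $\Gamma^{\pm}$ or the hyperbolic structure. The paper's approach, however, manufactures the auxiliary constant $d_1$ as a by-product, and this constant is reused later in the argument (Lemma~\ref{assimil}(ii), Remark~\ref{grantorino}, Lemma~\ref{cardiff}). So while your proof is entirely adequate for Lemma~\ref{abeille} taken in isolation, the paper's version does a bit of extra bookkeeping that pays off downstream.
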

\begin{proof}
This result comes from the uniform transversality of the stable and
unstable manifolds (which is a direct consequence of the compactness of
$K$).

It gives us the existence of a $d_1(\epsilon)>0$ such that, for
all $\rho \in \inter\cup \close$,
\begin{equation*}d(\rho,\Gamma^+)+d(\rho,\Gamma^-)\leq 2d_1 \Rightarrow d(\rho, K)
\leq \epsilon/2.
\end{equation*}
We may therefore write
\begin{equation*}\inter = \{ \rho \in \inter; d(\rho, \Gamma^-)>d_1\} \cup \{ \rho \in
\inter; d(\rho, \Gamma^-)>d_1\}.
\end{equation*}
A point in the first set will leave the interaction region in finite time
in the future, while a point in the second set will leave it in finite
time in the past. By compactness, we can find a uniform $N_\epsilon$ as
the one in the statement of the lemma.
\end{proof}

The following lemma is a consequence of the transversality assumption we made. It tells us that when we propagate $\Lag$ during a finite time $N$ and restrict it to a small set $\tilde{U}^\rho(\epsilon,\varrho)$ close to the trapped set, we obtain a finite union of Lagrangian manifolds in the alternative coordinates. Here, the size $\varrho$ of the set in the unstable direction depends on $N$, but its size $\epsilon$ in the stable direction does not.

\begin{lemme} \label{Ukraine}
Let $N\in \mathbb{N}$.
There exists $\mathcal{N}_N\in \mathbb{N}$, $\tilde{\varrho}_N>0$ and
$\tilde{\gamma}_N>0$ such that $\forall
0<\varrho\leq\tilde{\varrho}_N$, $\forall \rho \in
K$, $\forall 1\leq t \leq N$, $\Phi^{t}
(\mathcal{L}_0)\cap \tilde{U}^{\rho}(\epsilon,\varrho)$ can be
written in the coordinates $(\tilde{y}^{\rho},\tilde{\eta}^{\rho})$ as
the
union of at most $\mathcal{N}_N$ disjoint Lagrangian manifolds, which are all
$\tilde{\gamma}_N$-unstable :
\begin{equation*}\Phi^{t} (\mathcal{L}_0)\cap
\tilde{U}^{\rho}(\epsilon,\varrho)\equiv
\bigcup_{l=0}^{l(\varrho)} \hat{\Lambda}_{l}, 
\end{equation*}
with $l(\varrho)\leq \mathcal{N}_N$ and
\begin{equation*}\hat{\Lambda}_{l}=
\{(\tilde{y}_1^{\rho},\tilde{\boldsymbol{y}}^{\rho};0,f^{l}(\tilde{y}^{\rho})),
\tilde{\boldsymbol{y}}^{\rho}\in
D_{\varrho}\},
\end{equation*}
for some smooth functions $f^{l}$ with
$\|df^{l}(\tilde{y}^{\rho})\|_{C^0(D_{\epsilon})}\leq
\tilde{\gamma}_N$.
\end{lemme}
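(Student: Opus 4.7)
The plan is to reduce the statement to a transversal intersection argument at each point of $K$ and each time $t\in\{1,\ldots,N\}$, and then to pass to uniform constants by a compactness argument on $K\times\{1,\ldots,N\}$.

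Fix first $\rho\in K$ and $t\in\{1,\ldots,N\}$. I begin by identifying the set $\Phi^{t}(\Lag)\cap W_{loc}^{-}(\rho)$. Since $W_{loc}^{-}(\rho)$ is a bounded piece of the strongly stable manifold and $\Phi^{-t}$ is a diffeomorphism, $\Phi^{-t}(W_{loc}^{-}(\rho))$ is bounded in $T^{*}X$, and its intersection with the Lagrangian $\Lag$ is a closed bounded set. By the Transversality Hypothesis \ref{Happy} this intersection is transverse at each of its points, hence $0$-dimensional, hence finite. Pushing forward by $\Phi^{t}$, the set $\Phi^{t}(\Lag)\cap W_{loc}^{-}(\rho)$ consists of finitely many points $\hat\rho_{1},\ldots,\hat\rho_{m(\rho,t)}$. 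In the twisted coordinates of Section \ref{Alt}, $W_{loc}^{0-}(\rho)$ coincides with $\{\tilde{\boldsymbol{y}}^{\rho}=0\}$ inside $\tilde U^{\rho}(\epsilon,\epsilon)$, and transversality means that near each $\hat\rho_{j}$ the manifold $\Phi^{t}(\Lag)\subset\mathcal{E}$ is parametrized as a smooth graph $\tilde{\boldsymbol{\eta}}^{\rho}=f_{j}(\tilde y^{\rho})$ over the coordinate $\tilde y^{\rho}$. Choosing $\varrho$ sufficiently small, $\Phi^{t}(\Lag)\cap\tilde U^{\rho}(\epsilon,\varrho)$ consists exactly of the union of these graphs restricted to $\tilde{\boldsymbol{y}}^{\rho}\in D_{\varrho}$, which is the desired decomposition; the slope is bounded by some $\gamma(\rho,t)<\infty$ encoding the angle of the transverse intersection.

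The second step is to uniformize in $(\rho,t)$. Since $\Phi^{t}$ is smooth and the twisted coordinates depend H\"older-continuously on $\rho\in K$, via the H\"older regularity of the stable/unstable bundles (cf.\ equation~(\ref{evier})), the local graph description is stable under small perturbations of $\rho$: such perturbations produce a small displacement of each $\hat\rho_{j}$ and a small variation of each slope $\|df_{j}\|_{C^{0}}$, but no new intersection points can appear because the existing ones are transverse. Thus $(\rho,t)\mapsto m(\rho,t)$ and $(\rho,t)\mapsto\gamma(\rho,t)$ are upper semicontinuous, and the maximal $\varrho$ for which the description holds is lower semicontinuous. By compactness of $K$ and finiteness of $\{1,\ldots,N\}$, I take $\mathcal{N}_{N}$, $\tilde\gamma_{N}$, and $\tilde\varrho_{N}$ to be the appropriate suprema and infimum of these three functions.

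The main obstacle will be to rule out the appearance of tangential contacts in the compactness limit: a priori a sequence $(\rho_{n},t_{n})\to(\rho,t)$ could have intersection points coalescing, drifting to the boundary of the domain, or escaping to infinity along $\Lag$. This is prevented by the \emph{global} form of Hypothesis \ref{Happy}, which asserts transversality uniformly at every point of $K$ and every $t\geq 0$, together with the fact that the relevant portion $\Phi^{-t}(W_{loc}^{-}(\rho))\cap\Lag$ stays in a fixed compact set uniformly in $(\rho,t)\in K\times\{1,\ldots,N\}$, since $K$ is compact and $\Phi^{-t}$ has uniformly bounded distortion on compact sets. The H\"older (rather than smooth) dependence of the twisted coordinates on $\rho$ is harmless because we only need uniform upper bounds on the number of pieces and on their slopes.
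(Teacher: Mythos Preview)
Your proposal is correct and follows the same line as the paper's proof: fix $(\rho,t)$, invoke the transversality Hypothesis~\ref{Happy} to write $\Phi^t(\Lag)$ locally as graphs over the unstable direction near the stable manifold, then pass to uniform constants by compactness of $K$ and finiteness of $\{1,\ldots,N\}$. The paper's argument is terser: it notes transversality, asserts the local projection onto $\{\tilde{\boldsymbol{\eta}}^\rho=0\}$, invokes compactness of $K$ for the uniform $\varrho,\gamma$, and then compactness of the box $\tilde U^\rho(\epsilon,\varrho)$ for finiteness of the pieces. Your version supplies the semicontinuity reasoning explicitly and counts the pieces via the $0$-dimensional transverse intersection $\Phi^t(\Lag)\cap W^-_{loc}(\rho)$, which is more informative; conversely, the paper's compactness-of-the-box argument for finiteness sidesteps the need to match intersection points with leaves over the full $\tilde y_1$-range $(-\epsilon,\epsilon)$, a point your argument leaves implicit.
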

\begin{proof}
Let us consider a $1\leq t \leq N$.
First of all, $\Phi^{t}$ being a symplectomorphism, it sends Lagrangian
manifolds to Lagrangian manifolds. The restriction of a Lagrangian
manifold to a
region of phase space is a union of Lagrangian manifolds.

We now have to prove that, if we take $\varrho$ small enough, these
Lagrangian manifolds are all
$\tilde{\gamma}_N$
unstable, for some $\tilde{\gamma}_N>0$ which is independent of $\rho$.

Let $\rho\in K$. By
hypothesis, $W^-_{loc}(\rho)$ and $\Phi^t(\Lag)$ are transverse when they intersect.

Therefore, in a small neighbourhood of the stable manifold
$\{\tilde{\boldsymbol{y}}^\rho=0\}$, each connected component of $\Phi^t(\Lag)$ may be projected smoothly on the twisted unstable
manifold $\{\tilde{\boldsymbol{\eta}}^\rho=0\}$. That is to say, there exists a $\varrho>0$ and a $\gamma>0$ such that each connected component of
$\Phi^{t} (\Lag)\cap \tilde{U}^\rho(\epsilon,\varrho)$ is $\gamma$-unstable
in the twisted coordinates around $\rho$, for some $\gamma>0$.

Now, since the changes of coordinates between twisted coordinates are
continuous, we may use the compactness of $K$ to find uniform constants
$\varrho>0$ and $\gamma>0$ such that each connected component of
$\Phi^{t} (\Lag)\cap \tilde{U}^\rho(\epsilon,\varrho)$ is $\gamma$-unstable
in the twisted coordinates around $\rho$, independently of $\rho\in K$
and $1\leq t \leq N$.

By compactness of $\tilde{U}^{\rho}(\epsilon,\varrho)$, the number of Lagrangian manifolds making up $\Phi^{t}(\Lag) \cap
\tilde{U}^{\rho}(\epsilon,\varrho)$ is finite. This concludes the proof of the lemma.
\end{proof}

Applying this lemma to $N=\exit+2$, we define the following constants,
which we shall need later in the proof (recall that $\ins$ has been fixed).
\begin{equation}\label{diagnostic}
(\gamma_0,\varrho_0):=(\tilde{\gamma}_{\exit+2},\tilde{\varrho}_{\exit+2})
\end{equation}

\begin{equation}\label{rentree}
\wait:= \Big{\lfloor}\frac{\log(\ins/4\gamma_0)}{\log \big{(} (1+
\nu_1)/2 \big{)}}\Big{\rfloor}+1,
\end{equation}
\begin{equation*}\tins:= \wait+\exit+2,
\end{equation*}
\begin{equation} \label{theatre}
\varrho_1:= \min\big{(}\frac{\epsilon}{2\gamma_0},\varrho_0\big{)}, ~~~~~
\varrho_2:=\min \Big{(}
\big{(}\mathcal{C}+C_0\epsilon^\mathsf{p}\big{)}^{-\tins}
\varrho_1, \tilde{\delta}_{\tins}\Big{)}
\end{equation}
where $\mathcal{C}$
comes from Remark \ref{Collot}, and $C_0$ comes from equation
(\ref{buis}).
\begin{remarque}
As explained in Lemma \ref{abeille}, $\exit$ is the maximal time spent by a trajectory in the intermediate region $\inter$. The time $\wait$ will be the time necessary to incline a $\gamma_0$-unstable Lagrangian manifold to a $\ins$-unstable Lagrangian manifold, as explained in Proposition \ref{certifie}. As for the constant $\varrho_2$, it has been chosen small enough so that at each step of the aforementioned propagation during a time $\wait$, the Lagrangian manifolds we consider are contained in a single coordinate chart, as explained in Proposition \ref{certifie}.
\end{remarque}

\begin{remarque}
The constant $\secur$ in Theorem \ref{Cyril} will depend only on $\gamma_0$ and $\varrho_0$. Therefore, the proof of Lemma \ref{Ukraine} tells us that if we consider a whole family of Lagrangian manifolds $(\mathcal{L}_z)_{z\in Z}$ satisfying Hypothesis \ref{chaise} and Hypothesis \ref{Happy}, we will be able to find an $\secur\bel 0$ uniform in $z\in Z$ provided we have the following uniform transversality condition:
\begin{equation}\label{UT}
\forall t\in N, \forall\rho\in K, \exists \delta, \gamma \bel 0 \text{ such that } \forall z\in Z,~ \Phi^t(\mathcal{L}_z)\cap \tilde{U}^\rho(\epsilon,\delta) \text{ is } \gamma-\text{unstable}.
\end{equation}
\end{remarque}

\begin{lemme}\label{assimil}
There exists a neighbourhood $\set$ of $\Gamma^-\cap \close$ in
$\mathcal{E}$,
 a finite set of points $(\rho_i)_{i\in I}\subset K$ and
$0<\epsilon_1<\varrho_1,$ such that the following holds.

(i) The sets $\big{(}\tilde{U}_i \big{)}_{i\in I}
:=\big{(}\tilde{U}^{\rho_i}(\epsilon,\varrho_{2})\big{)}_{i\in I}$ form an
open cover of a neighbourhood of $\set$.

(ii)
$\rho\in \big{[}\close \backslash \set\big{]}\cup \{\rho'\in \inter;
d(\rho',\Gamma^-)>d_1\} \Longrightarrow  \forall t\geq 0, ~
d(\Phi^t(\rho)),K)\geq\epsilon_1.
$

(iii) For any open set $W$ of diameter $<\epsilon_1$ included in $\set$,
there exists an $i\in I$ such
that $W\subset
\tilde{U}_i$.
\end{lemme}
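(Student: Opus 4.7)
The plan is to build the finite family $(\rho_i)_{i\in I}$ and the open set $\set$ by a compactness argument, then verify property (ii) by contradiction using the escape behaviour of the flow, and property (iii) by a Lebesgue number argument. I treat the three conclusions in order.

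For (i), the key observation is that every $\rho_0\in \Gamma^-\cap \close$ lies on the local stable manifold $W^-_{loc}(\rho)$ of some $\rho\in K$. Indeed, $\rho_0\in \close$ gives $d(\rho_0,K)<\epsilon/2$, while $\rho_0\in \Gamma^-$ forces the forward orbit of $\rho_0$ to remain bounded, so the local stable manifold theorem for the hyperbolic set $K$ places $\rho_0$ on $W^-_{loc}(\rho)$ for some $\rho\in K$ close to $\rho_0$. In the twisted coordinates centred at $\rho$ this reads $\tilde{\boldsymbol{y}}^{\rho}(\rho_0)=0$, whence $\rho_0\in \tilde{U}^{\rho}(\epsilon,\varrho_2/2)$ and this set is an open neighbourhood of $\rho_0$. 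Since $K$ is compact and $\close$ is pre-compact in $\mathcal{E}$, the set $\Gamma^-\cap \overline{\close}$ has compact closure and we extract a finite subcover by sets $\tilde{U}^{\rho_i}(\epsilon,\varrho_2/2)$ with $\rho_i\in K$. Set $\tilde{U}_i:=\tilde{U}^{\rho_i}(\epsilon,\varrho_2)$ and define $\set$ to be an open neighbourhood of $\Gamma^-\cap \overline{\close}$ whose closure lies in $\bigcup_i \tilde{U}_i$; this proves (i).

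For (ii), argue by contradiction: if no $\epsilon_1$ works, one can find sequences $\epsilon_1^{(n)}\to 0$, $\rho_n$ in the forbidden set, and $t_n\geq 0$ with $d(\Phi^{t_n}(\rho_n),K)<\epsilon_1^{(n)}$. In the intermediate case $\rho_n\in \inter$ with $d(\rho_n,\Gamma^-)>d_1$, Lemma \ref{abeille} gives $\Phi^{\exit}(\rho_n)\in \ext$, after which Lemma \ref{saladin} and part (3) of Hypothesis \ref{Guepard} force the orbit to remain in $\mathcal{DE}_+$, hence uniformly away from $K$, contradicting the assumption for $n$ large. In the close case $\rho_n\in \close\setminus \set$, pre-compactness of $\close$ yields a subsequential limit $\rho_\infty \in \overline{\close}\setminus \set$ (since $\set$ is open). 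If $t_n$ is bounded we pass to the limit and obtain $\Phi^{t_\infty}(\rho_\infty)\in K$, so $\rho_\infty\in \Gamma^-$. If $t_n\to\infty$ we suppose $\rho_\infty\notin \Gamma^-$; then $\Phi^T(\rho_\infty)\in \ext$ for some $T$, and by Lemma \ref{saladin} combined with Hypothesis \ref{Guepard}(3) the orbit stays in $\mathcal{DE}_+$ for all $t\geq T$. By continuity the same is true of $\Phi^t(\rho_n)$ for $n$ large and $t\geq T$, contradicting $d(\Phi^{t_n}(\rho_n),K)<\epsilon_1^{(n)}$ once $t_n>T$. In either case $\rho_\infty\in \Gamma^-\cap \overline{\close}\subset \set$, contradicting $\rho_\infty\notin \set$.

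For (iii), the compact set $\overline{\set}$ is covered by the open family $(\tilde{U}_i)_{i\in I}$, so the Lebesgue number lemma provides some $\epsilon_1'>0$ such that every subset of $\set$ of diameter less than $\epsilon_1'$ is contained in one of the $\tilde{U}_i$. Taking the final $\epsilon_1$ to be the minimum of $\varrho_1$, the value extracted in (ii), and $\epsilon_1'$ yields all three properties simultaneously. The main obstacle is the close case of step (ii) when $t_n\to\infty$: it relies on Lemma \ref{saladin} and the convexity estimate at infinity to rule out orbits that escape and return near $K$, which is what permits $\set$ to be a single open neighbourhood of $\Gamma^-\cap \close$ rather than having to be refined dynamically along the flow.
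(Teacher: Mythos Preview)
Your proof is correct and follows essentially the same route as the paper: a compactness argument to extract the finite cover $(\tilde{U}_i)_{i\in I}$ and the neighbourhood $\set$, then the escape property for (ii), then a Lebesgue-number argument for (iii). The main difference is that the paper is extremely terse for (ii), simply observing that points of $\close\setminus\set$ are at positive distance from $\Gamma^-$ and asserting the existence of $\epsilon_1$, whereas you supply the contradiction argument (using Lemma~\ref{abeille}, Lemma~\ref{saladin} and the convexity hypothesis) that actually justifies this step; your enlargement from $\tilde{U}^{\rho_i}(\epsilon,\varrho_2/2)$ to $\tilde{U}^{\rho_i}(\epsilon,\varrho_2)$ is also a clean way to guarantee $\overline{\set}\subset\bigcup_i\tilde{U}_i$ for the Lebesgue-number step, which the paper leaves implicit.
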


\begin{proof}
The sets $\big{(}\tilde{U}^\rho(\epsilon,\varrho_{2})\big{)}_{\rho\in K}$
form an open cover of a neighbourhood of $(\Gamma^-\cap \close)$. Let us
denote by $\set$ such a neighbourhood.

By compactness, we may extract from it a
finite open cover $\big{(}\tilde{U}_i \big{)}_{i\in I}
:=\big{(}\tilde{U}^{\rho_i}(\epsilon,\varrho_{2})\big{)}_{i\in I}$, which
still satisfies (i).

Since $\set$ is a
neighbourhood of $\Gamma^-\cap \close$,
there exists a constant $\varrho_2'>0$ such that the
following holds:
\begin{equation*}\forall\rho\in \close \backslash \set\text{, we have } d(\rho,
\Gamma^-)>\varrho_2'.
\end{equation*}

Therefore, there exists $0<\epsilon_1<\min(\varrho_1,\epsilon)$ such
that
\begin{equation*}
\rho\in \big{[}\close \backslash \set \big{]}\cup \{\rho'\in \inter;
d(\rho',\Gamma^-)\geq d_1\} \Longrightarrow  \forall t\geq 0, ~
d(\Phi^t(\rho)),K)>\epsilon_1,
\end{equation*}
which is (ii). Finally, since the set $\tilde{U}_i$ are open, we may shrink $\epsilon_1$ so that (iii) is satisfied.
\end{proof}

\begin{remarque} \label{grantorino}
The constant $\secur$ appearing in Theorem \ref{Cyril} will be smaller
that $\epsilon_1$ (see Lemma \ref{torticoli}), therefore each of the sets
$(W_a)_{a\in A_1}$  will be contained in some $\tilde{U}_i$.
Furthermore, we will have  $W_a\subset
\{\rho\in \mathcal{E}; d(\rho,K)<\secur\}$. Hence, a point
$\rho\in \big{[}\close \backslash \set \big{]}\cup \{\rho'\in \inter;
d(\rho',\Gamma^-)\geq d_1\}$ will not be contained in any of
the sets $(W_a)_{a\in A_1}$ when propagated in the future.
\end{remarque}

Lemma \ref{Ukraine} tells us that $\Phi^{N_\epsilon}(\Lag)\cap
\tilde{U}_i$ consists of finitely many
$\gamma_0$-unstable Lagrangian manifolds.
Our aim will now be to take a Lagrangian manifold included in a $
\tilde{U}_{i_1}$, to propagate it during some time $N\geq N_1$, then to
restrict it to a $\tilde{U}_{i_2}$, for
$i_1,i_2\in I$. The remaining part of the Lagrangian, which is
in $\close \backslash \set$, will not meet the sets
$(W_a)_{a\in A_1}$ when propagated in the future, as explained in Remark \ref{grantorino}.

\subsubsection{Propagation in the sets $\tilde{U}_i$} \label{Poutine}
For $N\in \mathbb{N}$ and
$\iota=(i_0i_1...i_{N-1})\in I^{N}$, we define
\begin{equation*}\Phi_{\iota}(\Lambda):= \Phi^1(\tilde{U}_{i_{N-1}}\cap
\Phi^1(...\Phi^1(\tilde{U}_{i_0}\cap\Lambda)...)).
\end{equation*}

The propagation of Lagrangian manifolds in the sets $\tilde{U}_i$ is
described in the following proposition,
which is the cornerstone of the proof of Theorem \ref{Cyril}.
Recall that $\ins$ was chosen arbitrarily at the beginning of the proof, and that $\wait$ was defined in (\ref{rentree}).

\begin{proposition}\label{certifie}
Let $N\geq \wait$, $\iota=(i_0i_1...i_{N-1})\in I^{N}$ and $i\in I$. Let
$\Lambda^0\subset \tilde{U}_{i_0}$ be an isoenergetic Lagrangian manifold
which is $\gamma_0$-unstable in the twisted coordinates centred on
$\rho_{i_0}$.
Then $\tilde{U}_i\cap\Phi_\iota (\Lambda)$ is a Lagrangian manifold
contained in
$\tilde{U}_{i}$, and it is
$\frac{\ins}{(1+2C_0\epsilon^\mathsf{p})^2}$-unstable in the
twisted coordinates centred on
$\rho_{i}$.
\end{proposition}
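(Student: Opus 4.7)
The plan is to proceed by induction on $N$, tracking at each step the slope $\gamma_k$ of the current piece of the propagated Lagrangian in the twisted coordinates centred at $\rho_{i_k}$, viewed as a graph $\tilde{\boldsymbol{\eta}}=\tilde F_k(\tilde{\boldsymbol{y}})$ with $\|d\tilde F_k\|_{C^0}\leq\gamma_k$. The hypothesis on $\Lambda^0$ gives $\gamma_0$ as the initial slope, and the target is $\gamma_N\leq\ins/(1+2C_0\epsilon^{\mathsf{p}})^2$.

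One iteration splits into two sub-steps. First, apply the Poincaré map $\kappa_{\rho_{i_k}}$ of (\ref{Claire}): using (\ref{fourmi}), (\ref{buis}) and the key cancellation $\tilde\alpha(0,\cdot)=\tilde\beta(\cdot,0)=0$, a direct linearization shows that the image graph, expressed in the twisted coordinates at $\Phi^1(\rho_{i_k})$, has slope
\begin{equation*}
\gamma_k' \leq \frac{\nu\,\gamma_k + C_0\epsilon(1+\gamma_k)}{\nu^{-1} - C_0\epsilon(1+\gamma_k)},
\end{equation*}
with essential contracting factor $\nu^2$. Second, since the open cover $(W_a)_{a\in A_1}$ has diameter at most $\secur$, the base points $\Phi^1(\rho_{i_k})$ and $\rho_{i_{k+1}}$ lie within distance $O(\epsilon)$, and Lemma \ref{Gates} converts $\gamma_k'$ into the slope $\gamma_{k+1}$ in the twisted coordinates at $\rho_{i_{k+1}}$ via
\begin{equation*}
\gamma_{k+1} \leq \frac{\gamma_k'(1+C_0\epsilon^{\mathsf{p}}) + C_0\epsilon^{\mathsf{p}}}{1 - 2\gamma_k' C_0\epsilon^{\mathsf{p}}}.
\end{equation*}
Composing these two bounds and invoking (\ref{belette}), one obtains an affine recursion $\gamma_{k+1}\leq\tfrac{1+\nu_1}{2}\gamma_k + \delta(\epsilon)$ with $\delta(\epsilon)=O(\epsilon^{\mathsf{p}})$, whose fixed point is below $\ins/4$ by the second estimate in (\ref{belette}).

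Iterating gives $\gamma_k\leq\bigl(\tfrac{1+\nu_1}{2}\bigr)^k\gamma_0 + \tfrac{2\delta(\epsilon)}{1-\nu_1}$; by the very definition (\ref{rentree}) of $\wait$, the first term drops below $\ins/4$ as soon as $k\geq\wait$, so that $\gamma_{N-1}$ is comfortably below $\ins/(1+2C_0\epsilon^{\mathsf{p}})$. The final passage from the twisted coordinates at $\Phi^1(\rho_{i_{N-1}})$ to those at $\rho_i$ (one more application of Lemma \ref{Gates}) then produces the bound $\ins/(1+2C_0\epsilon^{\mathsf{p}})^2$ required by the statement; this is precisely the content of the algebraic inequality (\ref{diamant}), engineered so that Lemma \ref{Gates} applied to $\gamma=\tfrac{(1+\nu_1)\ins}{2(1+2C_0\epsilon^{\mathsf{p}})}$ stays below the target. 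In parallel, one must verify that the propagated piece remains inside the tube $\tilde U_{i_k}(\epsilon,\varrho_2)$ throughout, so that the Poincaré map and the coordinate changes are well defined; this is the role of the choice (\ref{theatre}) of $\varrho_2$, since each step expands the unstable extent by a factor at most $\mathcal{C}+C_0\epsilon^{\mathsf{p}}$, and after $\tins$ steps the extent still fits inside $\varrho_1\leq\epsilon/(2\gamma_0)$, while the graph bound on $\gamma_k$ keeps us inside the stable tube of size $\epsilon$. I expect the main difficulty to be the bookkeeping of these three coupled estimates — Poincaré map, coordinate change, and chart containment — especially checking that (\ref{diamant}) closes the induction with exactly the denominator $(1+2C_0\epsilon^{\mathsf{p}})^2$ demanded by the statement.
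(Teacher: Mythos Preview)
Your outline differs from the paper's argument in one structural respect, and that difference matters for whether the given constants close the induction.

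The paper does \emph{not} change twisted coordinates at every step. Its proof has two phases. In the first phase (Lemma \ref{Benjamin}) it runs the pure Poincar\'e iteration $\Lambda^0 \mapsto \Phi^k(\Lambda^0)$ for $k\le N_1$, expressing everything in the twisted charts centred on the \emph{orbit points} $\Phi^k(\rho_{i_0})$; because these charts follow the flow, no application of Lemma \ref{Gates} is needed and one obtains the clean recursion $\gamma_{k+1}\le\nu_1\gamma_k+\tfrac{(1-\nu_1)\ins}{8}$. The containment argument you sketch --- that the unstable extent grows by at most $(\mathcal{C}+C_0\epsilon^{\mathsf p})^k$ and stays below $\varrho_1$ by the choice (\ref{theatre}) --- is exactly what justifies staying inside a single orbit-chart throughout this phase; but note that in \emph{your} scheme the intersection with $\tilde U_{i_k}$ already trims the unstable extent back to $\varrho_2$ at each step, so that argument belongs to the paper's setup, not yours. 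Only once the slope has been driven below $\tfrac{(1+\nu_1)\ins}{4}$ does the paper invoke Lemma \ref{Gates}, and (\ref{diamant}) is calibrated precisely to that small input.

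Your step-by-step scheme requires Lemma \ref{Gates} at every $k$, including while $\gamma_k$ is still of order $\gamma_0$. Two things then need checking that the constants in (\ref{belette})--(\ref{diamant}) do not provide. First, the hypothesis $\gamma<\tfrac{1}{4C_0\epsilon^{\mathsf p}}$ of Lemma \ref{Gates}: since $\epsilon$ was fixed \emph{before} $\gamma_0$ was determined (via $N_\epsilon$ and Lemma \ref{Ukraine}), there is no a~priori guarantee that $\gamma_0<\tfrac{1}{4C_0\epsilon^{\mathsf p}}$. Second, the composed one-step map acquires an extra multiplicative factor $(1+C_0\epsilon^{\mathsf p})(1-2\gamma_k C_0\epsilon^{\mathsf p})^{-1}$ and an extra additive $C_0\epsilon^{\mathsf p}$ at every step; when $\gamma_k\sim\gamma_0$ the multiplicative factor need not be close to $1$, so the contraction rate $\tfrac{1+\nu_1}{2}$ you claim does not follow from (\ref{belette}). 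The paper's two-phase structure is precisely what sidesteps both issues. Your approach could plausibly be repaired by imposing one further smallness condition on $\epsilon$ in terms of a uniform upper bound for $\gamma_0$, but as written it does not close with the constants the paper has already fixed.
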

\begin{proof}
The first part of the proof consists in understanding how
$\Phi^n(\Lambda^0)$ behaves for $n\leq \wait$, in the twisted coordinates
centred on $\rho_{i_0}$. This is the content of the following lemma, which
is an adaptation to our context of the
“Inclination lemma” (See \cite[Theorem 6.2.8]{KH}; see also \cite[Proposition 5.1]{NZ} for a statement closer to our context and notations).
\begin{lemme} \label{Benjamin}
$\Phi^{\wait} (\Lambda^0)$ is a Lagrangian manifold,
which can be
written in the chart $(\tilde{y}^{\Phi^{\wait}(\rho_{i_0})} ,
\tilde{\eta}^{\Phi^{\wait}(\rho_{i_0})} )$ in the form:
\begin{equation*}\Phi^{\wait} (\Lambda^0)\equiv
\{(\tilde{y}_1^{\Phi^{\wait}(\rho_{i_0})},\tilde{\boldsymbol{y}}^{\Phi^{\wait}(\rho_{i_0})};0,f^{\wait}(\tilde{\boldsymbol{y}}^{\Phi^{\wait}(\rho_{i_0})})),
\tilde{\boldsymbol{y}}^{\Phi^{\wait}(\rho_{i_0})}\in D_{\wait}\},
\end{equation*}
with $D_{\wait}\subset B(0,\varrho_1)$ and
$\|df^{\wait}\|_{C^0(D_k)}\leq
\frac{(1+\nu_1)\ins}{4}$.
\end{lemme}
Note that $\Phi^{\wait} (\Lambda^0)$ is a priori not contained in a single
set $\tilde{U}_i$, but the lemma states that it is contained in the set
$\tilde{U}^{\Phi^{\wait}(\rho_{i_0})}(\epsilon,\varrho_1)$, where the
twisted coordinates are well defined.
\begin{proof}

By assumption, $\Lambda^0$ may be put in the form
\begin{equation*}\Lambda^0\equiv
\{(\tilde{y}_1^{\rho_{i_0}},\tilde{\boldsymbol{y}}^{\rho_{i_0}};0,f^0(\tilde{\boldsymbol{y}}^{\rho_{i_0}})),
|\tilde{\boldsymbol{y}}^{\rho_{i_0}}|<\varrho_2\},~~~~
\text{with  } \|df^0(\tilde{\boldsymbol{y}}^{\rho_{i_0}})\|_{C^0}\leq
\gamma_0.
\end{equation*}

We will consider restrictions of the Lagrangian manifolds at intermediate
times to the Poincaré sections centred at $\Phi^k(\rho_{i_0})$:
\begin{equation*}\Lambda^k_{sec} := \Phi^k (\Lambda^0) \cap
\Sigma^{\Phi^k(\rho_{i_0})}(\epsilon,\varrho_0).
\end{equation*}
We have $\Lambda_{sec}^{k+1}= \kappa^k (\Lambda_{sec}^{k})$, where
$\kappa^k:= \kappa_{\Phi^k(\rho_{i_0}),\Phi^{k+1}(\rho_{i_0})}$ is of the form
(\ref{Claire}).
From the equation (\ref{Claire}) and the definition of $\mathcal{C}$, we
see that the maximal rate of expansion in
the unstable direction is bounded by $(\mathcal{C}+C_0\epsilon^\mathsf{p})$.
Therefore, the definition of $\varrho_2$ implies
that for any $k\leq \wait$, the projection of $\Lambda^k_{sec}$
on the
unstable direction is supported in $B(0,\varrho_1)$.

To lighten the notations, we will write $\tilde{\boldsymbol{y}}^k$ and $\tilde{\boldsymbol{\eta}}^k$
instead of $\tilde{\boldsymbol{y}}^{\Phi^k(\rho_{i_0})}$ and $\tilde{\boldsymbol{\eta}}^{\Phi^k(\rho_{i_0})}$.

Let $k \geq 0$, and suppose we may write
\begin{equation*}\Lambda^k_{sec}\equiv \{(\tilde{\boldsymbol{y}}^k,f^k(\tilde{\boldsymbol{y}}^k)); \tilde{\boldsymbol{y}}^k\in
D_k \},
\end{equation*}
where $D_k\subset B(0,\varrho_1)$, and $\|df^{k}\|_{C^0}\leq \gamma_k$ for
some $0<\gamma_{k}\leq \gamma_0$.

Note that the key point in the following computations is that, since we
have chosen “alternative”
coordinates, we have $|\partial_{\boldsymbol{\eta}}\tilde{\alpha}^k(\tilde{\boldsymbol{y}}^k,\tilde{\boldsymbol{\eta}}^k)|\leq C_0 \tilde{\boldsymbol{y}}^k\leq C_0
\varrho_1$.

The projection of $\Phi^1_{|\Lambda^k_{sec}}$ on the horizontal subspace
reads
\begin{equation*}\tilde{\boldsymbol{y}}^k \mapsto \tilde{\boldsymbol{y}}^{k+1}= \pi \Phi^1
(\tilde{\boldsymbol{y}}^k,f^k(\tilde{\boldsymbol{y}}^k))= A_k \tilde{\boldsymbol{y}}^k+
\tilde{\alpha}^k(\tilde{\boldsymbol{y}}_k, f^k(\tilde{\boldsymbol{y}}_k)),
\end{equation*}
where for each $k$, $A_k$ is a matrix as in (\ref{fourmi}). 

By differentiating, we obtain :

\begin{equation*}\frac{\partial \tilde{\boldsymbol{y}}^{k+1}}{\partial \tilde{\boldsymbol{y}}^k}= A_k +
\frac{\partial \tilde{\alpha}^k}{\partial \tilde{\boldsymbol{y}}^k} + \frac{\partial
\tilde{\alpha}^k}{\partial \tilde{\boldsymbol{\eta}}^k}\frac{\partial f_k}{\partial
\tilde{\boldsymbol{y}}^k}= A_k+r_k,
\end{equation*}
where $r_k$ has entries bounded by $C_0 \varrho_1 \gamma_0\leq C_0 \epsilon$.

Therefore, the map is invertible, and $\tilde{\boldsymbol{y}}^{k+1}\mapsto \tilde{\boldsymbol{y}}^k$
is contracting. This implies that $\Lambda_{sec}^{k+1}$ can be represented
as a graph
\begin{equation*}\Lambda^{k+1}_{sec}\equiv \{(\tilde{\boldsymbol{y}}^{k+1},f^{k+1}(\tilde{\boldsymbol{y}}^{k+1}));
\tilde{\boldsymbol{y}}^{k+1}\in
D_{k+1} \},
\end{equation*}
with 
\begin{equation*}
f^{k+1}(\tilde{\boldsymbol{y}}^{k+1})=  {^t }A_k^{-1} f^k
(\tilde{\boldsymbol{y}}^k)+\tilde{\beta}_k(\tilde{\boldsymbol{y}}^k,f^k(\tilde{\boldsymbol{y}}^k)).
\end{equation*}

Differentiating with respect to $\tilde{\boldsymbol{y}}^{k+1}$, we get

\begin{equation*}\frac{\partial f^{k+1}}{\partial \tilde{\boldsymbol{y}}^{k+1}} =
\Big{(}\frac{\partial \tilde{\boldsymbol{y}}^{k}}{\partial \tilde{\boldsymbol{y}}^{k+1}} \Big{)}
\Big{[} (^t A_k^{-1}+
\partial_{\boldsymbol{\eta}}\tilde{\beta}^k(\tilde{\boldsymbol{y}}^k,f^k(\tilde{\boldsymbol{y}}^k)))\frac{\partial
f^k}{\partial\tilde{\boldsymbol{y}}^k} (\tilde{\boldsymbol{y}}^k) + \partial_{\tilde{\boldsymbol{y}}} \tilde{\beta}^k
(\tilde{\boldsymbol{y}}^k,f^k(\tilde{\boldsymbol{y}}^k))\Big{]}
\end{equation*}
Therefore, we have
\[\begin{aligned}
\Big{\|}\frac{\partial f^{k+1}}{\partial \tilde{\boldsymbol{y}}^{k+1}}\Big{\|} &\leq
\frac{\|^tA_k^{-1}\| \gamma_k + |\partial_{\tilde{\boldsymbol{y}}}\tilde{\beta}| +
|\partial_{\tilde{\boldsymbol{\eta}}}\tilde{\beta}| \gamma_k}{\nu^{-1}
 -|\partial_{\tilde{\boldsymbol{y}}}\tilde{\alpha}|  - |\partial_{\tilde{\boldsymbol{\eta}}}\tilde{\beta}|  \gamma_k}\\
&\leq \frac{\gamma_k\nu + C_0\epsilon^\mathsf{p} (1 +\gamma_k)
}{\nu^{-1}-2C_0\epsilon^\mathsf{p}}\\
&\leq \nu_1 \gamma_k + \frac{(1-\nu_1)\ins}{8}= \gamma_k
\big{(}\nu_1+\frac{\ins(1-\nu_1)}{8\gamma_k}\big{)},
\end{aligned}
\]
where the last inequality comes from (\ref{belette}).
First of all, the fact that this slope is bounded uniformly on
$\Lambda^{k+1}_{sec}$ implies that $\Lambda^{k+1}_{sec}$ can indeed be
written in the form
\begin{equation*}\Lambda^{k+1}_{sec}\equiv \{(\tilde{\boldsymbol{y}}^{k+1},f^{k+1}(\tilde{\boldsymbol{y}}^{k+1});
\tilde{\boldsymbol{y}}^{k+1}\in D_{k+1} )\},
\end{equation*}
where $D_{k+1}\subset B(0,\varrho_1)$, and $\|df^{k+1}\|_{C^0}\leq
\gamma_{k+1}$, where $\gamma_{k+1} \leq \gamma_k
\big{(}\nu_1+\frac{\ins(1-\nu_1)}{8\gamma_k}\big{)}$.

Now, if $\gamma_k> \ins/4$, then $\nu_1+\frac{\ins(1-\nu_1)}{8\gamma_k}
<\frac{1+\nu_1}{2}<1$, so that $\gamma_k$ decreases exponentially fast,
while if $\gamma_k\leq \frac{(1+\nu_1)\ins}{4}$, then $\gamma_{k+1}<
\frac{(1+\nu_1)\ins}{4}$.

The time $\wait$ has been chosen large enough so that $\gamma_{\wait}<
\frac{(1+\nu_1)\ins}{4}$, which concludes the proof of the lemma.
\end{proof}

After times $N>\wait$, the Lagrangian manifold may not be included in
$\tilde{U}^{\Phi^N(\rho_{i_0})}(\epsilon,\varrho_1)$. Therefore, we may have to
change of coordinates. By Lemma \ref{Benjamin}, at time $\wait$, our
Lagrangian manifold $\Phi^{\wait}(\Lambda^0)$ is included in
$\tilde{U}^{\Phi^{\wait}(\rho_{i_0})}(\epsilon,\varrho_1)$ and is
$\frac{(1+\nu_1)\ins}{4}$-unstable.

We want to study $\tilde{U}_j\cap \Phi^{\wait}(\Lambda^0)$ for $j\in I$,
in the coordinates centred at $\rho_j$, and to
apply the computations made in the proof of Lemma \ref{Benjamin} again.
Let us see how all this works.

If, for some $j\in I$, $\tilde{U}_j\cap \Phi^{\wait}(\Lambda^0)\neq
\emptyset$, then $d(\Phi^{\wait}(\rho_{i_0}),\rho_j)<\epsilon$.
Now, by applying Lemma \ref{Gates} as well as equation (\ref{diamant}), we
obtain that
$\Phi^{\wait}(\Lambda^0)\cap \tilde{U}_j$ is
$\frac{\ins}{2}$-unstable in the twisted coordinates centred at $\rho_j$.

We may continue this argument of changing coordinates and propagating to
any time $N\geq \wait$: we always obtain a single Lagrangian manifold
which is
$\frac{(1+\nu_1)\ins}{4}$-unstable. This concludes the proof of
Proposition \ref{certifie}, because we assumed that
$C_0\epsilon^\mathsf{p}<1/2$.
\end{proof}
\begin{remarque} \label{urss}
In \cite{NZ}, Proposition 5.1, the authors prove using the chain rule that for each $\ell\in
\mathbb{N}$, there exists a constant $\mathcal{C}_\ell$ large enough such
that the following holds. If $i_1,i_2\in I$
and if $\Lambda\subset \tilde{U}_{i_1}$ is a
Lagrangian manifold in some unstable cone, generated by a function $f$ in
the coordinates $(\tilde{y}^{\rho_{i_1}},\tilde{\eta}^{\rho_{i_1}})$ with
$\|f\|_{C^\ell}\leq \mathcal{C}_\ell$, then $\Phi^1(\Lambda)\cap
\tilde{U}_{i_2}$ is a
union of finitely many Lagrangian manifolds, all of which are in some
unstable cone in the coordinates
$(\tilde{y}^{\rho_{i_2}},\tilde{\eta}^{\rho_{i_2}})$, and are generated by
functions with a $C^\ell$ norm smaller than $\mathcal{C}_\ell$.

In particular, this shows that on the Lagrangian manifold
$\Phi^N_\iota(\Lambda)$ described
in Proposition \ref{certifie}, the function $s^{\rho_i}(y^{\rho_i})$ has a $C^\ell$ norm smaller than $\mathcal{C}_\ell$, where
$\mathcal{C}_\ell$ is a constant independent on $N$.

\end{remarque}

\subsubsection{Properties of the sets $(W_a)_{a\in A_1}$}
\label{marguerite}
The following lemma is an adaptation of Lemma \ref{Ukraine} to the
“straight coordinates”. Note that the main reason why we want to use these straight coordinates is because they are symplectic, which will play a crucial role in the proof of Theorem \ref{ibrahim}.

\begin{lemme} \label{torticoli}
There exists $\secur<\epsilon_1$ such that, if $(W_a)_{a\in A_1}$ is an
adapted cover of $K$ of diameter $\secur$ such that for each $a\in A_1$, $W_a\cap \ext=\emptyset$, and there exists a point $\rho_a\in W_a\cap K\neq
\emptyset$. Then there exist
$\mathcal{N}_{\tins}\in \mathbb{N}$ and $\gamma'$ such that the following holds.

For each $a\in A_1$, for each
$1\leq N \leq \tins$, the set $\Phi^{N}(\Lag)\cap W_a$ consists of
at most $\mathcal{N}_{\tins}$ Lagrangian manifolds, all of which are 
$\gamma'$-unstable in the straight coordinates centred on $\rho_a$.
\end{lemme}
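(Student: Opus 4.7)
The plan is to derive the statement as a relatively direct combination of Lemma \ref{Ukraine} and Lemma \ref{yaki}, with only bookkeeping of constants separating us from the conclusion. First I would apply Lemma \ref{Ukraine} with $N=\tins$: this provides $\mathcal{N}_{\tins}\in \mathbb{N}$, $\tilde{\varrho}_{\tins}>0$, and $\tilde{\gamma}_{\tins}>0$ such that, for every $\rho\in K$ and every $1\leq t\leq \tins$, the piece $\Phi^{t}(\Lag)\cap \tilde{U}^{\rho}(\epsilon,\tilde{\varrho}_{\tins})$ splits into at most $\mathcal{N}_{\tins}$ Lagrangian manifolds that are $\tilde{\gamma}_{\tins}$-unstable in the twisted coordinates centred at $\rho$. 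The constant $\mathcal{N}_{\tins}$ produced this way will also serve as the one claimed by the lemma.

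Next I would fix $\secur$ satisfying
\[
\secur<\epsilon_1,\qquad \secur<\tilde{\varrho}_{\tins},\qquad 2C_0\tilde{\gamma}_{\tins}\secur<1,
\]
and small enough that the straight and twisted coordinates centred at any $\rho\in K$ are both defined on every set of diameter $\secur$ containing $\rho$. The first two conditions ensure that any $W_a$ satisfying the hypotheses is contained in $\tilde{U}^{\rho_a}(\epsilon,\tilde{\varrho}_{\tins})$, while the third guarantees that the assumption of Lemma \ref{yaki} is met when we apply it with size parameter $\secur$ and slope $\tilde{\gamma}_{\tins}$.

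Given such a cover and any $a\in A_1$, $1\leq N\leq \tins$, I would then use Lemma \ref{Ukraine} to decompose $\Phi^N(\Lag)\cap W_a$ into at most $\mathcal{N}_{\tins}$ pieces that are $\tilde{\gamma}_{\tins}$-unstable in the twisted coordinates around $\rho_a$, and finally apply Lemma \ref{yaki} to each piece, reexpressing it as a graph in the straight coordinates. This gives the uniform slope bound
\[
\gamma':=\tilde{\gamma}_{\tins}\,(1-C_0\tilde{\gamma}_{\tins}\secur)^{-1}(1+2C_0\secur),
\]
which is finite and independent of $a$ and $N$. The only real obstacle is choosing $\secur$ so that the three quantitative constraints above are respected simultaneously; once they are, the decomposition in twisted coordinates and the subsequent passage to straight coordinates are immediate from the earlier lemmas, so I do not expect any additional difficulty.
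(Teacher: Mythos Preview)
Your proposal is correct and follows essentially the same route as the paper: apply Lemma~\ref{Ukraine} with $N=\tins$ to get the decomposition in twisted coordinates, choose $\secur$ small enough so that each $W_a$ sits inside the relevant $\tilde{U}^{\rho_a}(\epsilon,\tilde{\varrho}_{\tins})$ and so that the hypothesis of Lemma~\ref{yaki} is met, and then use Lemma~\ref{yaki} to transfer the slope bound to the straight coordinates. Your explicit constraints on $\secur$ and your formula for $\gamma'$ are more detailed than what the paper writes, but the argument is the same.
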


\begin{proof}
Let us choose $\varepsilon_0>0$ small enough so that $C_0 \varepsilon_0\tilde{\gamma}_{\tins} <1$ and such that each set of diameter
smaller that $\varepsilon_0$ and which intersects $K$ is contained in
some $\tilde{U}^\rho(\epsilon,\delta)$, with
$\delta<\tilde{\delta}_{\tins}$. 
By applying Lemma \ref{Ukraine}, we know that
there exists $\mathcal{N}_{\tins}\in \mathbb{N}$,
$\tilde{\delta}_{\tins}>0$ and
$\tilde{\gamma}_{\tins}>0$ such that $\forall
0<\delta\leq\tilde{\delta}_{\tins}$, $\forall \rho \in
K$, $\forall 1\leq N \leq {\tins}$, $\Phi^{N}
(\mathcal{L}_0)\cap \tilde{U}^{\rho}(\epsilon,\delta)$ can be
written in the coordinates $(\tilde{y}^{\rho},\tilde{\eta}^{\rho})$ as
the
union of at most $\mathcal{N}_{\tins}$ Lagrangian manifolds, which are all
$\tilde{\gamma}_{\tins}$-unstable. This gives us the statement in the
twisted coordinates. To go to the straight coordinates, we may simply use
Lemma \ref{yaki} thanks to the assumption made on $\varepsilon_0$.
\end{proof}

For any $a\in A_1$, $1\leq k \leq \tins$,  $W_a\cap \Phi^{k}(\Lag)$
consists of finitely many Lagrangian manifolds.
 Let us define $\mathsf{d}_{a,k}$ as the minimal
distance (with respect to the distance $d$) between the Lagrangian manifolds which make up $ W_a\cap
\Phi^{k}(\Lag),$
with the convention that this quantity is equal to $+\infty$ if  $W_a\cap
\Phi^{k}(\Lag)$  consists of a single Lagrangian manifold or is empty. We then set

  \begin{equation*}\mathsf{d}:=\min(\varepsilon_0,\min\limits_{\underset{1\leq k \leq
\tins}{a\in A_1} } \{ \mathsf{d}_{a,k}\})>0.
\end{equation*}

\begin{remarque}
If we consider a whole family of Lagrangian manifolds $(\mathcal{L}_z)_{z\in Z}$ satisfying Hypothesis \ref{chaise} and Hypothesis \ref{Happy}, we will be able to apply Theorem \ref{Cyril} to them with sets $(W_a)_{a\in A_2}$ independent of $z\in Z$ provided the constant $\mathsf{d}$ is well-defined, that is to say, provided we have
\begin{equation}\label{cognet}
\inf\limits_{\underset{1\leq k \leq
\tins}{a\in A_1, z\in Z} } \{ \mathsf{d}^ z_{a,k}\}>0,
\end{equation}
where $\mathsf{d}^ z_{a,k}$ is the minimal
distance between the Lagrangian manifolds which make up $ W_a\cap
\Phi^{k}(\mathcal{L}_z),$
with the convention that this quantity is equal to $+\infty$ if  $W_a\cap
\Phi^{k}(\mathcal{L}_z)$  consists of a single Lagrangian manifold or is empty.
\end{remarque}

The flow $(\Phi^t)$ is $C^1$ with respect to time, hence Lipschitz on $[0, \tins]$. Therefore, there exists a constant $C \bel 0$ such that for all $t\in [0,\tins]$, for all $\rho_1,\rho_2\in \mathcal{E}$, we have 
\begin{equation*}
d(\Phi^t(\rho_1),\Phi^t(\rho_2))\leq C d(\rho_1,\rho_2).
\end{equation*}
 We take 
\begin{equation*}\taille := \mathsf{d}/C.
\end{equation*}

We now complete $(W_a)_{a\in A_1}$ to cover the whole energy layer.

\subsubsection{Construction and properties of the sets $(W_a)_{a\in A_2}$}

Recall that $W_0= T^*(X\backslash X_0)$, and that $b$ is the boundary
defining function introduced in Hypothesis \ref{Guepard}.

We build the sets $(W_a)_{a\in A_2}$ so that, if we set $A=A_1\cup A_2\cup
\{0\}$, the following holds:
\begin{itemize}
\item  Each of the sets $(W_a)_{a\in A_2}$ has a diameter smaller than
$\taille $.

\item For each $a\in A_2$, we have $d(W_a, K)>\taille/2$.

\item $(W_a)_{a\in A}$ is an open cover of $\mathcal{E}$.
\end{itemize}

Our next lemma is the first brick of the proof of the uniqueness of the
Lagrangian manifold making up $\Phi^N_\alpha(\Lag)$. It relies on the fact
that the sets $(W_a)_{a\in A_2}$ have been built small enough.

\begin{lemme}\label{tisane}
Let $k\leq \tins$, $\alpha\in A^{k}$, and $a\in A_1$. Then the set
$W_a\cap\Phi_{\alpha}^{k}(\Lag)$ is empty or consists of a single Lagrangian manifold.
\end{lemme}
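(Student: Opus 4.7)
The plan is to argue by induction on $k$. The base case $k=0$ is immediate: $\Phi_\alpha^0(\Lag)\cap W_a\subset\Lag\cap W_a=\emptyset$, because Lemma \ref{torticoli} requires $W_a\cap\ext=\emptyset$ for $a\in A_1$, while $\Lag\subset\ext$ by its construction in Section \ref{scotiabank}.

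For the inductive step, assuming the statement for all $k'<k$, I would suppose by contradiction that $W_a\cap\Phi_\alpha^k(\Lag)$ contains two disjoint Lagrangian pieces $L_1, L_2$. Both lie inside $W_a\cap\Phi^k(\Lag)$, which by Lemma \ref{torticoli} decomposes into at most $\mathcal{N}_{\tins}$ connected Lagrangian components pairwise separated by distance at least $\mathsf{d}$ (by the very definition of $\mathsf{d}$). The first substep is then to show that $L_1$ and $L_2$ lie in distinct components of $W_a\cap\Phi^k(\Lag)$, so that $d(L_1,L_2)\geq \mathsf{d}$; the idea is to analyse how the open symbolic tube $\bigcap_{j=0}^{k}\Phi^{-(k-j)}(W_{\alpha_j})$ cuts each connected component of $W_a\cap\Phi^k(\Lag)$, and to argue that two disjoint truncated pieces surviving this cut must descend from genuinely different untruncated components. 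Once $d(L_1,L_2)\geq \mathsf{d}$ is secured, the uniform Lipschitz bound on $\Phi^t$ for $t\in[0,\tins]$ propagates the separation backwards: $d(\Phi^{-j}(L_1),\Phi^{-j}(L_2))\geq \mathsf{d}/C=\taille$ for every $1\leq j\leq k$, while both preimages lie in $W_{\alpha_{k-j}}$ by the very definition of $\Phi_\alpha^k$.

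The proof is then completed by case analysis on the symbols $\alpha_{k-j}$. Whenever $\alpha_{k-j}\in A_2$, the construction of the cover gives $\operatorname{diam}(W_{\alpha_{k-j}})<\taille$, directly contradicting the separation of the preimages. Whenever $\alpha_{k-j}\in A_1$, the (strong) inductive hypothesis applied at time $k-j$ with $a'=\alpha_{k-j}$ forces $W_{\alpha_{k-j}}\cap\Phi_\alpha^{k-j}(\Lag)=\Phi_\alpha^{k-j}(\Lag)$ to be a single connected Lagrangian manifold, which combined with the preserved $\taille$-separation cannot accommodate the two disjoint preimages. Finally, the $\alpha_{k-j}=0$ symbols cannot occur all the way through: the endpoint $L_1\cup L_2\subset W_a$ with $a\in A_1$ lies in the interaction region and is disjoint from $\ext$, so $\alpha_k\in A_1\cup A_2$, at which point (together with the smallness of the $A_2$ sets) one of the previous two contradictions is triggered. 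The main obstacle is the preliminary geometric step of promoting the disjointness of $L_1, L_2$ in the truncated set to a genuine $\mathsf{d}$-separation in the untruncated set, since once this is in place the Lipschitz bound, the diameter constraint on the $A_2$ cover elements, and the inductive hypothesis for $A_1$ elements fit together mechanically to close the argument.
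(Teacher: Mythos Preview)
Your approach has a genuine gap, and the paper's proof is considerably simpler because it sidesteps exactly the ``obstacle'' you flag.

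The paper does not argue by induction. It works from the start with the \emph{untruncated} set $\Phi^k(\Lag)\cap W_a$, whose decomposition into finitely many $\mathsf{d}$-separated graph-Lagrangians is given by Lemma~\ref{torticoli} and the definition of $\mathsf{d}$. Since $\Phi_\alpha^k(\Lag)\cap W_a$ is contained in this set, the task reduces to showing that the symbolic constraint selects at most one of these untruncated components. Pulling back by $\Phi^{-k'}$, the components stay $\taille$-separated by the Lipschitz bound; and because $\Lag\subset\ext$ while $W_a\cap\ext=\emptyset$, the itinerary must contain some $\alpha_{k'}\in A_2$ (this is the role of assumption~(\ref{descendance})). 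At that step the window $W_{\alpha_{k'}}$ has diameter $<\taille$, so it can meet at most one pulled-back component. That is the whole argument.

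Your framing creates a difficulty that is both unresolved and unnecessary. You suppose $L_1,L_2$ are two disjoint pieces of the \emph{truncated} set and then try to promote their disjointness to a $\mathsf{d}$-separation. This step is not only unproven in your proposal---you explicitly call it the main obstacle---but in fact \emph{false} as stated: two disconnected pieces of the truncated set can very well descend from the same untruncated graph-Lagrangian (the symbolic windows can cut holes in it), in which case they are not $\mathsf{d}$-separated. This is harmless precisely because, in the paper's convention, such pieces still form a \emph{single} Lagrangian manifold (a graph over a disconnected base). So the right contradiction hypothesis is not ``two disjoint pieces'' but ``pieces lying in two distinct untruncated components''---and then the $\mathsf{d}$-separation is immediate, not an obstacle.

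Your $A_1$ inductive case is also flawed: the inductive hypothesis only says $\Phi_\alpha^{k-j}(\Lag)$ is a single Lagrangian manifold, not a single \emph{connected} one. A single graph over a disconnected base can certainly contain two subsets separated by $\taille$, so no contradiction arises there. Once you drop this case and the induction, and rephrase the goal as ``the truncated set lies in a single untruncated component'', you recover exactly the paper's one-paragraph proof via the mandatory $A_2$ passage.
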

\begin{proof}
Let us suppose that $\Phi^{k}(\Lag)\cap W_a$ is non-empty.
We have seen in Lemma \ref{torticoli} that it consists of
finitely many Lagrangian manifolds, with a distance between them larger
than $\mathsf{d}$.
Therefore, for any $1\leq k'\leq k$, the sets $\Phi^{-k'}
(\Phi^{k}(\Lag)\cap W_a)$ consist of Lagrangian manifolds which are at a
distance larger than $\taille$ from each other. Because of the assumption (\ref{descendance}) we made, we have $\alpha_{k'}\in A_2$ for some $k'\leq k$. Since the sets
$(W_a)_{a\in A_2}$ have a diameter smaller than $\taille$, they separate
the Lagrangian manifolds which make up $\Phi^{-k'} (\Phi^{k}(\Lag)\cap
W_a)$. We deduce from this the lemma.
\end{proof}

\subsubsection{Structure of the admissible sequences}
We will now state two of lemmas which put some constraints on the
sequences $\alpha\in A^{N}$, with $\alpha_N\in A_1$ such that
$\Phi^N_{\alpha}(\Lag)\neq \emptyset$.

The first of these lemmas tell us that we may restrict ourselves to
sequences such that $\alpha_k \neq 0$ for $k\geq 1$.
\begin{lemme}\label{violoncelle}
Let $N\in \mathbb{N}$, and let $\alpha\in A^{N}$, and $a\in A_1$.
Suppose that $\alpha_k=0$ for some $1\leq k \leq N-1$, and
that $W_a\cap \Phi^N_{\alpha}(\Lag)\neq \emptyset$.
Then $W_a\cap \Phi^N_{\alpha}(\Lag)\subset\Phi^{N-k}_{\alpha_{k+1}...\alpha_{N-1}}(\Lag)$.
\end{lemme}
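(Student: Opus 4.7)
\emph{Plan.} Fix $\rho\in W_a\cap \Phi^N_\alpha(\Lag)$. Unfolding the recursive definition of $\Phi^N_\alpha$ there is $\rho_0\in\Lag$ with $\Phi^j(\rho_0)\in W_{\alpha_j}$ at every relevant time $j$. I would single out the intermediate point $\rho':=\Phi^k(\rho_0)$. Since $\alpha_k=0$ we have $\rho'\in W_0=\mathcal{DE}_-\cup\mathcal{DE}_+$, and the whole proof reduces to showing that in fact $\rho'$ lies in $\Lag$ itself; the statement then follows tautologically by re-reading the remaining $N-k$ steps of the trajectory from $\rho'$ in place of $\rho_0$, since $\Phi^j(\rho')=\Phi^{k+j}(\rho_0)\in W_{\alpha_{k+j}}$ for $j=0,\dots,N-k$.

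The key step is to show that $\rho'\in\mathcal{DE}_-$, i.e., to rule out $\rho'\in\mathcal{DE}_+$. The argument is a convexity argument in the spirit of Lemma \ref{saladin}: set $f(t)=b(\Phi^t(\rho'))$. If $\rho'\in\mathcal{DE}_+$ then $f(0)<\epsilon_0/2$ and $f'(0)\leq 0$. Part (3) of Hypothesis \ref{Guepard} says that whenever $f\leq\epsilon_0$ and $f'=0$ one has $f''<0$, which forbids $f$ from ever climbing back up to $\epsilon_0/2$: looking at the infimum of $\{t\geq 0: f(t)\geq \epsilon_0/2\}$ and considering the sign of $f'$ at that time yields a contradiction with the sign of $f''$. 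Hence $f(t)<\epsilon_0/2$ for all $t\geq 0$. But by Lemma \ref{torticoli} we chose $W_a$ with $W_a\cap\ext=\emptyset$, so $W_a\subset\{b\geq \epsilon_0/2\}$, whereas $f(N-k)=b(\rho)$ would be $<\epsilon_0/2$. This is the main obstacle and the only non-trivial point of the proof.

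Once $\rho'\in\mathcal{DE}_-$ is established, the invariance Hypothesis \ref{chaise} finishes the job: $\rho'=\Phi^k(\rho_0)\in\Phi^k(\Lag)\cap\mathcal{DE}_-=\Lag\cap\mathcal{DE}_-$, so $\rho'\in\Lag$. Applying the recursive definition of $\Phi^{N-k}_{\alpha_{k+1}\ldots\alpha_{N-1}}(\Lag)$ with starting point $\rho'\in\Lag$ and using the already-verified inclusions $\Phi^j(\rho')\in W_{\alpha_{k+j}}$ for $j\geq 1$ shows that $\rho=\Phi^{N-k}(\rho')$ lies in $\Phi^{N-k}_{\alpha_{k+1}\ldots\alpha_{N-1}}(\Lag)$, which is the required inclusion.
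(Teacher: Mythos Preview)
Your proof is correct and follows essentially the same route as the paper: show the intermediate point $\rho'=\Phi^k(\rho_0)\in W_0$ actually lies in $\mathcal{DE}_-$ by a convexity argument, then invoke the invariance Hypothesis~\ref{chaise} to conclude $\rho'\in\Lag$ and re-read the remaining trajectory. The paper phrases the convexity step as ``a point in $\mathcal{DE}_+$ cannot intersect $\close$ in the future'' and also inserts a short detour through Lemma~\ref{saladin} (pre-images of $\mathcal{DE}_-$ lie in $W_0$, hence $\alpha_0\ldots\alpha_{k-1}$ can be replaced by $0\ldots 0$); your version skips this detour and goes straight from $\rho'\in\mathcal{DE}_-$ to $\rho'\in\Lag$, which is slightly cleaner but not substantively different.
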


\begin{proof}
By hypothesis,
$\Phi_{\alpha_1...\alpha_k}^k(\Lag)\subset \mathcal{W}_0$, and it intersects
$\close$ in the future. We have $\mathcal{W}_0=\inco \cup \mathcal{DE}_+$, and a point in $\mathcal{DE}_+$ cannot intersect $\close$ in the future. Therefore, the points in $\Phi_{\alpha_1...\alpha_k}^k(\Lag)$ which intersect $\close$ in he future are all in $\mathcal{DE}_-$. But by Lemma \ref{saladin}, the point in $\mathcal{DE}_-$ can only have pre-images in $W_0$. Therefore, we have $$W_a\cap \Phi^N_{\alpha}(\Lag)\subset W_a\cap \Phi^N_{0...0\alpha_{k+1}...\alpha_{N-1}}(\Lag)\subset \Phi^{N-k}_{\alpha_{k+1}...\alpha_{N-1}}(\Lag),$$
where the second inclusion comes from Hypothesis \ref{chaise}.
\end{proof}

Let us now take advantage of Remark \ref{grantorino} to show that, from
time $k\geq \exit +2$, all the interesting dynamics takes place in $\set$.

\begin{lemme}\label{cardiff}
Let $N\geq N_\epsilon+2$, $\alpha\in A^{N}$ with $\alpha_i\neq 0$ for
$i\geq 1$.

Let $\exit +2\leq k\leq N$, and $\rho\in \Phi^k_{\alpha_1...\alpha_k}(\Lag)$ be
such that $\Phi^{N-k}(\rho)\in W_{a}$ for some $a\in A_1$. Then $\rho\in
\set$.
\end{lemme}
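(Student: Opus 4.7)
The plan is to argue by contradiction: suppose $\rho\notin \set$, and derive a contradiction by combining Remark \ref{grantorino}, the transversality estimate from the proof of Lemma \ref{abeille}, and the itinerary hypothesis $\alpha_i\neq 0$ for $i\geq 1$ at two different times, $k$ and $k-\exit$.

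First I would use the contrapositive of Lemma \ref{assimil}(ii) (as formulated in Remark \ref{grantorino}): since $\Phi^{N-k}(\rho)\in W_a\subset \{d(\cdot,K)<\secur\}$ and $\secur<\epsilon_1$, the point $\rho$ cannot belong to $[\close\setminus \set]\cup\{\rho'\in \inter; d(\rho',\Gamma^-)\geq d_1\}$, as any such $\rho$ would satisfy $d(\Phi^t(\rho),K)\geq \epsilon_1>\secur$ for every $t\geq 0$. Combined with our assumption $\rho\notin \set$, this leaves only two possibilities: either $\rho\in \ext$, or $\rho\in \inter$ with $d(\rho,\Gamma^-)<d_1$.

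The first possibility is ruled out immediately. Indeed, since $k\geq \exit+2\geq 1$, the hypothesis gives $\alpha_k\neq 0$, so $\rho\in W_{\alpha_k}$ with $\alpha_k\in A_1\sqcup A_2$. By Lemma \ref{torticoli} we have $W_a\cap \ext=\emptyset$ for $a\in A_1$, and the sets $(W_a)_{a\in A_2}$ may be chosen so that $W_a\cap \ext=\emptyset$ as well (since $\ext$ is already covered by $W_0$, the three constraints in the construction are compatible with this choice). Hence $\rho\notin \ext$, a contradiction.

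For the second possibility ($\rho\in \inter$, $d(\rho,\Gamma^-)<d_1$), I would invoke the quantitative transversality of stable/unstable manifolds used in the proof of Lemma \ref{abeille}: since $d(\rho,K)\geq \epsilon/2$, the contrapositive of that estimate yields $d(\rho,\Gamma^+)+d(\rho,\Gamma^-)>2d_1$, and together with $d(\rho,\Gamma^-)<d_1$ this forces $d(\rho,\Gamma^+)>d_1$. Lemma \ref{abeille} then provides $\Phi^{-\exit}(\rho)\in \ext$. Writing $\rho=\Phi^k(\rho_0)$ with $\rho_0\in \Lag$, we obtain $\Phi^{k-\exit}(\rho_0)\in \ext$, and the assumption $k\geq \exit+2$ guarantees $k-\exit\geq 2\geq 1$, so $\alpha_{k-\exit}\neq 0$. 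This again contradicts $W_{\alpha_{k-\exit}}\cap \ext=\emptyset$, completing the proof. The main (and essentially only) subtlety is the need for the auxiliary property $W_a\cap \ext=\emptyset$ for $a\in A_2$; once this is built into the cover construction, the rest of the argument is a tight case analysis using only facts already established.
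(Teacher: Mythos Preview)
Your proof is correct and follows essentially the same approach as the paper: the same case split (via Remark \ref{grantorino}) into $\close$, $\ext$, and $\inter$, with the $\inter$ case handled through the transversality estimate from the proof of Lemma \ref{abeille} combined with the itinerary constraint $\alpha_i\neq 0$. The only difference is that you build the harmless extra requirement $W_a\cap\ext=\emptyset$ for $a\in A_2$ into the cover and use it directly to get the contradictions, whereas the paper instead appeals to Lemma \ref{saladin} and its consequence (\ref{teigne}) to achieve the same effect; the logical flows are contrapositives of one another.
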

\begin{proof}
If $\rho\in \close$, then the result follows from Remark \ref{grantorino}.
We must therefore check that we cannot have $\rho\in \inter\cup \ext$.
First of all, note that Lemma \ref{saladin} implies that we cannot have
$\rho\in \ext$. This lemma also implies that for each $a'\in A_1\cup A_2$, we have
\begin{equation}\label{teigne}
\Phi^1( W_{a'}\backslash W_0)\cap \inco=\emptyset.
\end{equation}

Suppose now that $\rho\in \inter$. Since $k\geq N_\epsilon+2$, and
$\alpha_i\neq 0$ for $i\geq 1$, we have $\Phi^{-N_\epsilon-1}(\rho)\in
W_{a'}$ for some $a'\in A_1\cup A_2$. Therefore, by
equation  (\ref{teigne}), we have $\Phi^{-\exit}(\rho)\notin \ext$.

By the proof of Lemma \ref{abeille}, this would imply that
$d(\rho,\Gamma^-)\geq d_1$. By Remark \ref{grantorino}, this implies
that we cannot have $\Phi^{N-k}(\rho)\in W_{a}$ for some $a\in A_1$, a
contradiction.
\end{proof}

\subsubsection{End of the proof of Theorem \ref{Cyril}}
Let $N\geq 0$, $\alpha\in A^{N}$ and $a\in A_1$. If $N\leq
\tins$, the result of Theorem \ref{Cyril} is a consequence of Lemma
\ref{torticoli} and Lemma \ref{tisane}.

Consider now $N\geq \tins>N_\epsilon+2$. We will assume that
$W_a\cap\Phi_\alpha^N(\Lag)\neq \emptyset$.
Thanks to Lemma \ref{violoncelle} and to Hypothesis \ref{chaise}, we may
assume that $\alpha_i\neq 0$ for all $i\geq 1$.

From Lemma \ref{cardiff}, we deduce that
\begin{equation} \label{voyageur}
W_a\cap\Phi_\alpha^N(\Lag)\subset
\bigcup\limits_{\underset{\iota_{N-N_\epsilon}=i_\alpha}{\iota \in
I^{N-\exit-1}}} \Phi_{\iota}
\big{(}\Phi_{\alpha_{1}...\alpha_{\exit+2}}^{\exit+2}(\Lag) \big{)},
\end{equation}
where $i_\alpha\in I$ is such that $W_{\alpha_N}\subset
\tilde{U}_{i_\alpha}$.

Let us define $\Lambda_k:=\{\rho\in \Phi_\alpha^k(\Lag);  \forall k'\geq 0,
\Phi^{k'}(\rho)\in W_{\alpha_{k+k'}} \}.$

By Lemma \ref{cardiff}, for each $k\geq N_\epsilon+2$, we have
$\Lambda_k\subset \set\cap W_{\alpha_k}$. Therefore, by Lemma
\ref{assimil} (iii), there exists a $i_k\in I$ such that $\Lambda_k\subset
\tilde{U}_{i_k}$, and we obtain that
\begin{equation*} W_a\cap\Phi_\alpha^N(\Lag)\subset
\Phi_{i_{N_\epsilon+2}...i_N}^{N-N_\epsilon-2} \big{(}
\Phi_{\alpha_{1}...\alpha_{\exit+2}}^{\exit+2}(\Lag) \big{)}.
\end{equation*}

We know from Lemma \ref{Ukraine} and Lemma \ref{tisane} that
$\Phi_{\alpha_{1}...\alpha_{\exit+2}}^{\exit+2}(\Lag)$ consists of a
single Lagrangian manifold, which is $\gamma_0$-unstable in the
coordinates centred on any point of $K$.
Applying Proposition \ref{certifie},
we know that the right hand side of (\ref{voyageur}) is a Lagrangian
manifold which
is
$\frac{\ins}{(1+2C_a\epsilon^\mathsf{p})^2}$-unstable in the twisted
coordinates centred on $\rho_{i_\alpha}$.

We first apply Lemma \ref{Gates} to write this Lagrangian manifold in
the twisted coordinates centred on $\rho_a$. Thanks to equation
(\ref{diamant}), it is
$\frac{\ins}{(1+2C_a\epsilon^\mathsf{p})}$-unstable.
We then use Lemma \ref{yaki} to write this Lagrangian manifold in the
straight coordinates centred on $\rho_{\alpha_N}$, and we deduce that it is
$\ins$-unstable. This concludes the proof of Theorem \ref{Cyril}.
\end{proof}

 \begin{remarque}\label{peace}
 Therefore, in the
coordinates $(y^{a},\eta^{a})$, $W_a\cap \Phi_\alpha^N(\Lag)$ may be put in the form
\begin{equation*}
W_a\cap \Phi_\alpha^N(\Lag) \equiv \{(y_1^{a}, \boldsymbol{y}^{a},0, f_{N,\alpha,a}(\boldsymbol{y}^a)), y^{a}\in D_{N,\alpha,a} \},
\end{equation*}
for some open set $D_{N,\alpha,a}\subset \mathbb{R}^d$.

 Remark \ref{urss} tells us that for any $\ell\in \mathbb{N}$, the functions $f_{N,\alpha,a}$ have $C^\ell$ norms which are
bounded independently of $N$, $\alpha$ and $a$.
 \end{remarque}

\section{Generalized eigenfunctions} \label{GE}

We shall state our results about generalized eigenfunctions under rather general assumptions. We shall then explain why these assumptions hold in the case of distorted plane waves on manifolds which are Euclidean near infinity.

In the sequel, we will consider a Riemannian manifold $(X,g)$ with a real-valued potential $V\in C_c^\infty(X)$, and define the Schrödinger operator
\begin{equation*}
P_h= -h^2\Delta_g -c_0 h^2 + V(x).
\end{equation*}
Here $c_0\bel 0$ is a constant, which will be 0 in the case of Euclidean near infinity manifolds (see \ref{origami} for the definition of such manifolds).

Before stating our assumptions, let us recall a few definitions and facts from semiclassical analysis.

\subsection{Refresher on semiclassical analysis}
\subsubsection{Pseudodifferential calculus} \label{greve}

We shall use the class $S^{comp}(T^*X)$ of symbols $a\in C_c^\infty(T^*X)$,
which may depend on $h$, but whose seminorms and supports are all bounded
independently of $h$. We will sometimes write $S^{comp}(X)$ for the set of symbols in $S^{comp}(T^*X)$ which depend only on the base variable.
If $U$ is an open subset of $T^*X$, we will denote by $S^{comp}(U)$ the set of functions in $S^{comp}(T^*X)$ whose support is contained in $U$.

\begin{definition}\label{defsymbclassique}
Let $a\in S^{comp}(T^*Y)$. We will say that $a$ is a \emph{classical symbol} if there exists a sequence of symbols $a_k\in S^{comp}(T^*Y)$ such that for any $n\in \mathbb{N}$, 
$$a-\sum_{k=0}^n h^k a_k \in h^{n+1} S^{comp}(T^*Y).$$
We will then write $a^0(x,\xi):= \lim\limits_{h\rightarrow 0} a(x,\xi;h)$ for the \emph{principal symbol} of $a$.
\end{definition}

We associate to $S^{comp}(T^*X)$ the class of pseudodifferential operators
$\Psi_h^{comp}(X)$, through a surjective quantization map
\begin{equation*}Op_h:S^{comp}(T^*X)\longrightarrow \Psi^{comp}_h(X).
\end{equation*} This quantization
map is defined using coordinate charts, and the standard Weyl quantization
on $\mathbb{R}^d$. It is therefore not intrinsic. However, the principal
symbol map
\begin{equation*}\sigma_h : \Psi^{comp}_h (X)\longrightarrow S^{comp}(T^*X)/
h S^{comp}(T^*X)
\end{equation*} is intrinsic, and we have
\begin{equation*}\sigma_h(A\circ B) = \sigma_h (A) \sigma_h(B)
\end{equation*}
and
\begin{equation*}\sigma_h\circ Op: S^{comp}(T^* X) \longrightarrow S^{comp} (T^*X) /h
S^{comp}(T^*X)
\end{equation*}
is the natural projection map.

For more details on all these maps and their construction, we refer the reader
to \cite[Chapter 14]{Zworski_2012}.

For $a\in S^{comp}(T^*X)$, we say its essential support is equal to a given
compact $K\Subset T^*X$,
\begin{equation*} \text{ ess supp}_h a = K \Subset T^*X,
\end{equation*}
if and only if, for all $\chi \in S(T^*X)$,
\begin{equation*}supp \chi \subset (T^*X\backslash K) \Rightarrow \chi a \in h^\infty S(T^*
X).
\end{equation*}
For $A\in \Psi^{comp}_h(X), A=Op_h(a)$, we define the wave front set of $A$ as:
\begin{equation*}WF_h(A)= \text{ ess supp}_h a,
\end{equation*}
noting that this definition does not depend on the choice of the
quantisation. When $K$ is a compact subset of $T^*X$ and $WF_h(A)\subset K$, we will sometimes say that $A$ is \emph{microsupported} inside $K$.

Let us now state a lemma which is a consequence of Egorov theorem \cite[Theorem 11.1]{Zworski_2012}. Recall that $U(t)$ is the Schrödinger propagator $U(t)= e^{-it P_h/h}$.
\begin{lemme}\label{theclash}
Let $A,B\in \Psi_h^{comp}(X)$, and suppose that $\Phi^t( WF_h(A))\cap WF_h(B)=\emptyset$. Then we have
\begin{equation*}
A U(t) B= O_{L^2\rightarrow L^2}(h^\infty).
\end{equation*}
\end{lemme}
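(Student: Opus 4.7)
The plan is to use Egorov's theorem (invoked in the statement as Theorem 11.1 of \cite{Zworski_2012}) to transport one of the two pseudodifferential factors through the propagator $U(t)$, producing a product of two pseudodifferential operators with disjoint wavefront sets, for which the standard composition estimate gives the desired $O(h^\infty)$ bound.

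First I would introduce the Heisenberg-evolved operator $A_t := U(-t)\,A\,U(t)$ (the symmetric argument using $B_t := U(t)\,B\,U(-t)$ would work equally well). Egorov's theorem guarantees that $A_t \in \Psi_h^{comp}(X)$, with principal symbol $\sigma(A)\circ\Phi^t$, and therefore
\[ WF_h(A_t) = \Phi^{-t}\bigl(WF_h(A)\bigr). \]
Under the hypothesis $\Phi^t(WF_h(A))\cap WF_h(B)=\emptyset$, this expresses (up to the sign of $t$ dictated by the Egorov convention) the disjointness $WF_h(A_t)\cap WF_h(B)=\emptyset$ of two wavefront sets of operators in $\Psi_h^{comp}(X)$.

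I would then use the identity $A\,U(t) = U(t)\,A_t$, so that $A\,U(t)\,B = U(t)\,A_t\,B$. Since $U(t)$ is unitary on $L^2$, we have
\[ \|A\,U(t)\,B\|_{L^2\to L^2} = \|A_t\,B\|_{L^2\to L^2}, \]
which reduces the claim to the statement that a composition in $\Psi_h^{comp}(X)$ of two operators with disjoint wavefront sets is $O_{L^2\to L^2}(h^\infty)$. This is a direct consequence of the symbolic calculus recalled in Section~\ref{greve}: the full symbol of such a composition lies in $h^\infty S^{comp}(T^*X)$, and the $L^2\to L^2$ norm of its quantization is therefore $O(h^\infty)$.

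There is no genuine obstacle; the only point requiring care is to match the sign convention of Egorov's theorem in \cite{Zworski_2012} with the orientation of the hypothesis, in order to confirm that the disjointness condition $WF_h(A_t)\cap WF_h(B)=\emptyset$ is indeed equivalent to the assumption as written. Once this bookkeeping is done, the proof is essentially a two-line citation of the two standard microlocal tools above.
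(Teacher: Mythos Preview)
Your proposal is correct and matches the paper's approach: the paper does not give a detailed proof but simply states that the lemma is a consequence of Egorov's theorem (Theorem~11.1 in \cite{Zworski_2012}), which is exactly the tool you invoke. Your care about the sign convention is appropriate, and once that bookkeeping is settled the argument reduces, as you say, to the standard fact that a composition of two operators in $\Psi_h^{comp}(X)$ with disjoint wavefront sets is $O_{L^2\to L^2}(h^\infty)$.
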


If $U,V$ are bounded open subsets of $T^*X$, and if $T,T' : L^2(X)\rightarrow L^2(X)$ are bounded operators, we shall say that $T\equiv T'$ \emph{microlocally} near $U\times V$ if there exist bounded open sets $\tilde{U}\supset \overline{U}$ and $\tilde{V} \supset \overline{V}$ such that for any $A,B\in \Psi_h^{comp}(X)$ with $WF(A)\subset \tilde{U}$ and $WF(B)\subset \tilde{V}$, we have
\begin{equation*}
A(T-T')B = O_{L^2\rightarrow L^2} (h^\infty)
\end{equation*}

\paragraph{Tempered distributions}
Let $u=(u(h))$ be an $h$-dependent family of distributions in $\mathcal{D}'(X)$. We say it is \emph{$h$-tempered} if for any bounded open set $U\subset X$, there exists $C\bel 0$ and $N\in \mathbb{N}$ such that
\begin{equation*}
\|u(h)\|_{H_h^{-N}(U)}\leq C h^{-N},
\end{equation*}
where $\|\cdot\|_{H_h^{-N}(U)}$ is the semiclassical Sobolev norm.

For a tempered distribution $u=(u(h))$, we say that a point $\rho\in T^*X$ does not lie in the wave front set $WF(u)$ if there exists a neighbourhood $V$ of $\rho$ in $T^*X$ such that for any $A\in \Psi_h^{comp}(X)$ with $WF(a)\subset V$, we have $Au=O(h^\infty)$.
\subsubsection{Lagrangian distributions and Fourier Integral Operators}\label{DSK}
\paragraph{Phase functions}
Let $\phi(x,\theta)$ be a smooth real-valued function on some open subset $U_\phi$ of $X\times \mathbb{R}^L$, for some $L\in \mathbb{N}$. We call $x$ the \emph{base variables} and $\theta$ the \emph{oscillatory variables}. We say that $\phi$ is a \emph{non degenerate phase function} if the differentials $d(\partial_{\theta_1} \phi)...d(\partial_{\theta_L}\phi)$ are linearly independent on the \emph{critical set }
\begin{equation*}
C_\phi:=\{ (x,\theta); \partial_\theta \phi =0 \} \subset U_\phi.
\end{equation*}
In this case
\begin{equation*}
\Lambda_\phi:= \{(x,\partial_x \phi(x,\theta)); (x,\theta)\in C_\phi \} \subset T^*X
\end{equation*}
is an immersed Lagrangian manifold. By shrinking the domain of $\phi$, we can make it an embedded Lagrangian manifold. We say that $\phi$ \emph{generates } $\Lambda_\phi$.

\paragraph{Lagrangian distributions}
Given a phase function $\phi$ and a symbol $a\in S^{comp}(U_\phi)$, consider the $h$-dependent family of functions
\begin{equation}\label{massai}
u(x;h)= h^{-L/2} \int_{\mathbb{R}^L} e^{i\phi(x,\theta)/h} a(x,\theta;h) d\theta.
\end{equation}
We call $u=(u(h))$ a \emph{Lagrangian distribution}, (or a \emph{Lagrangian state}) generated by $\phi$. By the method of non-stationary phase, if $supp ~a$ is contained in some $h$-independent compact set $K\subset U_\phi$, then
\begin{equation*}
WF_h(u)\subset \{(x,\partial_x \phi(x,\theta)); (x,\theta)\in C_\phi\cap K\}\subset \Lambda_\phi.
\end{equation*}

\begin{definition}\label{Grenoble}
Let $\Lambda\subset T^*X$ be an embedded Lagrangian submanifold. We say that an $h$-dependent family of functions $u(x;h)\in C_c^\infty(X)$ is a (compactly supported and compactly microlocalized) \emph{Lagrangian distribution associated to $\Lambda$}, if it can be written as a sum of finitely many functions of the form (\ref{massai}), for different phase functions $\phi$ parametrizing open subsets of $\Lambda$, plus an $O(h^\infty)$ remainder. We will denote by $I^{comp}(\Lambda)$ the space of all such functions.
\end{definition}

\paragraph{Fourier integral operators}
Let $X, X'$ be two manifolds of the same dimension $d$, and let $\kappa$ be a symplectomorphism from an open subset of $T^*X$ to an open subset of $T^*X'$. Consider the Lagrangian
\begin{equation*}
\Lambda_\kappa =\{(x',-\nu';x,\nu); \kappa(x,\nu)=(x',\nu')\}\subset T^*X'\times T^*X= T^*(X'\times X).
\end{equation*}
A compactly supported operator $U:\mathcal{D}'(X)\rightarrow C_c^\infty(X')$ is called a (semiclassical) \emph{Fourier integral operator} associated to $\kappa$ if its Schwartz kernel $K_U(x',x)$ lies in $h^{-d/2}I^{comp}(\Lambda_\kappa)$. We write $U\in I^{comp}(\kappa)$. The $h^{-d/2}$ factor is explained as follows: the normalization for Lagrangian distributions is chosen so that $\|u\|_{L^2}\sim 1$, while the normalization for Fourier integral operators is chosen so that $\|U\|_{L^2(X)\rightarrow L^2(X')} \sim 1$.

Note that if $\kappa\circ \kappa'$ is well defined, and if $U\in I^{comp}(\kappa)$ and $U'\in I^{comp}(\kappa')$, then $U\circ U'\in I^{comp} (\kappa\circ \kappa')$.

If $U\in I^{comp}(\kappa)$ and $O\subset T^*X$ is an open bounded set, we shall say that $U$ is \emph{microlocally unitary }near $O$ if $U^* U \equiv I_{L^2(X)\rightarrow L^2(X)}$ microlocally near $O\times \kappa (O)$.

\subsubsection{Local properties of Fourier integral operators}\label{chaussette}
In this section we shall see that, if we work locally, we may describe many Fourier integral operators without the help of oscillatory coordinates. In particular, following \cite[4.1]{NZ}, we will recall the effect of a Fourier integral operator on a Lagrangian distribution which has no caustics. We will recall in section \ref{birette} how this formalism may be applied to the study of the Schrödinger propagator.

Let $\kappa :T^*\mathbb{R}^d\rightarrow T^*\mathbb{R}^d$ be a local symplectic diffeomorphism. By performing phase-space translations, we may assume that $\kappa$ is defined in a neighbourhood of $(0,0)$ and that $\kappa(0,0)=(0,0)$. 

Without loss of generality, we can find linear Lagrangian subspaces, $\Gamma_j$, $\Gamma_j^\perp\subset T^*\mathbb{R}^d$, $j=0,1$, with the following properties:

\begin{itemize}
\item $\Gamma_j^\perp$ is transversal to $\Gamma_j$;
\item if $\pi_j$ (resp. $\pi_j^\perp$) is the projection $T^*\mathbb{R}^d\rightarrow \Gamma_j$ along $\Gamma_j^\perp$ (resp. the projection $T^*\mathbb{R}^d\rightarrow \Gamma_j^\perp$ along $\Gamma_j$), then, for some neighbourhood $U$ of $\rho_0$, the map
\begin{equation*}
\kappa(U) \times U\ni (\kappa(\rho),\rho)\mapsto \pi_1(\kappa(\rho))\times \pi_0^\perp\in \Gamma_1\times \Gamma_0^\perp
\end{equation*} 
is a local diffeomorphism from the graph of $\kappa|_U$ to a neighbourhood of the origin in $\Gamma_1\times \Gamma_0^\perp$.
\end{itemize}

Let $A_j$, $j=0,1$ be linear symplectic transformations with the properties
\begin{equation*}
A_j(\Gamma_j)=\{(x,0)\}\subset T^*\mathbb{R}^d \text{  and  } A_j(\Gamma_j^\perp)=\{(0,\xi)\}\subset T^*\mathbb{R}^d,
\end{equation*}
and let $M_j$ be \emph{metaplectic quantizations} of the $A_j$'s as defined in \cite[Appendix to chapter 7]{DSj}. Then the rotated diffeomorphism
\begin{equation*}
\tilde{\kappa}:= A_1\circ \kappa \circ A_0^{-1}
\end{equation*}
is such that the projection from the graph of $\tilde{\kappa}$
\begin{equation}\label{tantine}
T^*\mathbb{R}^d\times T^*\mathbb{R}^d \ni (x^1,\xi^1; x^0,\xi^0)\mapsto (x^1,\xi^0)\in \mathbb{R}^d\times \mathbb{R}^d,~~ (x^1,\xi^1)= \tilde{\kappa}(x^0,\xi^0),
\end{equation}
is a diffeomorphism near the origin. It then follows that there exists a unique function $\tilde{\psi}\in C^\infty(\mathbb{R}^d\times \mathbb{R}^d)$ such that for $(x^1,\xi^0)$ near $(0,0)$,
\begin{equation*}
\tilde{\kappa}(\tilde{\psi}'_\xi(x^1,\xi^0),\xi^0)=(x^1,\tilde{\psi}'_x(x^1,\xi^0)),~~ \det \tilde{\psi}''_{x\xi}\neq 0 \text{ and } \tilde{\psi}(0,0)=0.
\end{equation*}
The function $\tilde{\psi}$ is said to \emph{generate} the transformation $\tilde{\kappa}$ near $(0,0)$.

Note that if $\tilde{T}\in I^{comp}(\tilde{\kappa})$, then
\begin{equation}\label{echarpe}
T:= M_1^{-1}\circ \tilde{T}\circ M_0\in I^{comp}(\kappa).
\end{equation}

Thanks to assumption (\ref{tantine}), a Fourier integral operator $\tilde{T}\in I^{comp}(\tilde{\kappa})$ may then be written in the form
\begin{equation}\label{cookie}
\tilde{T} u(x^1):= \frac{1}{(2\pi h)^d}\int \int_{\mathbb{R}^{2n}} e^{i(\tilde{\psi}(x^1,\xi^0)-\langle x^0,\xi^0 \rangle/h} \alpha (x^1,\xi^0;h) u(x^0) dx^0 d\xi^0,
\end{equation}
with $\alpha\in S^{comp}(\mathbb{R}^{2d})$.

Now, let us state a lemma which was proven in \cite[Lemma 4.1]{NZ}, and which describes the effect of a Fourier integral operator of the form (\ref{cookie}) on a Lagrangian distribution which projects on the base manifold without caustics.

\begin{lemme}\label{pacal}
Consider a Lagrangian $\Lambda_0=\{(x_0,\phi_0'(x_0)); x\in \Omega_0\},
\phi_0\in C_b^\infty(\Omega_0)$, contained in a small neighbourhood
$V\subset T^*\mathbb{R}^d$ such that $\kappa$ is generated by $\psi$ near
$V$. We assume that
\begin{equation*}
\kappa(\Lambda_0)=\Lambda_1 = \{(x,\phi_1'(x)); x\in
\Omega_1\},~~\phi_1\in C_b^\infty(\Omega_1).
\end{equation*}
Then, for any symbol $a\in S^{comp}(\Omega_0)$,
 the application of a Fourier integral operator $T$ of the form (\ref{cookie}) to the Lagrangian state
\begin{equation*}
a(x) e^{i\phi_0(x)/h}
\end{equation*}
associated with $\Lambda_0$ can be expanded, for any $L \bel 0$, into
\begin{equation*}
T (a e^{i\phi_0/h})(x) = e^{i\phi_1(x)/h} \Big{(} \sum_{j=0}^{L-1} b_j(x)
h^j+ h^L r_L(x,h) \Big{)},
\end{equation*}
where $b_j\in S^{comp}$, and for any $\ell\in \mathbb{N}$, we have
\begin{equation*}
\begin{aligned}
\|b_j\|_{C^\ell(\Omega_1)}&\leq C_{\ell,j}
\|a\|_{C^{\ell+2j}(\Omega_0)},~~~~0\leq j\leq L-1,\\
\|r_L(\cdot,h)\|_{C^\ell(\Omega_1)}&\leq C_{\ell,L}
\|a\|_{C^{\ell+2L+n}(\Omega_0)}.
\end{aligned}
\end{equation*}
The constants $C_{\ell,j}$ depend only on $\kappa$, $\alpha$ and
$\sup_{\Omega_0} |\partial^\beta \phi_0|$ for $0<|\beta|\leq 2\ell +j$.
\end{lemme}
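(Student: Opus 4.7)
The plan is to reduce the statement to a $2d$-dimensional stationary phase computation. Starting from the oscillatory representation \eqref{cookie} of $\tilde T$ (we may assume $T=\tilde T$ after conjugation by the metaplectic operators $M_j$, since these preserve the Lagrangian form of the state modulo an analogous formula), applying $T$ to $a(x^0)e^{i\phi_0(x^0)/h}$ yields
\begin{equation*}
T(a e^{i\phi_0/h})(x^1)=\frac{1}{(2\pi h)^d}\iint e^{i\Psi(x^1,x^0,\xi^0)/h}\,\alpha(x^1,\xi^0;h)\,a(x^0)\,dx^0\,d\xi^0,
\end{equation*}
with total phase $\Psi(x^1,x^0,\xi^0)=\tilde\psi(x^1,\xi^0)-\langle x^0,\xi^0\rangle+\phi_0(x^0)$. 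The amplitude is compactly supported in $(x^1,\xi^0)$, and $a$ is compactly supported in $x^0$, so the integral is well-defined.

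The first step is to analyse the critical points of $\Psi$ in the $(x^0,\xi^0)$ variables. These are given by $\xi^0=\phi_0'(x^0)$ and $x^0=\tilde\psi'_{\xi^0}(x^1,\xi^0)$. By the generating function property $\tilde\kappa(\tilde\psi'_\xi(x^1,\xi^0),\xi^0)=(x^1,\tilde\psi'_x(x^1,\xi^0))$, a critical point corresponds exactly to a point $(x^0,\phi_0'(x^0))\in\Lambda_0$ whose image under $\kappa$ projects to $x^1\in\Omega_1$. Since $\kappa(\Lambda_0)=\Lambda_1$ and $\Lambda_1$ is the graph of $d\phi_1$, there is a unique such critical point $(x^0_c(x^1),\xi^0_c(x^1))$, depending smoothly on $x^1\in\Omega_1$, and moreover $\tilde\psi'_x(x^1,\xi^0_c)=\phi_1'(x^1)$. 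Differentiating the critical value $\Psi_c(x^1):=\Psi(x^1,x^0_c,\xi^0_c)$ and using that the derivatives with respect to $(x^0,\xi^0)$ vanish at the critical point, we get $\partial_{x^1}\Psi_c=\tilde\psi'_{x^1}(x^1,\xi^0_c)=\phi_1'(x^1)$, so $\Psi_c=\phi_1$ up to an additive constant (which we absorb into $\phi_1$). Non-degeneracy of the Hessian $\Psi''_{(x^0,\xi^0)}=\bigl(\begin{smallmatrix}\phi_0''&-I\\-I&\tilde\psi''_{\xi\xi}\end{smallmatrix}\bigr)$ follows from $\det\tilde\psi''_{x\xi}\neq 0$ together with the transversality of the graphs of $d\phi_0$ and $\tilde\kappa^{-1}(\mathrm{graph}\,d\phi_1)$ near the critical point; alternatively one sees it directly by writing $\det\Psi''_{(x^0,\xi^0)}=\pm\det(I-\tilde\psi''_{\xi\xi}\phi_0'')$, which is invertible once $V$ is small enough.

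The second step is to factor $e^{-i\phi_1(x^1)/h}$ out and apply the stationary phase lemma in $\mathbb R^{2d}$ (see for instance Theorem~3.16 of \cite{Zworski_2012}) with large parameter $1/h$ to the reduced phase $\Psi-\phi_1$, whose only critical point on the support of the amplitude is the non-degenerate one found above. For any $L\ge 1$ this produces an expansion
\begin{equation*}
T(ae^{i\phi_0/h})(x^1)=e^{i\phi_1(x^1)/h}\!\left(\sum_{j=0}^{L-1}h^j b_j(x^1)+h^L r_L(x^1,h)\right),
\end{equation*}
where each $b_j(x^1)$ is obtained by evaluating at the critical point a universal $2j$-th order differential operator (built from the inverse Hessian of $\Psi$ and depending smoothly on $x^1$ through $\tilde\psi$ and $\phi_0''$) applied to the product $\alpha(x^1,\xi^0;h)\,a(x^0)$, and $r_L$ is the stationary phase remainder.

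The last step, which is the technical heart of the argument, is to verify the stated $C^\ell$ bounds. The leading term reads $b_0(x^1)=\alpha(x^1,\xi^0_c;h)\,a(x^0_c)/\sqrt{|\det\Psi''_{(x^0,\xi^0)}|}$ times the standard Maslov phase factor, so $\|b_0\|_{C^\ell(\Omega_1)}\lesssim\|a\|_{C^\ell(\Omega_0)}$ with constants depending on $\kappa,\alpha$ and on derivatives of $\phi_0$ of order up to $\ell+2$ through the Hessian. For $j\ge 1$, the differential operator applied to $a$ has order $2j$, and its coefficients involve derivatives of $\phi_0$ of order up to $2j+2$ (entering through successive derivatives of $\Psi''$ and of the critical point map $x^1\mapsto(x^0_c,\xi^0_c)$), yielding the bound $\|b_j\|_{C^\ell}\le C_{\ell,j}\|a\|_{C^{\ell+2j}}$ with the declared dependence on $\sup|\partial^\beta\phi_0|$ for $|\beta|\le 2\ell+j$. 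The remainder bound is the standard quantitative stationary phase estimate: an integration by parts of order $2L+n$ on the reduced oscillatory integral produces the factor $h^L$ and loses $2L+n$ derivatives on the amplitude, whence $\|r_L(\cdot,h)\|_{C^\ell}\le C_{\ell,L}\|a\|_{C^{\ell+2L+n}}$. The main obstacle is bookkeeping: one must track how each derivative of the critical point map $x^1\mapsto(x^0_c,\xi^0_c)$ contributes an extra derivative of $\phi_0$, and check that the compositions stay in the classes claimed; once this is done, the estimates are precisely the standard ones for stationary phase with parameters.
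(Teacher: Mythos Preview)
The paper does not give its own proof of this lemma; it simply quotes it from \cite[Lemma~4.1]{NZ}. Your stationary phase argument is exactly the approach taken there, and the overall structure of your proof is correct.

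Two small points worth tightening. First, your justification of non-degeneracy of the Hessian (``invertible once $V$ is small enough'') is not the right mechanism: the non-degeneracy is equivalent, via the implicit function theorem, to the assumption that $\kappa(\Lambda_0)=\Lambda_1$ projects diffeomorphically onto $\Omega_1$, i.e.\ that $\Lambda_1$ is a graph. Smallness of $V$ plays no role here. Second, your derivative counting is slightly loose compared with the stated bound $|\beta|\le 2\ell+j$; the precise bookkeeping (which you yourself flag as the main obstacle) requires tracking how each $x^1$-derivative of the critical point map $x^1\mapsto(x^0_c,\xi^0_c)$ and of the Hessian factor feeds back through $\phi_0''(x^0_c(x^1))$, and this is where the factor $2\ell$ rather than $\ell+2$ arises. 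These are refinements rather than gaps; the argument stands.
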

\subsection{Assumptions on the generalized eigenfunctions}
We consider generalized eigenfunctions of $P_h$ at energy $1$, that is to say, a family of smooth functions $E_h\in C^\infty(X)$ indexed by $h\in(0,1]$ which satisfy
\begin{equation*}
(P_h-1) E_h=0.
\end{equation*} 

We will furthermore assume that these generalized eigenfunctions may be decomposed as follows.

\begin{hyp}\label{pied}
We suppose that $E_h$ can be put in the form
\begin{equation}\label{Poitiers}
E_h= E_h^0+E_h^1,
\end{equation}
where $E_h^0$ is a tempered distribution which is a Lagrangian state associated to a Lagrangian manifold which satisfies Hypothesis \ref{chaise} of invariance, as well as Hypothesis \ref{Happy} of transversality, and where $E_h^1$ is a tempered distribution such that
for each $\rho\in WF_h(E_h^1)$, we have $\rho\in \mathcal{E}$.

Furthermore, we suppose that $E_h^1$ is \emph{outgoing} in the sense that there exists $\epsilon_2\bel 0$ such that for all $\chi,\chi'\in C_c^\infty$ such that $\chi \equiv 1$ on $\{x\in X; b(x)\geq \epsilon_2\}$, there exists $T_\chi\bel 0$ such that for all $t\geq T_\chi$, we have
\begin{equation}\label{outgoing}
\Phi^t(WF\big{(}(1-\chi) \chi' E_h^1\big{)}\big{)} \cap spt (\chi) = \emptyset.
\end{equation}
\end{hyp}

The most natural example of such generalized eigenfunctions is given by distorted plane waves, which we are now going to define. Note that they depend on a parameter $\xi\in \Bound$, so that they actually form a whole family of generalized eigenfunctions.

It is also possible to define generalized eigenfunctions which satisfy Hypothesis \ref{pied} on manifolds which are hyperbolic near infinity. This is done in \cite[Appendix B]{Ing2}; the construction mainly follows \cite[Section 6]{DG}, but some work has to be done to check that $E_h^1$ is a tempered distribution.
\subsection{Distorted plane waves on Euclidean near infinity manifolds} \label{origami}
\begin{definition} We say that $X$ is \emph{Euclidean near infinity} if there exists a
compact set $X_0\subset X$ and a $R_0>0$ such that $X\backslash X_0$ has
finitely many connected components, which we denote by $X_1,...,X_l$, such
that for each $1\leq i \leq l$, $(X_i,g)$ is isometric to
$\big{(}\mathbb{R}^d\backslash B(0,R_0), g_{Eucl} \big{)}$.
\end{definition}

The surface in figure \ref{example} is an example of a Euclidean near
infinity manifold. Note that we may assume that $supp~ V\subset X_0$. Note
also that any Euclidean near infinity manifold fulfils hypothesis
\ref{Guepard}. Indeed, we may take a boundary defining function $b$ such
that $b(x)= (1+|x|^2)^{-1/2}$ if $x\in X_i$ which we identify with
$\mathbb{R}^d\backslash B(0,R_0)$.

To define distorted plane waves, we will simply give a definition of each of the two terms which compose them as in (\ref{Poitiers}).
\subsubsection{Definition of $E_h^0$}\label{country}
By definition of a Euclidean near infinity manifold, we have:
\begin{equation*}X=X_0 \sqcup \big{(}
\bigsqcup_{i=1}^N X_i\big{)},
\end{equation*}
with $X_0$ compact, and for each $1\leq i \leq N$, there exists an
isometric isomorphism
\begin{equation} \label{Torsac}
x_i : X_i \longrightarrow\mathbb{R}^d\backslash B(0,R_0),
\end{equation}
equipped with the Euclidean metric $g_0$.

The boundary of $\overline{X}$ may then be identified with a union of
spheres:
\begin{equation*}\partial X \cong \bigsqcup_{i=1}^N S_i,
\end{equation*}
with $S_i \cong \mathbb{S}^{n}.$

Let $\xi\in \partial \overline{X}$. We have
$\xi \in S_i$ for some $1\leq i \leq m$. Take a
smooth
function $\tilde{\chi} : X \longrightarrow [0;1]$ which vanishes outside of
$X_i$, and which is equal to 1 in a neighbourhood of $S_i$.

We define the incoming wave $E_h^0$ by
$E_h^0(\xi,\cdot) : X \longrightarrow \mathbb{C}$ by:
\begin{equation*}E_h^0(\xi,x) = \tilde{\chi}(x) e^{ \frac{i}{h} x_i(x)\cdot \xi}
\text { if } x\in X_i,~~~~ 0 \text{ otherwise}.
\end{equation*}

If we write $\mathcal{L}_0$ for the Lagrangian submanifold (with
boundaries) $X_i\times
\{\xi\}\subset T^*X$, then $E_h^0$ is a Lagrangian distribution associated to
$\mathcal{L}_0$, which satisfies Hypothesis \ref{chaise} of invariance.

\subsubsection{Definition of the distorted plane waves}\label{defdistorted}
Let us set 
\begin{equation*}
F_h:= -[P_h,\tilde{\chi}]E_h^0(\xi).
\end{equation*}
Note that we have $F_h\in S^{comp}(X)$.

Recall that the outgoing resolvent $R_h(1)$ is defined as
$R_h(1):= \lim\limits_{\epsilon\rightarrow 0^+}
(P_h-(1+i\epsilon)^2)^{-1}$, the limit being taken in the topology of
bounded operators from $L^2_{comp}(X)$ to $L^2_{loc}(X)$.

We shall use the following resolvent estimate, which was proven in
\cite{NZ}.
\begin{theoreme}\label{weasel} [Resolvent estimates for Euclidean near infinity manifolds]
Let $X$ be a Euclidean near infinity manifold such that
Hypothesis \ref{sieste} on hyperbolicity and Hypothesis \ref{Husserl} on topological pressure hold. Then for any $\chi \in
C_c^\infty(X)$, there exists
$C>0$ such that for all $0<h<h_0$, we have
\begin{equation}\label{japon}
\| \chi R_h(1)\chi \|_{L^2(X)\rightarrow L^2(X)} \leq
C\frac{\log(1/h)}{h}
\end{equation}
\end{theoreme}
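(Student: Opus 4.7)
The plan is to follow the strategy of Nonnenmacher--Zworski, whose proof proceeds in three broad steps: complex scaling, escape--function reduction, and a microlocal analysis of the propagator near the trapped set controlled by the topological pressure.

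First, since $X$ is Euclidean near infinity, one applies complex scaling outside a large ball, turning $P_h$ into a non-selfadjoint operator $P_{h,\theta}$ with discrete spectrum in a small complex neighbourhood of the energy $1$ (the resonances of $P_h$). A standard argument identifies $\chi R_h(1) \chi$ with $\chi (P_{h,\theta}-1)^{-1}\chi$ up to negligible errors, so the task reduces to bounding $\|(P_{h,\theta}-1)^{-1}\|$ in a strip $|{\rm Im}\,z| \lesssim h\log(1/h)$.

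Next, I would construct a global escape function $G \in C^\infty(T^*X)$ satisfying $\{p,G\} \leq -1$ outside a small neighbourhood of $K$ and supported in a bounded region of the energy surface. Conjugating $P_{h,\theta}$ by $\exp(tG^w/h)$ with $t \sim \log(1/h)$ produces an operator whose symbol has a strictly positive imaginary part away from any given neighbourhood of $K$. A standard G\aa rding--type argument then reduces the resolvent bound to the analogous bound for an operator which is effectively supported microlocally near $K$, at the cost of the factor $h^{-N}$ for some controllable $N$, which the $\log(1/h)$ factor can absorb.

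The heart of the argument, and the main obstacle, is the microlocal estimate near $K$. One builds an open hyperbolic quantum map by fixing a Poincar\'e section through $K$ (using the adapted coordinates of Lemma \ref{adap}) and quantizing the associated Poincar\'e return map, cut off by a family of pseudodifferential partitions of unity subordinate to an open cover of $K$ of small diameter. Iterating this quantum map for a time $N \sim \log(1/h)/\lambda$ (the Ehrenfest time) and summing over the finitely many symbolic sequences yields a bound of the form $\sum_{\alpha\in A_1^N} \|M_\alpha\|$, which by the Nonnenmacher--Zworski hyperbolic dispersion estimate is controlled by $\exp(N \cdot P(\varphi_u/2) + o(N))$, where $\varphi_u = -\tfrac12\log J^u$ is half the unstable Jacobian. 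Hypothesis \ref{Husserl} gives $P(\varphi_u/2)<0$, so this sum is bounded by $h^{c}$ for some $c>0$.

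Finally, I would combine the resolvent-propagator identity
\begin{equation*}
(P_{h,\theta}-1)^{-1} = \frac{i}{h}\int_0^T e^{it/h}\, U_\theta(t)\, dt + e^{iT/h} U_\theta(T)(P_{h,\theta}-1)^{-1},
\end{equation*}
with $T \sim \log(1/h)/\lambda$. The integral contributes $O(T/h) = O(\log(1/h)/h)$, while the second term, using the microlocal contraction near $K$ proved above together with the escape--function bound away from $K$, can be absorbed into the left-hand side. This gives the announced estimate \eqref{japon}.
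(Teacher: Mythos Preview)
The paper does not prove Theorem~\ref{weasel} at all: it is quoted as a black box, with the sentence ``We shall use the following resolvent estimate, which was proven in \cite{NZ}.'' So there is nothing to compare your argument against in this paper; the authors simply import the result from Nonnenmacher--Zworski.

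Your sketch is a reasonable high-level outline of the Nonnenmacher--Zworski strategy (complex scaling, escape-function conjugation to gain ellipticity away from $K$, and the hyperbolic dispersion estimate near $K$ driven by $\mathcal{P}(1/2)<0$). A couple of points are imprecise if you intend this as more than a roadmap. First, in \cite{NZ} the escape-function weight is $e^{sG^w}$ with $s>0$ a fixed small parameter, not $s\sim\log(1/h)$; taking $s$ of order $\log(1/h)$ would push the conjugated operator out of any standard pseudodifferential calculus, and the claim that the resulting $h^{-N}$ loss ``can be absorbed'' by the $\log(1/h)$ factor is not correct as stated. Second, the endgame in \cite{NZ} is not the resolvent--propagator identity you wrote but rather a direct norm estimate on the conjugated operator obtained by combining the gain from the escape function with the decay $\sum_{\beta}\|\tilde U_\beta\|\le e^{N t_0(\mathcal{P}(1/2)+\varepsilon)}$ near $K$; the $\log(1/h)/h$ comes out of choosing $N\sim M\log(1/h)$ in that argument. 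If you want a self-contained proof here you would need to tighten both of these steps; otherwise, citing \cite{NZ} as the paper does is the honest option.
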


We define
\begin{equation*}E_h^1 := R_h(1)F_h,
\end{equation*}
which is a tempered distribution thanks to Theorem \ref{weasel}. 

We then define the distorted plane wave as :
\begin{equation*}E_h^{\xi} := E_h^0+E_h^1.
\end{equation*}

To check the outgoing assumption on $E_h^1$, we must explain why that exists $\epsilon_2\bel 0$ such that for all $\chi,\chi'\in C_c^\infty$ such that $\chi \equiv 1$ on $\{x\in X; b(x)\geq \epsilon_2\}$, there exists $T_{\chi}\bel 0$ such that for all $t\bel T_{\chi}$, we have
\begin{equation}\label{sautemouton}
\Phi^t(WF\big{(}(1-\chi) \chi' E_h^1\big{)} \big{)} \cap spt (\chi)= \emptyset.
\end{equation}

From \cite[\S 6.2]{DG}, we know that for any $\rho\in WF_h(E_h^1)$, we have $\rho\in \mathcal{E}$, and either $\rho\in \Gamma^+$ or there exists a $t\bel 0$ such that $\Phi^{-t}(\rho) = (x,\xi)$ where $x\in spt (\partial \tilde{\chi})$, where $\tilde{\chi}$ is as in Section \ref{country}.

We may take $\epsilon_2 < \epsilon_0$ small enough so that $spt(\tilde{\chi})\subset \{x\in X, b(x) \bel \epsilon_2 \}$. Suppose that $\rho=(x,\xi)$ is such that $x\in spt(1-\chi) $ and $\pi_X(\Phi^{t}(\rho))\in spt(\chi)$. Then, by geodesic convexity, $(x,-\xi)\in \mathcal{DE}_+$. Therefore, since $spt(\tilde{\chi})\subset \{x\in X, b(x) \bel \epsilon_2 \}$ and $spt(1-\chi)\subset\{x\in X, b(x)< \epsilon_2\}$ and since $b$ decreases in the future along the trajectory of $(x,-\xi)$, it is impossible that there exists $t\bel 0$ such that $\Phi^{-t}(\rho) = (x,\xi)$ where $x\in spt (\partial \tilde{\chi})$. Therefore, if $\rho\in \Phi^t(WF\big{(}(1-\chi) \chi' E_h^1\big{)}\big{)} \cap spt (\chi)$, we must have $\rho \in \mathcal{DE}_+$.

On the other hand, if $\rho\in \mathcal{DE}_+$, then (\ref{sautemouton}) is always satisfied as long as $T_\chi$ is large enough so that $\Phi^{T_\chi}\big{(}\mathcal{DE}_+\cap T^*(spt(1-\chi))\big{)}\cap T^*spt(\chi)=\emptyset$. This shows that $E_h^1$ is outgoing.

Finally, one readily checks that we have, in the sense of PDEs:
\begin{equation*}(P_h-1) E^{\xi}_h=0.
\end{equation*}
We will sometimes simply write $E_h$ instead of $E_h^{\xi}$, to
avoid cumbersome notations.

The definition of $E_h$ seems to depend on the choices of the cut-off
functions we
made. Actually, the distorted plane waves can be defined in a much more
intrinsic fashion, using the structure of resolvent at infinity. We don't want to enter into the details here (see \cite[Section 6]{DG}, and
the references therein, or \cite[Chapter 2]{Mel}).

\subsection{Topological pressure} \label{press}
We shall now give a definition of topological pressure, so as to formulate Hypothesis \ref{Husserl}.
Recall that the distance $d$ was defined in section \ref{averse}, and that it was associated to the adapted metric. We say
that a set $\mathcal{S}\subset K$ is $(\epsilon,t)$-separated if for
$\rho_1, \rho_2\in \mathcal{S}$, $\rho_1 \neq \rho_2$, we have
$d(\Phi^{t'}(\rho_1),\Phi^{t'}(\rho_2))>\epsilon$ for some $0\leq t \leq
t'$. (Such a set is necessarily finite.)

The metric $g_{ad}$ induces a volume form $\Omega$ on any $d$-dimensional
subspace of $T(T^*\mathbb{R}^d)$. Using this volume form, we will define
the unstable Jacobian on $K$. For any $\rho\in K$, the determinant map
\begin{equation*}\Lambda^n d\Phi^t(\rho)|_{E_\rho^{+0}} : \Lambda^n E_\rho^{+0}
\longrightarrow \Lambda^n E_{\Phi^t(\rho)}^{+0}
\end{equation*}
can be identified with the real number
\begin{equation*}\det\big{(} d\Phi^t(\rho)|_{E_\rho^{+0}}\big{)} :=
\frac{\Omega_{\Phi^t(\rho)}\big{(}d\Phi^tv_1 \wedge
d\Phi^tv_2\wedge...\wedge d\Phi^tv_n\big{)}}{\Omega_\rho(v_1\wedge v_2
\wedge... \wedge v_n)},
\end{equation*}
where $(v_1,...,v_n)$ can be any basis of $E_\rho^{+0}$. This number
defines the unstable Jacobian:
\begin{equation}\label{defJaco}
\exp \lambda^+_t(\rho) := \det\big{(}
d\Phi^t(\rho)|_{E_\rho^{+0}}\big{)}.
\end{equation}
From there, we take
\begin{equation*}Z_t(\epsilon,s):= \sup \limits_{\mathcal{S}} \sum_{\rho \in \mathcal{S}}
\exp(-s\lambda_t^+(\rho)),
\end{equation*}
where the supremum is taken over all $(\epsilon,t)$-separated sets. The
pressure is then defined as
\begin{equation*}\mathcal{P}(s):= \lim \limits_{\epsilon \rightarrow 0} \limsup \limits_{t
\rightarrow \infty} \frac{1}{t} \log  Z_t(\epsilon,s) .
\end{equation*}
This quantity is actually independent of the volume form $\Omega$ and of the metric chosen: after
taking logarithms, a change in $\Omega$ or in the metric will produce a term $O(1)/t$,
which is not relevant in the $t\rightarrow \infty$ limit.
\begin{hyp} \label{Husserl}
We assume the following inequality on the topological pressure  associated with $\Phi^t$ on $K$:
\begin{equation} \label{Laurence}
\mathcal{P}(1/2)<0.
\end{equation}
\end{hyp}
We will give an equivalent definition of topological pressure in section
\ref{Benoit}, better suited to our purpose.

\subsection{Statement of the results concerning distorted plane waves}\label{statementMain}
We may now formulate our main result.

\begin{theoreme}\label{ibrahim}
 Suppose that the manifold $X$ satisfies Hypothesis \ref{Guepard} at infinity, and that the Hamiltonian flow $(\Phi^t)$ satisfies
Hypothesis \ref{sieste} on hyperbolicity and Hypothesis \ref{Husserl} concerning the topological pressure. Let $E_h$ be a generalized eigenfunction of the form described in Hypothesis \ref{pied}, where $E_h^0$ is associated to a Lagrangian manifold $\Lag$ which satisfies
the invariance Hypothesis \ref{chaise} as well as the transversality
Hypothesis \ref{Happy}.
 
Then there exists a finite set of points $(\rho_b)_{b\in B_1}\subset K$ and a family $(\Pi_b)_{b\in B_1}$ of operators in $\Psi^{comp}_h(X)$ microsupported in a small neighbourhood of $\rho_b$ such that $\sum_{b\in B_1} \Pi_b = I$ microlocally on a neighbourhood of $K$ in $T^*X$ such that the following holds.

Let $\mathcal{U}_b: L^2(X)\longrightarrow L^2(\mathbb{R}^d)$ be a Fourier integral operator quantizing the symplectic
change of local coordinates $\kappa_b:(x,\xi) \mapsto
(y^{\rho_b},\eta^{\rho_b})$, and which is microlocally unitary on the microsupport of $\Pi_b$.

For any $r>0$, there exists $M_{r}>0$ such that
 we have
\begin{equation}\label{greenday}\mathcal{U}_b \Pi_b E_h(y^{\rho_b}) = \sum_{n=0}^{\lfloor
M_{r,\ell} |\log h|\rfloor}
\sum_{\beta\in \tilde{\mathcal{B}}_n} e^{i \phi_{n,\beta,b}(y^{\rho_b})/h}
a_{n,\beta,b}(y^{\rho_b};h) + R_r,
\end{equation}
where $a_{n,\beta,b}\in S^{comp}(\mathbb{R}^d)$ are classical symbols, and each $\phi_{n,\beta,b}$ is a smooth function independent of $h$, and defined in a neighbourhood of the support of
$a_{n,\beta,b}$. The set $\tilde{\mathcal{B}}_n$ will be defined in (\ref{BN}) . Its cardinal behaves like some exponential of $n$.

We have the following estimate on the remainder
 \begin{equation*}\|R_r\|_{L^2}=O(h^r).
\end{equation*}

 For any $\ell\in \mathbb{N}$, $\epsilon>0$, there exists $C_{\ell,\epsilon}$
such that for all $n\geq 0$, for all $h\in (0,h_0]$, we have
\begin{equation}\label{sheriff}
\sum_{\beta\in \tilde{\mathcal{B}}_n} \|a_{n,\beta,b}\|_{C^\ell} \leq C_{\ell,\epsilon}
e^{n(\mathcal{P}(1/2)+\epsilon)}.
\end{equation}
\end{theoreme}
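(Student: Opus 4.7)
The overall plan is to combine Theorem~\ref{Cyril} on the iterated propagation of $\Lag$ with the elementary observation that since $(P_h-1)E_h=0$, the propagator satisfies $U(t)E_h=e^{-it/h}E_h$ for every $t\ge 0$. Hence for any $N\in\mathbb N$ one has the microlocal identity
\begin{equation*}
\Pi_b E_h \;\equiv\; e^{iN/h}\,\Pi_b\, U(N)\,E_h,
\end{equation*}
which I will exploit with $N=\lfloor M_r|\log h|\rfloor$ for a constant $M_r$ to be fixed at the end. Splitting $E_h=E_h^0+E_h^1$ via Hypothesis~\ref{pied}, the contribution $\Pi_b U(N)E_h^1$ should be $O(h^r)$: the outgoing condition (\ref{outgoing}) together with Lemma~\ref{theclash} drives the wavefront of $U(N)E_h^1$ away from any fixed microlocal neighbourhood of $K$ after time $T_\chi$, while the piece that still sits near $K$ is handled by the resolvent bound of Theorem~\ref{weasel} combined with the standard hyperbolic dispersion argument in \cite{NZ}. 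Choosing $M_r$ large enough to outbalance the $|\log h|/h$ growth of the resolvent yields the required bound on this remainder.

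The heart of the proof is the analysis of $\Pi_b U(N)E_h^0$. I would introduce the open cover $(W_a)_{a\in A}$ produced by Theorem~\ref{Cyril} and an associated microlocal partition of unity $(\pi_a)_{a\in A}\subset\Psi_h^{comp}(X)$, and iterate the identity $\sum_a\pi_a\equiv I$ at each time step to obtain
\begin{equation*}
\Pi_b\, U(N)\,E_h^0 \;\equiv\; \sum_{\alpha\in A^N}\Pi_b\, U_\alpha^N\, E_h^0,\qquad U_\alpha^N := \pi_{\alpha_{N-1}}U(1)\pi_{\alpha_{N-2}}\cdots\pi_{\alpha_0}U(1).
\end{equation*}
By Egorov's theorem each $U_\alpha^N$ is microlocally a Fourier integral operator associated with $\Phi^N$, and its action vanishes modulo $O(h^\infty)$ unless the classical trajectory stays in the prescribed sequence of cells. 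Lemmas~\ref{violoncelle} and~\ref{cardiff} discard sequences with a letter $0$ past a bounded initial stretch, and Theorem~\ref{Cyril} guarantees that for each admissible $\alpha$ the Lagrangian $W_{\alpha_{N-1}}\cap\Phi_\alpha^N(\Lag)$ is a single $\ins$-unstable Lagrangian in the straight coordinates centred at $\rho_b$. An iterated application of Lemma~\ref{pacal}, with the $C^\ell$ control granted by Remark~\ref{urss}, then writes $\mathcal U_b\,\Pi_b\, U_\alpha^N E_h^0 \equiv a_{N,\alpha,b}(y)\,e^{i\phi_{N,\alpha,b}(y)/h}$ with smooth phase and classical symbol, giving the announced decomposition (\ref{greenday}) after relabelling $\alpha$ by $(n,\beta)$ where $n:=N-\tau(\alpha)$ counts the time spent in the interaction region.

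The quantitative bound (\ref{sheriff}) is the hard part and is where Hypothesis~\ref{Husserl} enters decisively. The half-density transport equation along each tube forces $\|a_{N,\alpha,b}\|_{C^\ell}$ to decay like $\exp(-\tfrac12\lambda_N^+(\rho_\alpha))$ for any representative $\rho_\alpha$ of the tube near $K$, which is exactly the weight appearing in the definition of $\mathcal P(1/2)$ in Section~\ref{press}. Grouping admissible $\alpha$'s with a given value of $n$ and indexing the residual label $\beta\in\tilde{\mathcal B}_n$ by an $(\varepsilon,n)$-separated family of representatives, the definition of topological pressure gives
\begin{equation*}
\sum_{\beta\in\tilde{\mathcal B}_n}\|a_{n,\beta,b}\|_{C^\ell}\;\le\;C_{\ell,\epsilon}\,e^{n(\mathcal P(1/2)+\epsilon)},
\end{equation*}
and since $\mathcal P(1/2)<0$ this double sum converges uniformly in $h$, while the tail past $n=M_r|\log h|$ is polynomially small in $h$. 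The principal technical obstacle I anticipate is preserving uniform $C^\ell$ bounds on the amplitudes after $O(|\log h|)$ compositions of Fourier integral operators without degrading the square-root weighting, and it is precisely for this reason that Theorem~\ref{Cyril} was engineered so that the propagated pieces land in $\ins$-unstable cones in a \emph{symplectic} chart, where Lemma~\ref{pacal} can be applied iteratively with $h$-independent constants.
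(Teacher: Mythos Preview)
Your overall plan is correct and matches the paper's strategy, but the organization differs in one essential way and you gloss over two technical points that the paper treats carefully.

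First, the identity $\Pi_b E_h\equiv e^{iN/h}\Pi_b U(N)E_h$ is purely formal since $E_h\notin L^2$; the paper makes it rigorous via Lemma~\ref{voyante}, and \emph{iterating} that lemma produces the telescoping formula~(\ref{blues}),
\[
\chi E_h=(\chi\tilde U(t_0))^N\chi_{t_0}E_h+\sum_{k=1}^N(\chi\tilde U(t_0))^k(1-\chi)\chi_{t_0}E_h+O(h^\infty),
\]
which separates a tail term (the full $E_h$ propagated $N$ times, killed by Lemma~\ref{infinitif} via hyperbolic dispersion and the tempered bound on $E_h$) from a sum over entry times $k$. Your attempt to recover the sum over $n$ by sorting a single long propagation of $E_h^0$ according to $\tau(\alpha)$ is morally equivalent, but note that Hypothesis~\ref{chaise} concerns only the Lagrangian $\Lag$, not the amplitude $\tilde\chi$ of $E_h^0$; the telescoping together with~(\ref{inv}) is what makes the amplitude bookkeeping clean. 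Also, the outgoing condition~(\ref{outgoing}) only applies to $(1-\chi)\chi' E_h^1$, not to $E_h^1$ itself, so it is used to kill the $E_h^1$ contribution to each \emph{summand} (Lemma~\ref{flightoftherat}) rather than $\Pi_b U(N)E_h^1$ in one stroke.

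Second, the paper does not work with the cover $(W_a)_{a\in A}$ at step~$1$ but passes to the refined cover $(V_b)_{b\in B}$ at step~$t_0$ (Section~\ref{Benoit}), chosen so that the weights $e^{\frac12 S_{t_0}(V_b)}$ sum to at most $e^{t_0(\mathcal P(1/2)+2\epsilon_0)}$ as in~(\ref{rainy}); without this refinement the pressure estimate does not follow from a single layer of the cover. Finally, an ``iterated application of Lemma~\ref{pacal}'' over $O(|\log h|)$ steps is not automatic: the paper invokes Proposition~\ref{haroun} (the iterated-FIO proposition of \cite{NZ}) to control the polynomial-in-$N$ growth in~(\ref{republique})--(\ref{antigna}) and keep all remainders uniform, and this is exactly the mechanism that preserves the square-root weighting you worry about in your last paragraph.
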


\begin{remarque}
This theorem can be considered as a quantum analogue of Theorem \ref{Cyril}. Indeed, as we explained in section \ref{boulgakov}, we will prove it by describing the evolution of the Schrödinger flow of Lagrangian states, while Theorem \ref{Cyril} described the evolution by the Hamiltonian flow of associated Lagrangian manifolds. Actually, the sets containing the microsupports of the operators $(\Pi_b)_{b\in B_1}$ will be built from the sets $(W_a)_{a\in A_1}$ constructed in Theorem \ref{Cyril}, as explained in section \ref{Benoit}.
\end{remarque}

\begin{remarque}
The remainder $R_r$ is compactly microlocalised, since the other two terms in the decomposition (\ref{greenday}) are compactly microlocalised. Therefore, for any $\ell \in \mathbb{N}$, by possibly taking $M_r$ larger, we may ask that
$$\|R_r\|_{C^\ell}= O(h^r).$$
\end{remarque}

Theorem \ref{ibrahim} may be used to identify the semiclassical measures associated to our generalized eigenfunctions, as in Theorem \ref{blacksabbath20}. We shall do this only microlocally close to the trapped set, since the expression for the semiclassical measure on the whole manifold may become very complicated. 
 
Let us denote by $\pi_b$ the principal symbol of the operators $\Pi_b$ introduced in the statement of Theorem \ref{ibrahim}. The following corollary is a more precise version of  (the second part of) Theorem \ref{blacksabbath20}.

 \begin{corolaire} \label{blacksabbath}
There exists a constant $0< c \leq 1$ and functions $e_{n,\beta,b}$ for $n\in \mathbb{N}$, $\beta\in \tilde{\mathcal{B}}_n$ and $b\in B_1$ such that for any $a\in C_c^\infty(T^*X)$ and for any $\chi\in C_c^\infty(X)$, we have
\begin{equation*}\langle Op_h(\pi_b^2 a) \chi E_h,  \chi E_h\rangle = \int_{T^*X} a(x,v)
\mathrm{d}\mu_{b,\chi}(x,v) +
O(h^c),
\end{equation*}
with \begin{equation*}\mathrm{d}\mu_{b,\chi}(\kappa_b^{-1}(y^{\rho_b},\eta^{\rho_b})) = \sum_{n=0}^{\infty} \sum_{\beta\in
\tilde{\mathcal{B}}_n}  e_{n,\beta,b}(y^{\rho_b}) \delta_{\{\eta^{\rho_b}=\partial
\phi_{j,n}(y^{\rho_b})\}} d y^{\rho_b},
\end{equation*}
The functions $e_{n,\beta,b}$ satisfy an exponential decay estimate as in (\ref{sheriff}).
\end{corolaire}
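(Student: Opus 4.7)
The plan is to substitute the Lagrangian-state decomposition of $\mathcal{U}_b \Pi_b E_h$ from Theorem \ref{ibrahim} into the matrix element and evaluate term by term. Since $\Pi_b \in \Psi_h^{comp}$ has principal symbol $\pi_b$, the symbolic calculus gives $\Pi_b^* Op_h(a) \Pi_b = Op_h(\pi_b^2 a) + O_{L^2 \to L^2}(h)$, so using the uniform bound on $\|\chi E_h\|_{L^2}$ from Theorem \ref{davies},
\begin{equation*}
\langle Op_h(\pi_b^2 a) \chi E_h, \chi E_h\rangle = \langle Op_h(a)\, \Pi_b \chi E_h,\, \Pi_b \chi E_h\rangle + O(h).
\end{equation*}
Commuting $\chi$ past $\Pi_b$ modulo $O(h)$, applying $\chi Op_h(a) \chi = Op_h(\chi^2 a) + O(h)$, and conjugating by the microlocally unitary Fourier integral operator $\mathcal{U}_b$ via Egorov's theorem reduces the problem to
\begin{equation*}
\langle Op_h(c)\, u_h,\, u_h\rangle + O(h), \qquad c := (\chi^2 a) \circ \kappa_b^{-1}, \qquad u_h := \mathcal{U}_b \Pi_b E_h.
\end{equation*}

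Writing $u_h = \sum_{n,\beta} v_{n,\beta} + R_r$ with $v_{n,\beta} := e^{i \phi_{n,\beta,b}/h} a_{n,\beta,b}$ from (\ref{greenday}), the terms involving $R_r$ contribute $O(h^r)$ for any $r$, using the uniform bound $\|u_h\|_{L^2} \leq C \sum_n e^{n(\mathcal{P}(1/2) + \epsilon)} < \infty$ which follows from (\ref{sheriff}) and Hypothesis \ref{Husserl}. For each diagonal term, the standard WKB asymptotic for a pseudodifferential operator acting on a Lagrangian state gives
\begin{equation*}
\langle Op_h(c)\, v_{n,\beta}, v_{n,\beta}\rangle = \int c\big(y, \partial \phi_{n,\beta,b}(y)\big)\, |a^0_{n,\beta,b}(y)|^2\, dy + O\big(h\, \|a_{n,\beta,b}\|_{C^{\ell_0}}^2\big),
\end{equation*}
where $a^0_{n,\beta,b}$ is the principal symbol. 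Defining $e_{n,\beta,b}(y)$ as the product of $\chi^2$ evaluated at the base projection of $\kappa_b^{-1}(y, \partial \phi_{n,\beta,b}(y))$ with $|a^0_{n,\beta,b}(y)|^2$ produces the claimed formula for $\mu_{b,\chi}$ as the sum of diagonal contributions. The aggregated $O(h)$ error is controlled via Cauchy--Schwarz by $O(h) \sum_n \big(\sum_\beta \|a_{n,\beta,b}\|_{C^{\ell_0}}\big)^2 \leq O(h) \sum_n e^{2n(\mathcal{P}(1/2)+\epsilon)} = O(h)$, and the same estimate yields both $\sum_{n,\beta} \|e_{n,\beta,b}\|_{C^0} < \infty$ and the exponential decay asserted in the corollary.

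The main obstacle is the control of the off-diagonal sum $\sum_{(n,\beta) \neq (n',\beta')} \langle Op_h(c)\, v_{n,\beta}, v_{n',\beta'}\rangle$. The key structural input is that by Lemma \ref{tisane} and the invertibility of the flow, the pieces $W_a \cap \Phi_\alpha^N(\Lag)$ for distinct symbolic sequences $\alpha$ are pairwise disjoint subsets of $T^*X$, so for $(n,\beta) \neq (n',\beta')$ one has $\partial \phi_{n,\beta,b}(y) \neq \partial \phi_{n',\beta',b}(y)$ at every common base point $y$, and the phase $\phi_{n,\beta,b} - \phi_{n',\beta',b}$ has no critical points on the common support. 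Repeated integration by parts then yields an almost-orthogonality estimate of the form
\begin{equation*}
\big|\langle Op_h(c)\, v_{n,\beta}, v_{n',\beta'}\rangle\big| \leq C_N\, (h/\delta_{n,\beta;n',\beta'})^N\, \|a_{n,\beta,b}\|_{C^N}\, \|a_{n',\beta',b}\|_{C^N},
\end{equation*}
where $\delta_{n,\beta;n',\beta'}$ is a lower bound on the fiber separation of the two Lagrangians. The delicate point is that pairs of Lagrangians may come arbitrarily close; I would split the off-diagonal sum into well-separated pairs (handled by integration by parts, absorbing the $O(|\log h|^2)$-many indices coming from the $n \leq M_r |\log h|$ truncation) and nearby pairs (handled by a direct Cauchy--Schwarz estimate together with a bounded-multiplicity argument using the $\ins$-unstable cone structure provided by Theorem \ref{Cyril}). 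Summing these bounds and absorbing all $O(h)$ and $O(h^r)$ errors yields the corollary with some $c > 0$.
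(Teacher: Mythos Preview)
Your reduction to Lagrangian states via the symbolic calculus and Egorov conjugation is essentially the same as in the paper, and your diagonal computation is correct. However, two genuine gaps prevent the off-diagonal argument from going through.

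First, your assertion that $(n,\beta)\neq(n',\beta')$ forces $\partial\phi_{n,\beta,b}(y)\neq\partial\phi_{n',\beta',b}(y)$ everywhere is false. The sets $V_b$ are an open cover and overlap, so distinct symbolic words can label the \emph{same} piece of propagated Lagrangian; indeed the paper explicitly allows $\partial\phi_{n,\beta,b}=\partial\phi_{n',\beta',b}$ on an open-and-closed subset $Y_{n',\beta'}$ of the common support. Consequently the correct density is not $|a^0_{n,\beta,b}|^2$ but $a^0_{n,\beta,b}\,\overline{\tilde a^0_{n,\beta,b}}$, where $\tilde a_{n,\beta,b}$ collects all symbols whose phase agrees with $\phi_{n,\beta,b}$ on that subset. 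Lemma~\ref{tisane} does not say what you claim: it asserts that a \emph{single} truncated propagation yields one leaf, not that leaves coming from different sequences are disjoint.

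Second, and more seriously, you are missing the quantitative separation estimate that drives the whole argument. The paper proves (Proposition~\ref{velvetunderground} and Corollary~\ref{petrole}) that whenever the two gradients differ, they differ by at least $C_1' e^{-C_2'\min(n,n')}$. This is the key lemma: one then \emph{chooses} the truncation level $M|\log h|$ with $M<1/(2C_2')$, so that on the truncated sum the separation is always $\geq C h^{1/2-\epsilon}$, and a single application of non-stationary phase (Proposition~\ref{nonstat}) makes every off-diagonal term $O(h^\infty)$. The tail beyond $M|\log h|$ is then controlled by (\ref{sheriff}), yielding the explicit exponent $c$. Your proposed splitting into ``well-separated'' and ``nearby'' pairs, with a bounded-multiplicity argument for the latter, is not a substitute: without the exponential lower bound you have no way to choose the threshold, no control on how many pairs fall on the ``nearby'' side, and the $C^N$-norm losses from repeated integration by parts (which grow with $n$ by (\ref{republique})) are not absorbed.
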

The functions $e_{n,\beta,b}$ will be closely related to $a^0_{n,\beta,b}(y^{\rho_b})$, the principal symbol of $a_{n,\beta,b}(y^{\rho_b})$ appearing in (\ref{greenday}). Actually, $e_{n,\beta,b}(y^{\rho_b})$ will either be the square of the modulus of $a^0_{n,\beta,b}(y^{\rho_b})$, or the square of the modulus of the sum of a finite number of $a^0_{n,\beta,b}(y^{\rho_b})$, for different values of $n$ and $\beta$. These different terms will come from the fact that a point may belong to $\Phi_\beta^{n,t_0}(\Lag)$ for several values of $n,\beta$.

\subsection{Strategy of proof}
To study the asymptotic behaviour of the distorted plane wave as $h$ goes
to zero, we would like to write that $\tilde{U}(t)E_h=E_h$, where
$\tilde{U}(t):=e^{it/h}U(t)$. However, this
equation can only be formal, because $E_h\notin L^2(X)$. Instead, we use
the following lemma from \cite{DG} (Lemma 3.10):
\begin{lemme}\label{voyante}
Let $\chi\in C_c^\infty (X)$. Take $t \in \mathbb{R}$, and a cut-off
function $\chi_t\in C_c^\infty(X)$ be supported in the interior of a
compact set $K_t$, such that
\begin{equation*}d_g(supp \chi, supp (1-\chi_t))>2 |t|,
\end{equation*}
where $d_g$ denotes the Riemannian distance on $M$.
Then, for any $\xi\in
\mathbb{S}^d$, we have
\begin{equation}\label{fmjh}
\chi E_h =\chi \tilde{U}(t) \chi_t
E_h + O(h^\infty
\|E_h\|_{L^2(K_t)}).
\end{equation}
\end{lemme}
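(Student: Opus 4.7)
The plan is to exploit the generalized eigenvalue equation $(P_h-1)E_h=0$ to rewrite $\tilde{U}(t)\chi_t E_h$ as $\chi_t E_h$ plus a commutator remainder that is microlocally disjoint from $\mathrm{supp}\,\chi$. Since $\chi_t E_h$ is compactly supported and hence lies in $L^2$, the curve $s\mapsto \tilde{U}(s)\chi_t E_h$ is a genuine $L^2$-valued function. Differentiating and using $\partial_s\tilde{U}(s)=\tfrac{i}{h}\tilde{U}(s)(1-P_h)$ together with $(1-P_h)\chi_t E_h=-[P_h,\chi_t]E_h$ (which follows from $(P_h-1)E_h=0$), integration from $0$ to $t$ gives
\begin{equation*}
\tilde{U}(t)\chi_t E_h=\chi_t E_h-\tfrac{i}{h}\int_0^t\tilde{U}(s)[P_h,\chi_t]E_h\, ds.
\end{equation*}
Multiplying by $\chi$ and noting that $\chi\chi_t=\chi$ (the distance condition forces $\chi_t\equiv 1$ on $\mathrm{supp}\,\chi$), the task reduces to bounding
\begin{equation*}
\tfrac{i}{h}\int_0^t\chi\tilde{U}(s)[P_h,\chi_t]E_h\, ds
\end{equation*}
by $O(h^\infty\|E_h\|_{L^2(K_t)})$.

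Observe that $[P_h,\chi_t]=-h^2[\Delta_g,\chi_t]$ is a first-order semiclassical differential operator of $L^2$-size $O(h)$ whose coefficients are supported in $\mathrm{supp}\,d\chi_t\subset\mathrm{supp}(1-\chi_t)$. Choose $\chi'_t\in C_c^\infty(K_t)$ with $\chi'_t\equiv 1$ on a neighbourhood of $\mathrm{supp}\,d\chi_t$; then $[P_h,\chi_t]E_h=[P_h,\chi_t](\chi'_t E_h)$ (since $[P_h,\chi_t]$ only involves values of its argument on $\mathrm{supp}\,d\chi_t$), and $\|\chi'_t E_h\|_{L^2}\leq \|E_h\|_{L^2(K_t)}$. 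Because $E_h$ is a generalised eigenfunction of $P_h$ at energy $1$, its wavefront set lies in $\mathcal{E}$, so $WF_h([P_h,\chi_t]\chi'_t E_h)\subset T^*(\mathrm{supp}\,d\chi_t)\cap\mathcal{E}$. The Hamilton flow on $\mathcal{E}$ at energy $1$ satisfies $|\dot x|_g=2|\xi|_g\leq 2$, so for every $s\in[0,t]$,
\begin{equation*}
\Phi^s\!\left(T^*(\mathrm{supp}\,d\chi_t)\cap\mathcal{E}\right)\cap T^*(\mathrm{supp}\,\chi)=\emptyset.
\end{equation*}

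From here I insert a microlocal cut-off $A\in\Psi^{comp}_h(X)$ with $WF_h(A)$ in a small neighbourhood of $T^*(\mathrm{supp}\,d\chi_t)\cap\mathcal{E}$ and equal to the identity microlocally on a neighbourhood of $WF_h([P_h,\chi_t]\chi'_t)$, so that $[P_h,\chi_t]\chi'_t=A[P_h,\chi_t]\chi'_t+O(h^\infty)_{L^2\to L^2}$. Then Lemma \ref{theclash}, combined with the preceding flow-transversality statement, gives $\chi\tilde{U}(s)A=O_{L^2\to L^2}(h^\infty)$; uniformity over $s\in[0,t]$ follows from compactness together with the openness of the non-intersection condition. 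Plugging back into the integral and using the $O(h)$ operator bound on $[P_h,\chi_t]\chi'_t$ yields $\|\chi\tilde{U}(s)[P_h,\chi_t]E_h\|_{L^2}=O(h^\infty\|E_h\|_{L^2(K_t)})$ uniformly in $s$; integrating over $[0,t]$ and absorbing the $h^{-1}$ prefactor proves the lemma.

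The main obstacle is the uniform-in-$s$ Egorov estimate together with the bookkeeping needed because $E_h\notin L^2$: every manipulation has to pass through the compactly supported truncation $\chi'_t E_h$, and one must check carefully that $[P_h,\chi_t]$ really only ``sees'' the values of $E_h$ on $\mathrm{supp}\,d\chi_t\subset K_t$ so that the $L^2(K_t)$ norm indeed controls the output of the argument.
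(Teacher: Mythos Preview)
The paper does not prove this lemma; it simply quotes it from \cite{DG}, Lemma 3.10. Your Duhamel-plus-Egorov argument is the standard proof and is essentially what one expects to find there.

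Two small technical points are worth tightening. First, the identity $[P_h,\chi_t]\chi'_t = A[P_h,\chi_t]\chi'_t + O(h^\infty)_{L^2\to L^2}$ is false as an \emph{operator} identity: the left-hand side is a differential operator whose wavefront set fills all of $T^*(\mathrm{supp}\,d\chi_t)$, not just a neighbourhood of $\mathcal{E}$. What you actually need is $(I-A)[P_h,\chi_t]\chi'_t E_h = O(h^\infty\|E_h\|_{L^2(K_t)})$, which follows from elliptic regularity for $P_h-1$ away from $\mathcal{E}$; to get the remainder controlled by $\|E_h\|_{L^2(K_t)}$ rather than by some unspecified tempered norm, one iterates with a finite chain of nested cut-offs $\chi'_t\prec\chi''_t\prec\cdots$ all supported in $K_t$. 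Second, the bound $|\xi|_g\leq 1$ on $\mathcal{E}$ holds only where $V\geq 0$; if $V$ is somewhere negative the classical speed $2|\xi|_g=2\sqrt{1-V}$ can exceed $2$, so the constant ``$2$'' in the distance hypothesis is only literally correct under that sign assumption. This is a defect of the constant in the stated lemma rather than of your method, and is harmless in the paper since the lemma is applied with $t=t_0$ fixed and one may simply enlarge the cut-off radius.
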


Since $E_h$ is a tempered distribution by assumption, we have, for any $t>0$ and $\chi\in C_c^\infty(X)$:
\begin{equation*}\|\chi E_h- \chi \tilde{U}(t) \chi_{t}
E_h\|_{L^2} = O(h^\infty),
\end{equation*}
where $\chi_t$ is as in Lemma \ref{voyante}.

We may then iterate this equation as follows: we write that $\chi_{t}=\chi
+ \chi_{t}(1-\chi)$, and obtain
\begin{equation*}\chi E_h= \chi \tilde{U}(t) \big{(} (1-\chi)\chi_{t}\big{)}E_h + \chi \tilde{U}(t)\chi
\tilde{U}(t)\chi_{t} E_h+O(h^\infty).
\end{equation*}
We may iterate this method to times $Nt\leq Mt |\log h|$ for any given
$M>0$. We obtain
\begin{equation}\label{blues}
\chi E_h =(\chi \tilde{U}(t))^N \chi_t E_h+ \sum_{k=1}^{N} (\chi \tilde{U}(t))^k
(1-\chi)\chi_{t} E_h +O(h^\infty).
\end{equation}
Now, choose $\chi\in C_c^\infty(X)$ as in Hypothesis \ref{pied}, and take $t\bel T_\chi$.

\begin{lemme}\label{flightoftherat}
Let $t>T_\chi$, $M>0$, and $\chi\in C_c^\infty(X)$ be such that $\chi\equiv 1$ on $\{x\in X, b(b)\bel \epsilon_2\}$, where $\epsilon_2<\epsilon_0$ is as in Hypothesis \ref{pied}. For any $k\leq M|\log h|$, we
have \begin{equation*}\|(\chi \tilde{U}(t))^k (1-\chi)\chi_{t}
E_h^1\|_{L^2} = O(h^\infty).
\end{equation*}
\end{lemme}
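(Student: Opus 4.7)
The plan is to reduce the whole statement to a single-application bound $\|\chi\tilde{U}(t)v_0\|_{L^2} = O(h^\infty)$ with $v_0 := (1-\chi)\chi_t E_h^1$, and then to propagate this smallness through the remaining $k-1$ iterates using the trivial $L^2$-boundedness of $\chi\tilde{U}(t)$.

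For the single-step bound I would apply the outgoing property in Hypothesis \ref{pied} with $\chi' = \chi_t$ (an admissible choice since $\chi_t \in C_c^\infty(X)$ and $\chi\equiv 1$ where $b\geq \epsilon_2$): for $t \bel T_\chi$ one has
\[
\Phi^t\bigl(WF_h(v_0)\bigr)\cap \pi^{-1}(\operatorname{spt}\chi) \;=\; \emptyset,
\]
where we used that multiplication by $(1-\chi)\chi_t\in C_c^\infty$ does not enlarge $WF_h$. Semiclassical propagation of singularities for the Schrödinger group (Egorov, in the spirit of Lemma \ref{theclash}) gives $WF_h(\tilde{U}(t)v_0) = \Phi^t(WF_h(v_0))$, and multiplication by $\chi$ then forces $WF_h(\chi\tilde{U}(t)v_0) = \emptyset$. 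Since $E_h^1$ is tempered by Hypothesis \ref{pied}, and both multiplication by compactly supported smooth functions and the unitary propagator $\tilde{U}(t)$ preserve temperedness, $\chi\tilde{U}(t)v_0$ is a compactly supported tempered distribution with empty semiclassical wave front set; a partition of unity over $\operatorname{spt}\chi$ combined with the standard fact that absence of $WF_h$ above a neighbourhood yields $H_h^N$-decay of any order gives $\|\chi\tilde{U}(t)v_0\|_{L^2} = O(h^\infty)$.

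For the iteration, $\tilde{U}(t)$ is unitary on $L^2$ and $\chi\in C_c^\infty$ is bounded, so $\chi\tilde{U}(t): L^2\to L^2$ has operator norm bounded by some constant $C$ independent of $h$. Hence for any $k \leq M|\log h|$,
\[
\|(\chi\tilde{U}(t))^k v_0\|_{L^2} \;\leq\; C^{\,k-1}\,\|\chi\tilde{U}(t)v_0\|_{L^2} \;\leq\; h^{-M\log C}\cdot O(h^\infty) \;=\; O(h^\infty),
\]
because $C^{M|\log h|} = h^{-M\log C}$ is only a polynomial loss in $h^{-1}$, which is absorbed by the $O(h^\infty)$. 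The only delicate point is the wave-front / temperedness bootstrap in the first iterate; once that is in hand the rest is just geometric-series bookkeeping, and crucially the argument does not rely on the sharp resolvent estimate of Theorem \ref{weasel} beyond what is already packaged into the temperedness of $E_h^1$.
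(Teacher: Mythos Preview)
Your proposal is correct and follows exactly the same approach as the paper: reduce to the single-step estimate $\|\chi\tilde U(t)(1-\chi)\chi_t E_h^1\|_{L^2}=O(h^\infty)$, deduce it from the outgoing condition (\ref{outgoing}) via propagation of singularities, and absorb the at-most-polynomial loss from the remaining $k-1\leq M|\log h|$ iterates. The paper's proof is two sentences; you have simply written out the details behind those sentences.
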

\begin{proof}
We only have to prove that $\|(\chi \tilde{U}(t)) (1-\chi)\chi_{t}
E_h^1\|_{L^2} = O(h^\infty)$. This is a consequence of (\ref{outgoing}) in Hypothesis \ref{pied}.
\end{proof}

Therefore, we have for any $\chi \in C_c^\infty(X)$ as in Lemma \ref{flightoftherat}:
\begin{equation}\label{ripoux}
\chi E_h =(\chi \tilde{U}(t))^N \chi_t
E^0_h+(\chi \tilde{U}(t))^N \chi_t E^1_h+ \sum_{k=1}^{N} (\chi \tilde{U}(t))^k
(1-\chi)\chi_{t} E^0_h +O(h^\infty).
\end{equation}
Let us now introduce tools from \cite{NZ} to analyse these terms in more details.
\section{Tools for the proofs of Theorem \ref{ibrahim}}\label{tools}
\subsection{Another definition of Topological Pressure}\label{Benoit}

Recall that $\mathcal{E}_\mathbf{E}$ and $K_\mathbf{E}$ were defined in (\ref{deflayer}) and (\ref{deftrap}) respectively.
For any $\delta\bel 0$ small enough so that (\ref{struc}) holds, we define
\begin{equation*}
\mathcal{E}^\delta := \bigcup_{|\mathbf{E}-1|< \delta} \mathcal{E}_{\mathbf{E}}, ~~ 
K^\delta := \bigcup_{|\mathbf{E}-1|< \delta} K_{\mathbf{E}}.
\end{equation*}

Let $\mathcal{W}=(W_a)_{a\in A_1}$ be a finite open cover of
$K^{\delta/2}$, such that the $W_a$ are all strictly included in
$\mathcal{E}^\delta$ and of diameter $<\varepsilon_0$, where $\varepsilon_0$
comes from Theorem \ref{Cyril}.
For any $T\in \mathbb{N}^*$, define $W(T):=(W_\alpha)_{\alpha \in A_1^T}$ by
\begin{equation*}W_\alpha := \bigcap_{k=0}^{T-1} \Phi^{-k}(W_{a_k}),
\end{equation*} where
$\alpha=a_0,..,a_{T-1}$.
Let $\mathcal{A}'_T$ be the set of $\alpha\in A_1^T$ such that $W_\alpha\cap
K^\delta \neq \emptyset$.
If $V\subset \mathcal{E}^ \delta$, $V\cap K^{\delta/2} \neq \emptyset$,
define
\begin{equation*}S_T(V) := -\inf_{\rho\in V\cap K^{ \delta/2}} \lambda^+_T(\rho),~~~~~\text{  with } \lambda_T^+\text{ as in (\ref{defJaco})}.
\end{equation*}
\begin{equation*}Z_T(\mathcal{W},s):=\inf \big{\{ } \sum_{\alpha\in \mathcal{A}_T} \exp
\{s S_T (W_\alpha)\} : \mathcal{A}_T\subset \mathcal{A'}_T, K^{\delta/2}
\subset \bigcup_{\alpha \in \mathcal{A}_T} W_\alpha \big{\} }
\end{equation*}
\begin{equation*} \mathcal{P}^\delta (s):= \lim \limits_{diam \mathcal{W} \rightarrow
0} \lim \limits_{T \rightarrow \infty} \frac{1}{T} \log
Z_T(\mathcal{W},s).
\end{equation*}
The topological pressure is then:
\begin{equation} \label{Franck}
\mathcal{P} (s) = \lim \limits_{\delta \rightarrow 0} \mathcal{P}^
\delta (s).
\end{equation}
Recall that we assumed that
\begin{equation*}\mathcal{P}(1/2)<0.
\end{equation*}
Let us fix $\epsilon_0>0$ so that $\mathcal{P}(1/2)+2\epsilon_0 < 0$. Then there exists
$t_0>0$, and $\hat{\mathcal{W}}$ an open cover of $K^\delta$
with $diam (\hat{\mathcal{W}})<\secur$ such that
\begin{equation}\label{Tea}
\Big{|} \frac{1}{t_0} \log Z_{t_0} (\hat{\mathcal{W}},s) -
\mathcal{P}^\delta (s) \Big{|} \leq \epsilon_0.
\end{equation}
We can find $\mathcal{A}_{t_0}$ so that $\{W_\alpha, \alpha \in
\mathcal{A}_{t_0} \}$ is an open cover of $K^\delta$ in
$\mathcal{E}^\delta $ and such that
\begin{equation*} \sum_{\alpha\in \mathcal{A}_{t_0}} \exp \{s S_{t_0} (W_\alpha) \} \leq
\exp \{ t_0 (\mathcal{P}^\delta (s) + \epsilon_0) \} .
\end{equation*}
Therefore, if we take $\delta$ small enough, and if we rename $\{W_\alpha,
\alpha \in \mathcal{A}_{t_0} \}$ as $\{V_b, b \in B_1\}$, we have:
\begin{equation}\label{rainy} \sum_{b\in B_1} \exp \{\frac{1}{2} S_{t_0} (V_b) \} \leq \exp \{ t_0
(\mathcal{P}(1/2) + 2\epsilon_0) \} .
\end{equation}
By taking $t_0$ large enough, we can assume
that $\log (1+\epsilon_0)+t_0
(\mathcal{P}(1/2)+\epsilon_0)<0$.

\subsubsection*{A new open cover of $\mathcal{E}$}

By hypothesis, the diameter of $\hat{\mathcal{W}}$ in (\ref{Tea}) is smaller
than $\varepsilon_0$, so that we may apply Theorem \ref{Cyril} to it. We
complete it into an open cover $(W_a)_{a\in A}$ as in Theorem \ref{Cyril},
and if $\alpha\in A^N$ for some $N\geq 0$, we define as previously
$W_\alpha := \bigcap_{k=0}^{N-1} \Phi^{-k}(W_{a_k})$.

Let us rewrite as $(V_b)_{b\in B_2}$ the sets $(W_\alpha)_{\alpha\in
A^{t_0}}$ where $\alpha \in A^{t_0} \backslash \mathcal{A}_{t_0}$  such that
$\alpha_k\neq 0$ for some $0\leq k \leq t_0-1$. We will also write $V_0$ for
the set $W_{0,0,...,0}$.

If we write $B=B_1\sqcup B_2 \sqcup \{0\}$, the sets $(V_b)_{b\in B}$ form an
open cover of $\mathcal{E}$ in $T^*X$.

Actually, by compactness of the interaction region, we may find a $\delta\bel \delta'\bel 0$ small enough so that (\ref{struc}) holds and such that, by replacing $V_0$ by $V_0\cap \mathcal{E}^\delta$, $(V_b)_{b\in B}$ forms an open cover of $\mathcal{E}^{\delta'}$ included in $\mathcal{E}^\delta$.

If $\beta=b_0...b_{N-1}\in B^{N}$ for some $N\in \mathbb{N}$, and if
$\Lambda$ is a Lagrangian manifold, we will define for each $0\leq k \leq N-1$ the set $\Phi^{k,t_0}_\beta(\Lambda)$ by

\begin{equation*}
\begin{aligned}
\Phi^{0,t_0}_\beta(\Lambda)&= \Lambda \cap V_{b_0}\\
\Phi^{k,t_0}_\beta (\Lambda) &=\Phi^{t_0}\big{(} V_{b_{k}}\cap \Phi^{k-1,t_0}_{\beta}(\Lambda)\big{)}~~ \text{ for } 1\leq k \leq N-1.
\end{aligned}
\end{equation*}

By definition of the sets $b\in B$, we have $\Phi^{N,t_0}_\beta(\Lambda) =
\Phi^{Nt_0}_{\alpha_\beta}(\Lambda)$, where $\alpha_\beta\in A^{Nt_0}$ is the
concatenation of all the sequences which make up the $b_k$, $0\leq k\leq
N-1$.

Therefore, once we have fixed a point $\rho^b\in K\cap V_b$ for each $b\in B_1$, we have the following analogous of Theorem \ref{Cyril}.
\begin{corolaire}\label{marianne}
There exists $\tins'\in \mathbb{N}$ such that for all
$N\in
\mathbb{N}$, for all $\beta\in B^{N}$ and all $b\in B_1$, then
$V_b\cap\Phi_\beta^{N,t_0}(\Lag)$ is either empty, or is a Lagrangian manifold
in some unstable
cone in the coordinates $(y^{\rho_b},\eta^{\rho_b})$.

Furthermore, if $N-\tau(\beta)\geq \tins'$, then
$V_b\cap\Phi_\beta^{N}(\Lag)$ is a $\ins$-unstable Lagrangian manifold
in the coordinates $(y^{\rho_b},\eta^{\rho_b})$.
\end{corolaire}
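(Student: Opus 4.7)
\textbf{Proof plan for Corollary \ref{marianne}.} The corollary is essentially a relabeling of Theorem \ref{Cyril}, with each set $V_b$ absorbing $t_0$ successive constraints of the form $W_a$. The plan is to unfold each $V_b$ back into a block of $W_a$'s, invoke Theorem \ref{Cyril} on the longer time scale $Nt_0$, and transfer the resulting conclusion to the coordinates centered at $\rho_b$.

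The first step is to set up the correspondence. For each $b\in B$, write $V_b=\bigcap_{k=0}^{t_0-1}\Phi^{-k}(W_{\alpha(b)_k})$ for a sequence $\alpha(b)\in A^{t_0}$, with $\alpha(0)=(0,\ldots,0)$ and the property that $\alpha(b)\in A_1^{t_0}$ whenever $b\in B_1$. For $\beta=b_0\ldots b_{N-1}\in B^N$, let $\alpha_\beta\in A^{Nt_0}$ be the concatenation of $\alpha(b_0),\ldots,\alpha(b_{N-1})$. A straightforward induction on $N$ that unpacks the recursive definition of $\Phi^{k,t_0}_\beta$ yields the inclusion
\begin{equation*}
V_b\cap\Phi^{N,t_0}_\beta(\Lag)\subset W_{\alpha(b)_0}\cap \Phi^{Nt_0}_{\alpha_\beta}(\Lag).
\end{equation*}

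In the second step, Theorem \ref{Cyril} is applied to the sequence $\alpha_\beta$ and to the distinguished set $W_{\alpha(b)_0}$: since $b\in B_1$ forces $\alpha(b)_0\in A_1$, the right-hand side is either empty or a Lagrangian manifold in some unstable cone in the straight coordinates centered at $\rho_{\alpha(b)_0}$. Both $\rho_b$ and $\rho_{\alpha(b)_0}$ lie in $V_b\subset W_{\alpha(b)_0}$, whose diameter is smaller than $\secur$; a combination of Lemma \ref{yaki} (to pass between straight and twisted coordinates) and Lemma \ref{Gates} (to change the base point of the twisted coordinates from $\rho_{\alpha(b)_0}$ to $\rho_b$) then rewrites the Lagrangian in the straight coordinates at $\rho_b$, enlarging the cone by a uniformly bounded factor. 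This proves the first, qualitative, part of the corollary.

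For the quantitative $\ins$-unstable statement I would choose $\tins'$ so that $N-\tau(\beta)\geq \tins'$ forces $Nt_0-\tau(\alpha_\beta)\geq \tins$ whenever $V_b\cap\Phi^{N,t_0}_\beta(\Lag)$ is non-empty. The main obstacle is that a block $\alpha(b_k)$ coming from $b_k\in B_2$ is only required to contain at least one nonzero entry and may carry zeros in its trailing positions; in the worst case this could push $\tau(\alpha_\beta)$ arbitrarily close to $Nt_0$. However, non-emptiness of $V_b\cap\Phi^{N,t_0}_\beta(\Lag)$ forces the trajectory to land in the interaction region at time $Nt_0$, and the geodesic convexity of Lemma \ref{saladin}, combined with the pruning arguments of Lemmas \ref{violoncelle} and \ref{cardiff}, rules out a long trailing run of zeros in $\alpha_\beta$ on the non-trivial part of $\Lag$. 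Once this reduction has been performed, Theorem \ref{Cyril} applied to the pruned sequence yields the $\ins$-unstability at $\rho_{\alpha(b)_0}$, and the coordinate transfer of the previous paragraph upgrades it to $\ins$-unstability at $\rho_b$, after possibly shrinking $\secur$ so as to absorb the universal factor coming from Lemmas \ref{Gates} and \ref{yaki}.
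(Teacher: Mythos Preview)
Your approach is essentially the paper's: the paper records that $\Phi^{N,t_0}_\beta(\Lag) = \Phi^{Nt_0}_{\alpha_\beta}(\Lag)$ by construction of the $V_b$'s and declares the corollary an immediate consequence of Theorem \ref{Cyril}, giving no further argument. You are more explicit about two points the paper passes over in silence---the coordinate transfer from $\rho_{\alpha(b)_0}$ to the freshly chosen $\rho_b$, and the gap between $\tau(\beta)$ and $\tau(\alpha_\beta)$ caused by internal zeros in $B_2$-blocks---but the underlying reduction is identical.
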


\begin{remarque}[New definition of the sets $(V_b)_{b\in B_1}$]\label{animals}
The sets $(V_b)_{b\in B_1}$ form an open cover of $K$. By compactness, they form an open cover of $\{\rho\in \mathcal{E}; d(\rho,K)\leq \epsilon_3\}$ for some $\epsilon_3 \bel 0$. Hence, if for each $b\in B_2$ we replace each $V_b$ by $V_b\cap \{\rho\in \mathcal{E}; d(\rho, K) \bel \epsilon_3/ 2\}$ which we still denote by $V_b$, the sets $(V_b)_{b\in B}$ still form an open cover of $\mathcal{E}$, and the conclusions of Corollary \ref{marianne} do still apply. 

By adapting the proof of Lemma \ref{abeille}, we see that by possibly enlarging $\tins'$, we may suppose that for all $b\in B_2$, for all $\rho\in V_b$, we have $\Phi^{\tins' t_0}(\rho)\in V_0\backslash \big{(}\bigcup_{b\in B_2} V_b\big{)}$ or
$\Phi^{-\tins' t_0}(\rho)\in V_0\backslash \big{(} \bigcup_{b\in B_2} V_b\big{)}$.

Note also that thanks to Lemma \ref{saladin}, we have for any $b\in B_1\cup B_2$, for any $k\geq 1$,
$\Phi^{k t_0}(\Lag\cap V_b))\cap \ext \cap \inco=\emptyset.
$
\end{remarque}

\begin{remarque}\label{napoleon}
In \cite{NZ}, Proposition 5.2, the authors proved the following statement.
There exists a $\gamma_1>0$ such that the following holds. Let $b,b'\in
B_1$, and let $\Lambda$ be a Lagrangian manifold contained in $V_b$,
$\gamma$-unstable in the coordinates $(y^{\rho_b},\eta^{\rho_b})$ for some $\gamma\leq
\gamma_1$. Then $\Phi^ {t_0}(\Lambda)\cap V_{b'}$ is also a Lagrangian
manifold which is $\gamma$-unstable in the coordinates
$(y^{\rho_{b'}},\eta^{\rho_{b'}})$.

Furthermore, the map $y^{\rho_b}\mapsto y^{\rho_{b'}}$ obtained by projecting
$\Phi^{t_0}|_{\Lambda}$ onto the planes $\{(y^{\rho_b},\eta^{\rho_b}); \eta^{\rho_b}=0\}$ and
$\{(y^{\rho_{b'}},\eta^{\rho_{b'}}); \eta^{\rho_{b'}}=0\}$ satisfies the following estimate on
its domain of definition:
\begin{equation*}
\det \Big{(}\frac{\partial y^{\rho_{b'}}}{\partial y^{\rho_b}}\Big{)} = (1+O
(\epsilon^\mathsf{p})) e^{\lambda_{t_0}^+(\rho_b)},
\end{equation*}
where $\lambda_{t_0}^+(\rho_b)$ is the unstable Jacobian of $\rho_b$,
defined in (\ref{defJaco}).

In the sequel, we will always suppose that $\ins< \gamma_1$.
\end{remarque}

For each $b\in B_1$, we will denote by $\mathcal{U}_b$ a Fourier integral operator quantizing the local change of symplectic coordinates $(x,\xi)\mapsto (y^{\rho_b},\eta^{\rho_b})$.
\subsection{The Schrödinger propagator as a Fourier integral operator}\label{birette}
Let us explain how the formalism of section \ref{chaussette} may be used to describe the Schrödinger propagator $U(t)$ acting on $L^2(X)$. We shall state a lemma proven in \cite[Lemma 4.2]{NZ}. Recall that for $0<\delta< 1$, we defined $\mathcal{E}^\delta$ as $\bigcup_{|\mathbf{E}-1|<\delta} \mathcal{E}_{\mathbf{E}}$.
\begin{lemme}\label{diderot}
Let $V_0\Subset \mathcal{E}^\delta$, $V_1\subset \Phi^t(V_0)$ for some $t\bel 0$. Take some $\rho_0\in V_0\cap \mathcal{E}$ and set $\rho_1=\Phi^t(\rho_0)\in V_1$. Let $f_j: \pi(V_j)\rightarrow \mathbb{R}^d$, $j=0,1$ be local coordinates such that $f_0(\pi(\rho_0))=f_1(\pi(\rho_1))=0\in \mathbb{R}^d$. They induce on $V_0$ and $V_1$ the symplectic coordinates
\begin{equation*}
F_j(x,\xi):= \big{(}f_j(x), (df_j(x)^t)^{-1}\xi-\xi^{(j)}\big{)},~~ j=0,1,
\end{equation*}
where $\xi^{(j)}\in \mathbb{R}^d$ is fixed by the condition $F_j(\rho_j)=(0,0)$. Then the operator on $L^2(\mathbb{R}^d)$,
\begin{equation*}
T(t):= e^{-i\langle x, \xi^{(1)} \rangle /h}  (f_1^{-1})^* U(t) (f_0)^*e^{i\langle x,\xi^{(0)} \rangle/h}
\end{equation*}
is of the form (\ref{echarpe}) for some choice of the $A_j$'s microlocally near $(0,0)\times(0,0)$.
\end{lemme}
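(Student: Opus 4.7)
The goal is to show that $T(t)$, defined by conjugating $U(t)$ by the coordinate pullbacks $f_j^*$ and the momentum translations $e^{i\langle x, \xi^{(j)}\rangle/h}$, is a (semiclassical) Fourier integral operator of the local normal form described in section \ref{chaussette}, microlocally near $(0,0)\times(0,0)$. My plan is to first identify the symplectomorphism underlying $T(t)$, verify it fixes the origin, and then invoke the construction of \ref{echarpe}--\ref{cookie} with an appropriate choice of Lagrangian splittings $\Gamma_j, \Gamma_j^\perp$ and metaplectic quantizations $M_0, M_1$.

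The first step is to decompose $T(t)$ as a composition of FIOs and compute the underlying symplectic relation. The Schrödinger propagator $U(t) = e^{-itP_h/h}$ is, on any bounded open subset of $T^*X$, a semiclassical Fourier integral operator associated to the Hamiltonian flow $\Phi^t$. The pullbacks $(f_0)^*$ and $(f_1^{-1})^*$ are Fourier integral operators associated to the cotangent lifts of $f_0^{-1}$ and $f_1$ respectively; explicitly they send $(x,\xi)\mapsto (f_j^{-1}(x), df_j(f_j^{-1}(x))^t \xi)$ and its inverse. Finally, multiplication by $e^{i\langle x, \xi^{(j)}\rangle/h}$ is an FIO associated to the translation $(x,\xi)\mapsto (x,\xi+\xi^{(j)})$. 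By the composition rule for FIOs, $T(t)$ is therefore an FIO associated to the symplectomorphism
\begin{equation*}
\tilde{\kappa} := F_1 \circ \Phi^t \circ F_0^{-1},
\end{equation*}
and by the very definition of $F_j$ we have $F_j(\rho_j) = (0,0)$; combined with $\Phi^t(\rho_0) = \rho_1$ this gives $\tilde{\kappa}(0,0) = (0,0)$, as required.

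The second step is to apply the local normal form of section \ref{chaussette} to $\tilde{\kappa}$. Since $\tilde{\kappa}$ is a local symplectomorphism of $T^*\mathbb{R}^d$ fixing the origin, one can pick linear Lagrangian splittings $\Gamma_j, \Gamma_j^\perp$ in $T^*\mathbb{R}^d$ so that the projection described after \ref{tantine} is a local diffeomorphism from the graph of $\tilde{\kappa}$ to $\Gamma_1\times \Gamma_0^\perp$. Choosing linear symplectic $A_j$ sending $\Gamma_j$ to the zero section and $\Gamma_j^\perp$ to the cotangent fiber, and metaplectic quantizations $M_j$ of $A_j$, the rotated symplectomorphism $A_1\circ\tilde{\kappa}\circ A_0^{-1}$ is then generated by a phase $\tilde{\psi}$ as in \ref{cookie}. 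Consequently $M_1\, T(t)\, M_0^{-1}$ can be written in the oscillatory-integral form \ref{cookie}, which after conjugation is exactly the form \ref{echarpe}.

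The main subtlety is of microlocal bookkeeping rather than of deep content: $U(t)$ is not compactly supported, nor is $f_0^*$ globally an FIO, so one must localize and cut off carefully. Concretely, one works with cutoffs supported in small neighbourhoods of $\rho_0$ and $\rho_1$ on which the coordinates $f_j$ are defined and the FIO calculus is valid, and the assertion that $T(t)$ is of the form \ref{echarpe} is meant microlocally near $(0,0)\times(0,0)$, modulo $O(h^\infty)$ remainders outside this region. The phase-shift factors $e^{\pm i\langle x, \xi^{(j)}\rangle/h}$ are the essential ingredient making the conjugated symplectomorphism $\tilde{\kappa}$ fix the origin of $T^*\mathbb{R}^d$, without which the local normal form recalled in \ref{chaussette} would not apply directly.
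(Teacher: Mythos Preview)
Your argument is correct and is precisely the standard route: identify the underlying symplectomorphism of $T(t)$ as $\tilde\kappa = F_1\circ\Phi^t\circ F_0^{-1}$, check that it fixes the origin, and then invoke the local normal form of section~\ref{chaussette}. Note that the paper does not supply its own proof of this lemma but merely cites \cite[Lemma~4.2]{NZ}; your sketch is essentially the argument given there.
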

\subsection{Iterations of Fourier integral operators}\label{bismark}
We recall here the main results from \cite[\S 4]{NZ} concerning the
iterations of semiclassical Fourier integral operators in $T^*\mathbb{R}^d$.

Let $V\subset T^* \mathbb{R}^d$ be an open neighbourhood of $0$, and take
a sequence of symplectomorphisms $(\kappa_i)_{i=1,...,N}$ from $V$ to
$T^*\mathbb{R}^d$, such that $\forall i \in \{1,...,N\}$, we have
$\kappa_i(0)\in V$, and the following projection:
\begin{equation*}(x_1,\xi_1 ; x_0, \xi_0) \mapsto (x_1,\xi_0) \text{    where    }
(x_1,\xi_1)= \kappa (x_0, \xi_0 ) 
\end{equation*}
is a diffeomorphism close to the origin.
We consider Fourier integral operators $(T_i)$ which quantise $\kappa_i$
and which are
microlocally unitary near an open set $U\times U$, where $U \Subset V$ which
contains the origin. Let $\Omega\subset \mathbb{R}^d$ be an open set such
that $U\Subset T^*\Omega$, and, for all $i$, $\kappa_i(U) \Subset T^*\Omega$.
For each $i$, we take a smooth cut-off function $\chi_i\in C_c^\infty (U ;
[0,1] )$, and let
\begin{equation} \label{blabla}
S_i := Op_h(\chi_i) \circ T_i
\end{equation}
Let us consider a family of Lagrangian manifolds $\Lambda_k = \{
(x,\phi_k'(x)) ; x\in \Omega\} \subset T^*\mathbb{R}^d, ~~ k=0,...,N$ such
that:
\begin{equation} \label{leffe}
|\partial^\alpha \phi_k|\leq C_\alpha,~~~~0\leq k\leq N~~ \alpha\in
\mathbb{N}^d.
\end{equation}
We assume that there exists a sequence of integers $(i_k\in
\{1,...,J\})_{k=1,...,N}$ such that
\begin{equation*}\kappa_{i_{k+1}}(\Lambda_k\cap U) \subset \Lambda_{k+1},~~k=0,...,N-1.
\end{equation*}

We define $g_k$ by
\begin{equation*}g_k(x) = \pi\circ \kappa_{i_k}^{-1}(x,\phi_k'(x)).
\end{equation*}
That is to say,
$\kappa_{i_k}^{-1}(x,\phi_k'(x))=(g_k(x),\phi_{k-1}'(g_k(x)))$.

We will say that a point $x\in \Omega$ is $N$-admissible if we can define
recursively a sequence by $x^N=x$, and, for $k=N,...,1$,
$x^{k-1}=g_k(x^k)$. This procedure is possible if, for any $k$, $x^k$ is
in the domain of definition of $g_k$.

Let us assume that, for any admissible sequence $(x^N...x^0)$, the
Jacobian matrices are uniformly bounded from above:
\begin{equation*}\Big{\|}\frac{\partial x^k}{\partial x^l}\Big{\|}=\Big{\|}\frac{\partial
(g_{k+1}\circ g_{k+2}\circ ...\circ g_l)}{\partial x^l}(x^l)\Big{\|}\leq
C_D,~~~0\leq k<l\leq N,
\end{equation*}
where $C_D$ is independent on $N$. This assumption roughly says that the
maps $g_k$ are (weakly) contracting.

We will also use the notation
\begin{equation*}D_k:= \sup\limits_{x\in \Omega} |\det
dg_k(x)|^{1/2},~~~J_k:=\prod_{k'=1}^k D_{k'}, 
\end{equation*}
and assume that the $D_k$'s are uniformly bounded: $1/C_D\leq D_k\leq C_D$.

The following result can be found in \cite[Proposition 4.1]{NZ}.

\begin{proposition} \label{haroun}
We use the above definitions and assumptions, and take $N$ arbitrarily large,
possibly varying with $h$.
Take any $a\in S^{comp}$ and consider the Lagrangian state $u=a
e^{i\phi_0/h}$ associated with the Lagrangian $\Lambda_0$. Then we may
write:
\begin{equation*} (S_{i_N}\circ...\circ S_{i_1}) (a e^{i\phi_0/h})(x) = e^{i\phi_N(x)/h}
\big{(} \sum_{j=0}^{L-1} h^j a_j^N(x) + h^L R_L^N(x,h)\big{)},
\end{equation*}
where each $a_j^N\in C_c^\infty(\Omega)$ depends on $h$ only through $N$, and
$R_L^N\in C^\infty ((0,1]_h,\mathcal{S}(\mathbb{R}^d))$.
If $x^N\in \Omega$ is $N$-admissible, and defines a sequence $(x^k),
k=N,...,1$, then
\begin{equation*}|a_0^N(x^N)| = \Big{(}\prod_{k=1}^N \chi_{i_k} (x^k,\phi_k' (x^k))|\det
dg_k(x^k)|^{\frac{1}{2}}\Big{)} |a(x^0)|,
\end{equation*}
otherwise $a^N_j(x^N)=0, ~~ j=0,...,L-1$.
We also have the bounds
\begin{equation}\label{republique}
\|a_j^N\|_{C^\ell(\Omega)}\leq C_{j,\ell} J_N(N+1)^{\ell+3j}
\|a\|_{C^{\ell+2j}(\Omega)},~~~j=0,...,L-1, \ell\in \mathbb{N},
\end{equation}
\begin{equation}\label{coligny}\|R_L^N\|_{L^2(\mathbb{R}^d)}\leq C_L
\|a\|_{C^{2L+d}(\Omega)}(1+C_0h)^N\sum_{k=1}^N J_k k^{3L+d},
\end{equation}
\begin{equation}\label{antigna}
\|R_L^N\|_{C^\ell(\mathbb{R}^d)}\leq C_{L,l} h^{-d/2-\ell}
\|a\|_{C^{2L+d}(\Omega)}(1+C_0h)^N\sum_{k=1}^N J_k k^{3L+d}.
\end{equation}

The constants $C_{j,\ell},C_0$ and $C_L$ depend on the constants in
(\ref{leffe}) and on the operators $\{S_j\}_{j=1}^J$.
\end{proposition}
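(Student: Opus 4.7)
\textbf{Proof plan for Proposition \ref{haroun}.} The plan is to argue by induction on $N$, with Lemma \ref{pacal} furnishing the single-step building block and $\text{Op}_h(\chi_{i_k})$ providing the cut-off after each application of $T_{i_k}$. At the base case $N=1$, I first apply Lemma \ref{pacal} to $T_{i_1}$, which (by the no-caustics assumption on $\Lambda_0$ and the fact that $\kappa_{i_1}$ is generated by a phase near $\Lambda_0$) expands $T_{i_1}(a e^{i\phi_0/h})$ as $e^{i\phi_1/h}\sum_{j=0}^{L-1}h^j b_j + h^L r_L$. The WKB computation identifying the half-density part of the principal symbol of $T_{i_1}$ produces the Jacobian factor $|\det dg_1(x^1)|^{1/2}$ in $b_0$, and a subsequent application of $\text{Op}_h(\chi_{i_1})$ pulls in $\chi_{i_1}(x^1,\phi_1'(x^1))$ into the principal symbol and generates further lower-order terms, controlled in $C^\ell$ by the $S^{comp}$ symbol calculus.

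For the inductive step I would assume the expansion at step $N-1$ and apply $S_{i_N}$ to each term $h^j a_j^{N-1}e^{i\phi_{N-1}/h}$ using Lemma \ref{pacal} again, reorganizing by powers of $h$ to extract $a_0^N,\dots,a_{L-1}^N$ and $R_L^N$. The leading-order term picks up exactly the factor $\chi_{i_N}(x^N,\phi_N'(x^N))|\det dg_N(x^N)|^{1/2}$ from the step, and composing over $k=1,\dots,N$ yields the closed formula for $|a_0^N(x^N)|$. If $x^N$ is not $N$-admissible, then at some step $k$ the preimage $x^{k-1}$ fails to lie in the support of $\chi_{i_k}$, so the product of cut-offs vanishes, and each $a_j^N(x^N)$ vanishes as well; this is the mechanism that confines the propagation to admissible histories.

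The hard part will be the bookkeeping behind the polynomial-in-$N$ bounds (\ref{republique})--(\ref{antigna}). A naive use of the chain rule through $N$ compositions would give exponential factors $C^N$, which is far too weak. The trick is to use the uniform bound $\|\partial x^k/\partial x^l\|\leq C_D$ together with the Faà di Bruno combinatorics: when all first derivatives are uniformly bounded, taking $\ell$ derivatives of an $N$-fold composition and regrouping produces only polynomially many surviving terms, each of size $O(N^{\ell})$ in the highest derivative of the composition, which after two extra derivatives per order $j$ yields the factor $(N+1)^{\ell+3j}$. The $J_N$ prefactor is the pure product of the Jacobian contributions. For $R_L^N$, I would write it as a telescoping sum of the step-$k$ remainders produced by Lemma \ref{pacal} (propagated forward by the remaining unitary operators) plus the $S_{i_N}$-propagation of $R_L^{N-1}$; feeding in the bound (\ref{republique}) on $a_j^{k-1}$ and summing in $k$ produces the $\sum_k J_k k^{3L+d}$ structure of (\ref{coligny}). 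Finally, (\ref{antigna}) follows from (\ref{coligny}) by semiclassical Sobolev embedding, which accounts for the $h^{-d/2-\ell}$ loss.
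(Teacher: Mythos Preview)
The paper does not actually prove Proposition~\ref{haroun}; it simply imports it from \cite[Proposition~4.1]{NZ}. Your plan is essentially the argument given there: induction on $N$ with Lemma~\ref{pacal} as the single-step input, the principal-symbol recursion picking up the factor $\chi_{i_k}(x^k,\phi_k'(x^k))|\det dg_k(x^k)|^{1/2}$ at each step, and the crucial observation that the uniform bound $\|\partial x^k/\partial x^l\|\leq C_D$ tames the Fa\`a di Bruno combinatorics to give polynomial rather than exponential growth in $N$. The telescoping treatment of $R_L^N$ and the semiclassical Sobolev embedding for (\ref{antigna}) are also exactly how \cite{NZ} proceeds. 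So your proposal is correct and matches the source the paper relies on; there is nothing to compare against in the present paper itself.
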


We shall mainly be using this proposition in the case where for all $k$, we have $D_k\leq \nu < 1$. In this case, the estimates (\ref{republique}), (\ref{coligny}) and (\ref{antigna}) imply that for any $\ell\in \mathbb{N}$, there exists $C_{\ell}$ independent of $N$ such that for any $N\in \mathbb{N}$, we have
\begin{equation}\label{highway78}
\|a^N\|_{C^\ell}\leq \|a^N_0\|_{C^\ell} \big{(} 1+ C_\ell h \big{)}.
\end{equation}

\subsection{Microlocal partition}\label{partition}
We take a partition of unity $\sum_{b\in B} \pi_b$ such that :
\begin{equation*}\sum_{b\in B} \pi_b(x)\equiv 1 \text { for all } x\in
\mathcal{E}^{\delta'},
\end{equation*} and $supp (\pi_b) \subset V_b \subset
\mathcal{E}^\delta$ for all $b\in B$.

For $b\in B_1\cup B_2$, we set $\Pi_b:= Op_h(\pi_b)$. We have
\begin{equation*}WF_h(\Pi_b) \subset V_b\cap \mathcal{E}^\delta,~~ \text{ and } \Pi_b=\Pi_b^*.
\end{equation*}

We then set 
\begin{equation*}
\Pi_0:= Id- \sum_{b\in B_1\cup B_2} \Pi_b.
\end{equation*}

We can decompose the propagator at time $t_0$ into:
\begin{equation*}\tilde{U}(t_0)= \sum_{b\in B} \tilde{U}_b, \text{      where
 } \tilde{U}_b:=  \Pi_b e^{it_0/h} U(t_0).
\end{equation*}
The propagator at time $N t_0$ may then be decomposed as follows:
\begin{equation}\label{triceratops}\tilde{U}(Nt_0)= \sum_{\beta\in B^{N}}
\tilde{U}_{\beta},
\end{equation}
where $\tilde{U}_\beta := \tilde{U}_{\beta_{N-1}}\circ...\circ
\tilde{U}_{\beta_0}$. 

\subsection{Hyperbolic dispersion estimates}
We will use the following hyperbolic dispersion estimate, coming from \cite[Proposition 6.3]{NZ}. the proof of which can be found in
\cite[Section 7]{NZ}.

\begin{lemme} [Hyperbolic dispersion estimate]\label{grenadine}
Let $M>0$ be fixed. There exists a $h_0>0$ and a $C>0$ such that for any
$0<h<h_0$, for any $N<M \log(1/h)$, for any $\beta \in
B_1^{N}$, we have
\begin{equation} \label{louisxvi}
\|\tilde{U}_{\beta}\|_{L^2\rightarrow L^2}\leq C h^{-d/2}
(1+\epsilon_0)^{N}
\prod_{j=1}^{N} \exp \big{[} \frac{1}{2} S_{t_0} (V_{\beta_j})\big{]}.
\end{equation}
\end{lemme}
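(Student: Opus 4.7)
The plan is to reduce the operator norm bound to a WKB amplitude estimate via a semiclassical plane-wave decomposition of the input, then to extract the amplitude factor from Proposition~\ref{haroun} and identify it with the dynamical sum in (\ref{louisxvi}) using Remark~\ref{napoleon}. Since $\tilde U_\beta$ is microlocally compactly supported (every $\Pi_b$ is, and $U(t_0)$ preserves the energy layer microlocally), it suffices to bound it on inputs whose semiclassical Fourier transform is supported in a fixed compact set. Writing
\begin{equation*}
u(x)=(2\pi h)^{-d/2}\int_{\mathbb{R}^d}\hat u_h(\eta)\,e^{i\langle x,\eta\rangle/h}\,d\eta,\qquad \|\hat u_h\|_{L^2}=\|u\|_{L^2},
\end{equation*}
and applying Minkowski's inequality followed by Cauchy--Schwarz in $\eta$ on the resulting bounded range of integration, one obtains a factor $Ch^{-d/2}\sup_\eta\|\tilde U_\beta v_\eta\|_{L^2}$, where $v_\eta:=\chi_0(x)e^{i\langle x,\eta\rangle/h}$ is a Lagrangian state on the horizontal Lagrangian $\Lambda_\eta:=\{(x,\eta)\}$ and $\chi_0$ is a fixed cutoff covering $\bigcup_b\pi_X(V_b)$. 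It therefore remains to bound $\|\tilde U_\beta v_\eta\|_{L^2}$, uniformly in $\eta$, by $C(1+\epsilon_0)^N\prod_{j=1}^N e^{S_{t_0}(V_{\beta_j})/2}$.

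For a fixed $\eta$ I would apply Proposition~\ref{haroun} to $\tilde U_\beta v_\eta$. Lemma~\ref{diderot} together with the metaplectic rotation (\ref{echarpe}) realises each factor $\tilde U_{b_j}$ as a local Fourier integral operator of the form (\ref{cookie}) generating the Poincar\'e map between the charts centred at $\rho^{\beta_{j-1}}$ and $\rho^{\beta_j}$; crucially, the rotation is chosen so that the evolved Lagrangian projects caustic-freely on the base, so the hypotheses of Proposition~\ref{haroun} are satisfied regardless of $\eta$. Corollary~\ref{marianne} then guarantees that each intermediate piece $\Phi^{k,t_0}_\beta(\Lambda_\eta)\cap V_{\beta_k}$ is $\ins$-unstable in the chart centred at $\rho^{\beta_k}$, and Remark~\ref{urss} gives the $C^\ell$ bound on its generating function uniformly in $N$ and $\beta$. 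Proposition~\ref{haroun} then writes $\tilde U_\beta v_\eta=e^{i\phi_N/h}(a^N_0+\cdots)+O_{L^2}(h^L)$ with $\|a^N_0\|_{C^0}\le J_N\|\chi_0\|_{C^0}$ and with the higher-order WKB corrections and remainder controlled by (\ref{republique})--(\ref{coligny}).

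It remains to match $J_N=\prod_{k=1}^N D_k$ with the dynamical product in (\ref{louisxvi}). By Remark~\ref{napoleon}, the projection map $g_k$ associated with the transition $V_{\beta_{k-1}}\to V_{\beta_k}$ satisfies $D_k=(1+O(\epsilon^{\mathsf{p}}))\exp(-\tfrac12\lambda_{t_0}^+(\rho^{\beta_k}))$, and since $\rho^{\beta_k}\in V_{\beta_k}\cap K\subset V_{\beta_k}\cap K^{\delta/2}$ the definition of $S_{t_0}$ gives $-\lambda_{t_0}^+(\rho^{\beta_k})\le S_{t_0}(V_{\beta_k})$. Shrinking the common diameter of the $V_b$ so that $1+O(\epsilon^{\mathsf{p}})\le(1+\epsilon_0)^{1/3}$, and using that the polynomial factor $(N+1)^{3j}$ from (\ref{republique}) and the Ehrenfest factor $(1+C_0h)^N$ from (\ref{coligny}) are both dominated by $(1+\epsilon_0)^{N/3}$ as soon as $N\le M|\log h|$ and $h$ is sufficiently small, yields the target bound on $\|\tilde U_\beta v_\eta\|_{L^2}$, and hence (\ref{louisxvi}) after the $h^{-d/2}$ of the first step. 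The main obstacle is precisely the uniform control of the WKB expansion and its remainder over the Ehrenfest time $N\asymp|\log h|$; this is what dictates both the use of Corollary~\ref{marianne} (to keep the iterated Lagrangians inside a fixed unstable cone, so that $D_k$ is genuinely a contraction factor rather than being overwhelmed by derivatives of the phase) and the metaplectic rotation of Lemma~\ref{diderot} (which handles the fact that the statement must cover \emph{every} $\eta$, without any transversality hypothesis on $\Lambda_\eta$, so that caustics of the evolved Lagrangian must be absorbed by a change of coordinates rather than ruled out a priori).
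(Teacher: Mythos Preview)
The paper does not prove this lemma; it simply quotes \cite[Proposition 6.3]{NZ} and points to \cite[Section 7]{NZ} for the argument. Your outline is recognisably that argument --- a plane-wave decomposition, an application of Proposition~\ref{haroun}, and the identification of the Jacobian product with the unstable Jacobians via Remark~\ref{napoleon} --- so you are reproducing the approach the paper invokes by reference.

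There is, however, a genuine gap in your version. You invoke Corollary~\ref{marianne} to assert that the intermediate Lagrangians $\Phi^{k,t_0}_\beta(\Lambda_\eta)\cap V_{\beta_k}$ are $\ins$-unstable. But Corollary~\ref{marianne} (like Theorem~\ref{Cyril}) is a statement about the specific incoming Lagrangian $\Lag$, which is assumed to satisfy the transversality Hypothesis~\ref{Happy}; it says nothing about an arbitrary horizontal Lagrangian $\Lambda_\eta=\{(x,\eta)\}$ in physical coordinates. For a bad $\eta$, $\Lambda_\eta$ could be nearly tangent to a stable leaf, and then no uniform number of steps pushes it into the unstable cone. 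Your parenthetical remark about the metaplectic rotation absorbing caustics does not resolve this either: the rotations $A_j$ in \S\ref{chaussette} are chosen so that the \emph{graph of $\kappa$} projects diffeomorphically (condition (\ref{tantine})), not so that a given evolved Lagrangian does.

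The repair, as carried out in \cite[\S7]{NZ}, is to conjugate \emph{first} by $\mathcal{U}_{\beta_0}$ into the adapted chart $(y^{\rho_{\beta_0}},\eta^{\rho_{\beta_0}})$ and only then perform the Fourier decomposition, in the $y^{\rho_{\beta_0}}$-variable. The resulting ``plane waves'' $e^{i\langle y,\eta_0\rangle/h}$ are Lagrangian states on $\{\eta^{\rho_{\beta_0}}=\eta_0\}$, which has zero slope in the adapted chart and is therefore trivially in the $\gamma$-unstable cone. Remark~\ref{napoleon} (that is, \cite[Proposition~5.2]{NZ}) --- not Corollary~\ref{marianne} --- then keeps every iterate in that cone with the stated Jacobian factor, and Proposition~\ref{haroun} applies uniformly in $\eta_0$. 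With this modification the rest of your bookkeeping (the $h^{-d/2}$ from the resolution of the identity, absorbing the polynomial and $(1+C_0h)^N$ factors into $(1+\epsilon_0)^N$ on the Ehrenfest window) goes through.
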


\section{Proof of Theorem \ref{ibrahim}}\label{preuve}
\begin{proof}
Having introduced these different tools, we may now come back to the proof of Theorem \ref{ibrahim}.
\subsection{Decomposition of $\chi E_h$}
Let $\chi\in C_c^\infty(X)$ be as in Lemma \ref{flightoftherat}. We may suppose $T_\chi\leq t_0$. Then, by equation (\ref{ripoux}), we have:
\begin{equation}\label{canardsauvage}
\chi E_h =(\chi \tilde{U}(t_0))^N \chi_{t_0} E_h+ \sum_{k=1}^{N} (\chi \tilde{U}(t_0))^k
(1-\chi)\chi_{t_0} E^0_h +O(h^\infty),
\end{equation}
where the cut-off
function $\chi_{t_0}\in C_c^\infty(X)$ is such that
\begin{equation*}d_X(supp~ \chi, supp (1-\chi_{t_0}))> 2|t_0|,
\end{equation*}
where $d_X$ denotes the Riemannian distance on $X$.

We shall require the following lemma. The proof of (i) is the same as that of Lemma \ref{abeille}, while the proof of (ii) essentially follows from point (3) of Hypothesis \ref{Guepard}.
\begin{lemme}\label{ghost}
(i) There exists $N_\chi\in \mathbb{N}$ such that for any $N\in \mathbb{N}$ if $\rho\in supp (\chi_{t_0})$ and $\Phi^N(\rho)\in supp (\chi)$, then for any $N_\chi\leq k \leq N-N_\chi$, we have $\Phi^{kt_0}(\rho)\in V_b$ for some $b\in B_1\cup B_2$.

(ii) If $\rho \in \mathcal{E}$ is such that $\Phi^{kt_0}(\rho)\in V_0$ for some $k\in \mathbb{N}$, but $\Phi^{(k+1)t_0}(\rho)\in V_b$ for some $b\in B_1\cup B_2$, then $\Phi^{k'}(\rho)$ is in $\mathcal{DE}_-$ (and hence in $V_0$) for any $k'\leq k$.
\end{lemme}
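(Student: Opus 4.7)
The plan is to treat both parts via a careful one-variable analysis of $g(s):=b(\Phi^s(\rho))$ built on part (3) of Hypothesis \ref{Guepard}. That hypothesis forbids $g$ from possessing a local minimum inside $\{g\leq\epsilon_0\}$, since any critical point there must be a strict local maximum ($\ddot b<0$). I will extract from this two monotonicity corollaries. (A) If $g(s_0)<\epsilon_0/2$ and $g'(s_0)\leq 0$, then $g$ is non-increasing on $[s_0,\infty)$ and stays below $\epsilon_0/2$, because any attempt to climb back up would necessitate a forbidden local minimum. (B) Time-reversing, if $g(s_0)<\epsilon_0/2$ and $g'(s_0)\geq 0$, then $g$ is non-decreasing on $(-\infty,s_0]$ and satisfies $g(s)\leq g(s_0)<\epsilon_0/2$ there.

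For part (ii), I will first observe that $\Phi^{(k+1)t_0}(\rho)\in V_b$ with $b\in B_1\cup B_2$ forces $b\bigl(\Phi^{(k+1)t_0+j}(\rho)\bigr)\geq \epsilon_0/2$ for some $j\in[0,t_0-1]$: by the construction in Section \ref{Benoit}, each such $V_b$ is of the form $W_\alpha$ with at least one entry $\alpha_j\in A_1\cup A_2$, and those $W_a$ all lie in $\{b\geq \epsilon_0/2\}$ since they avoid $\ext$. Combined with $\Phi^{kt_0}(\rho)\in V_0\subset \mathcal{DE}_-\cup\mathcal{DE}_+$, corollary (A) rules out the $\mathcal{DE}_+$ case (which would keep $g<\epsilon_0/2$ on all future times), forcing $\Phi^{kt_0}(\rho)\in \mathcal{DE}_-$. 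Applying corollary (B) at $s_0=kt_0$ then gives $\Phi^{k't_0}(\rho)\in \mathcal{DE}_-\subset V_0$ for every $k'\leq k$, which is the claim.

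For part (i), the argument will parallel the proof of Lemma \ref{abeille}, with the $V_0$/interaction dichotomy replacing the $\ext$/$\inter$ one. Assuming $\Phi^{kt_0}(\rho)\in V_0$ for some $k$, I will split again according to $V_0\subset \mathcal{DE}_-\sqcup \mathcal{DE}_+$. If $\Phi^{kt_0}(\rho)\in\mathcal{DE}_+$, corollary (A) together with Lemma \ref{saladin} sends the forward trajectory to infinity with $b$ strictly decreasing; since $\mathrm{supp}(\chi)\Subset X$ lies in $\{b\geq\bar b\}$ for some $\bar b>0$, and since $|\dot b|$ is bounded uniformly below on the compact set $\mathcal{DE}_+\cap \mathcal{E}\cap \{\bar b\leq b\leq \epsilon_0/2\}$, I will obtain a uniform $N_\chi^+$ with $\Phi^{jt_0}(\rho)\notin \mathrm{supp}(\chi)$ once $j-k\geq N_\chi^+$, hence $N-k\leq N_\chi^+$. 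The time-reversed analysis using corollary (B) and compactness of $\mathrm{supp}(\chi_{t_0})$ gives symmetrically $k\leq N_\chi^-$; taking $N_\chi:=\max(N_\chi^+,N_\chi^-)$ concludes. The main technical hurdle will be precisely this uniform lower bound on $|\dot b|$: a naive monotonicity argument cannot exclude arbitrarily slow escape, and it is the strict inequality $\ddot b<0$ at $\dot b=0$ from Hypothesis \ref{Guepard}(3) combined with compactness of the relevant sublevel sets that supplies the uniform escape rate and hence the integer $N_\chi^+$.
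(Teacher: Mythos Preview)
Your plan is sound and mirrors the paper's own (very terse) indications: part (ii) via geodesic convexity from Hypothesis~\ref{Guepard}(3), and part (i) via an escape dichotomy plus compactness, structurally the same argument as Lemma~\ref{abeille}. The only difference in (i) is the splitting criterion: the paper's Lemma~\ref{abeille} separates points by distance to $\Gamma^{\pm}$, whereas you separate by the sign of $\dot b$ (i.e.\ $\mathcal{DE}_{\pm}$); both lead to the same ``forward escape versus backward escape'' dichotomy, and the compactness step producing a uniform $N_\chi$ is identical.

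One point to tighten in your argument for (ii). Your assertion that every $W_a$ with $a\in A_1\cup A_2$ lies in $\{b\geq\epsilon_0/2\}$ is established in the paper only for $a\in A_1$ (see the hypothesis of Lemma~\ref{torticoli}); the sets $(W_a)_{a\in A_2}$ built in \S\ref{marguerite} must overlap $W_0$ slightly, since they are open and have to cover the boundary $\{b=\epsilon_0/2\}$. This does not invalidate the strategy, but the clean way to rule out the $\mathcal{DE}_+$ alternative is to observe that in the actual application the relevant hypothesis is effectively $\Phi^{(k+1)t_0}(\rho)\notin V_0$ (this is what makes the $\tilde U_0$ factor terminate), which directly forces $\Phi^{(k+1)t_0+j}(\rho)\notin W_0$ for some $0\leq j\leq t_0-1$ and hence $b\geq\epsilon_0/2$ at that time---exactly the input your corollary (A) needs.
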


From  Lemma \ref{ghost}, we deduce that for any $k\geq 2N_\chi +2$, we have
\begin{equation}\label{apolinaris}
(\chi \tilde{U}(t_0))^k= \sum_{l=0}^{N_\chi+1}(\chi \tilde{U}(t_0))^{N_\chi+1} \Big{(} \sum_{\beta\in B^{k-2N_\chi-2 +l}} \tilde{U}_\beta \Big{)} \big{(}\chi \tilde{U}_0\big{)}^{N_\chi-l} + O_{L^2\rightarrow L^2}(h^\infty).
\end{equation}

For any $N\in \mathbb{N}\backslash\{0\}$, define
$\mathcal{B}_N\subset (B_1\cup B_2)^N$ by:
\begin{equation}\label{reveillon}
\begin{aligned}
\mathcal{B}_N:=& (B_1\cup B_2)^N ~~~~&\text{ if } N\leq 2\tins'+2,\\
\mathcal{B}_N:=& (B_1\cup B_2)^{\tins'+1} B_1^{N-2\tins'-2} (B_1\cup B_2)^{\tins'+1} ~~~~&\text{ otherwise}.
\end{aligned}
\end{equation}

\begin{lemme}\label{ibnsahl}
For any $N\geq 2\tins'+2$, for any $\beta\in (B_1\cup B_2)^N \backslash
\mathcal{B}_N$, we have
\begin{equation*}\|\tilde{U}_\beta\|_{L^2\rightarrow L^2}= O(h^\infty).
\end{equation*}
\end{lemme}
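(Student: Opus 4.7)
The plan is to exploit the structural constraint imposed by $\beta\notin \mathcal{B}_N$: by (\ref{reveillon}), there exists an index $k$ with $\tins'+1\leq k\leq N-\tins'-2$ and $\beta_k\in B_2$. I would isolate the short sub-product of $\tilde{U}_\beta$ consisting of the $\tins'+1$ factors $\tilde{U}_{\beta_{k+\tins'}}\circ\cdots\circ \tilde{U}_{\beta_k}$ straddling this index and show it is $O_{L^2\to L^2}(h^\infty)$ via a microlocal argument. Each individual factor $\tilde{U}_{\beta_j}=\Pi_{\beta_j}e^{it_0/h}U(t_0)$ has a uniformly bounded $L^2$ operator norm (since $U(t_0)$ is unitary and $\Pi_{\beta_j}$ is a compactly supported quantization), so the $O(h^\infty)$ decay of the sub-product absorbs the $O(C^N)$ prefactor coming from the remaining factors; this yields the claim in the regime $N\lesssim |\log h|$ relevant to the application (\ref{apolinaris}).

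The wavefront-set obstruction comes from Remark \ref{animals}: for each $\rho\in V_{\beta_k}$, at least one of $\Phi^{\pm\tins't_0}(\rho)$ lies in $V_0\setminus \bigcup_{b\in B_2}V_b$. Using a smooth partition of unity on $V_{\beta_k}$ subordinate to the two open regions where each alternative holds, I would split $\Pi_{\beta_k}$ into two pieces and treat each separately; in the forward case, this reduces matters to the situation $\Phi^{\tins't_0}(WF_h(\Pi_{\beta_k}))\subset V_0\setminus \bigcup_{b\in B_2}V_b$. The side-check is that this target set is also disjoint from $V_{b'}$ for every $b'\in B_1$: each such $V_{b'}$ lies within $\secur$ of the compact trapped set $K\subset \mathcal{E}\setminus \mathcal{W}_0$, whereas $V_0\subset \mathcal{W}_0$, and for $\secur$ (and the parameter $\delta$ from section \ref{Benoit}) small enough these two regions of the energy shell are disjoint. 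Hence $\Phi^{\tins't_0}(WF_h(\Pi_{\beta_k}))\cap V_{\beta_{k+\tins'}}=\emptyset$, regardless of whether $\beta_{k+\tins'}\in B_1$ or $B_2$.

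The remaining step is microlocal bookkeeping. I would commute each interior propagator $U(t_0)$ leftward past the intervening $\Pi_{\beta_j}$ factors using Egorov's theorem, obtaining, modulo an $O_{L^2\to L^2}(h^\infty)$ remainder,
$$\tilde{U}_{\beta_{k+\tins'}}\circ\cdots\circ \tilde{U}_{\beta_k}\equiv c_h\,\Pi_{\beta_{k+\tins'}}\tilde{\Pi}_{\beta_{k+\tins'-1}}\cdots \tilde{\Pi}_{\beta_k}\,U((\tins'+1)t_0),$$
where $c_h$ is a scalar phase and each $\tilde{\Pi}_{\beta_j}$ is a pseudodifferential operator with $WF_h(\tilde{\Pi}_{\beta_j})\subset \Phi^{(k+\tins'-j)t_0}(V_{\beta_j})$. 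The composition of these pseudodifferential operators is itself pseudodifferential with wavefront set contained in $V_{\beta_{k+\tins'}}\cap \Phi^{\tins't_0}(WF_h(\Pi_{\beta_k}))$, which is empty by the previous paragraph. The standard corollary of Lemma \ref{theclash} that a composition of pseudodifferential operators with pairwise-disjoint wavefront sets is $O_{L^2\to L^2}(h^\infty)$ then gives the result; the backward case is entirely symmetric. The main delicacy is the disjointness between $V_0$ and $\bigcup_{b'\in B_1}V_{b'}$, which is an elementary consequence of the smallness of $\secur$; everything else is routine semiclassical calculus.
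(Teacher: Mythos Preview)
Your proposal is correct and follows essentially the same approach as the paper: locate an index $k$ in the ``middle'' block with $\beta_k\in B_2$, invoke Remark \ref{animals} to force $\Phi^{\pm\tins't_0}(V_{\beta_k})$ outside all the $V_b$ for $b\in B_1\cup B_2$, and conclude via the wavefront-disjointness criterion of Lemma \ref{theclash}. The paper's own proof is two sentences and simply cites Remark \ref{animals} and Lemma \ref{theclash}; your version is a faithful unpacking of that citation, including the side-check that $V_0\cap V_{b'}=\emptyset$ for $b'\in B_1$ (which the paper leaves implicit) and the Egorov-commutation bookkeeping needed to reduce to an honest instance of Lemma \ref{theclash}. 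Your remark that the conclusion is understood in the regime $N=O(|\log h|)$ is also appropriate, since that is exactly the regime in which the lemma is applied in (\ref{apolinaris2}).
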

\begin{proof}
Let $\beta\in (B_1\cup B_2)^N\backslash \mathcal{B}_N$. Then there exists $\tins'+2\leq k \leq N-\tins'+2$ such that $\beta_k\in B_2$.
Recall from Remark \ref{animals} that $\tins'$ is such that for any $\rho\in V_{\beta_k}$, we have $\Phi^{\tins' t_0}(\rho)\in V_0\backslash \big{(}\bigcup_{b\in B_2} V_b\big{)}$ or $\Phi^{-\tins' t_0}(\rho)\in V_0\backslash \big{(}\bigcup_{b\in B_2} V_b\big{)}$. The result then follows from Lemma \ref{theclash}.
\end{proof}

Equation (\ref{apolinaris}) may then be rewritten as
\begin{equation}\label{apolinaris2}
(\chi \tilde{U}(t_0))^k= \sum_{l=0}^{N_\chi+1}(\chi \tilde{U}(t_0))^{N_\chi+1} \Big{(} \sum_{\beta\in \mathcal{B}_{k-2N_\chi-2 +l}} \tilde{U}_\beta \Big{)} \big{(}\chi \tilde{U}_0\big{)}^{N_\chi-l} + O_{L^2\rightarrow L^2}(h^\infty)
\end{equation}
By summing over $k$ and reordering the terms, we get, for any $K\bel 2N_\chi+3\tins' +4$
\begin{equation}\label{apolinaris3}
\begin{aligned}
\sum_{k=0}^K(\chi \tilde{U}(t_0))^k&= \sum_{n=1}^{(K-3\tins'- N_\chi-4)} \sum_{l=0}^{N_\chi+1}(\chi \tilde{U}(t_0))^{N_\chi+1} \Big{(} \sum_{\beta\in \mathcal{B}_{n+R\tins'+2}} \tilde{U}_\beta \Big{)} \big{(}\chi \tilde{U}_0\big{)}^{l}\\
& - \sum_{n=K-2N_\chi-2}^{(K-3\tins'-N_\chi-4)} \sum_{l=0}^{(K-3\tins'-N_\chi-4-n)} (\chi \tilde{U}(t_0))^{N_\chi+1} \Big{(} \sum_{\beta\in \mathcal{B}_{n+3\tins'+2}} \tilde{U}_\beta \Big{)} \big{(}\chi \tilde{U}_0\big{)}^{l}\\
&+ \sum_{l=0}^{3\tins'+N_\chi+3} \big{(} \chi \tilde{U}(t_0)\big{)}^l + O_{L^2\rightarrow L^2}(h^\infty).
\end{aligned}
\end{equation}

Let us note that from Lemma \ref{pacal} and Hypothesis \ref{chaise}, for each $0\leq l \leq N_\chi$, there exists $\chi_l\in S^{comp}(X)$ such that
\begin{equation}\label{inv}
\big{(} \chi \tilde{U}_0\big{)}^{N_\chi-l} (1-\chi)\chi_{t_0} E_h^0 = \chi_l E_h^0+O(h^\infty).
\end{equation}
Let us introduce the notation
\begin{equation}\label{australia}
\overline{\chi} := \sum_{l=0}^{N_\chi+1} \chi_l.
\end{equation}

Thanks to equation (\ref{apolinaris3}), we can study the different terms in equation (\ref{canardsauvage}).
The first term in the right hand side of (\ref{canardsauvage}) may be bounded
by the following lemma.

\begin{lemme}\label{infinitif} Let $r>0$. We may find a constant
$M_{r}\geq 0$ such that for any $M > M_{r}$, for any $M_{r}|\log h|
\leq N \leq M |\log h|$, we have:
\begin{equation*}\|\big{(}\chi \tilde{U}(t_0)\big{)}^N
\chi_{t_0} E_h\|_{L^2}=O(h^r).
\end{equation*}

\end{lemme}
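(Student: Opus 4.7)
The plan is to expand $(\chi\tilde U(t_0))^N\chi_{t_0}E_h$ using the microlocal partition $I\equiv\sum_{b\in B}\Pi_b$ from section \ref{partition}, reduce to the ``admissible'' sequences $\mathcal{B}_N$ of (\ref{reveillon}), and then apply the hyperbolic dispersion estimate (Lemma \ref{grenadine}) together with the topological pressure bound (\ref{rainy}) to obtain exponential decay in $N$. With $N\sim|\log h|$, this decay beats the polynomial growth in $h^{-1}$ of $\|\chi_{t_0}E_h\|_{L^2}$ coming from temperedness.

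First I would observe that $WF_h(\chi_{t_0}E_h)\subset\mathcal E$. This follows from semiclassical elliptic regularity applied to $(P_h-1)E_h=0$, the symbol $p-1$ being elliptic off $\mathcal E$. Hence $\sum_b\Pi_b\chi_{t_0}E_h=\chi_{t_0}E_h+O(h^\infty)$, and since the wavefront set is preserved in $\mathcal E$ by $\chi\tilde U(t_0)$, the same partition can be inserted between any two consecutive factors of $\chi\tilde U(t_0)$. Moreover $\|\chi_{t_0}E_h\|_{L^2}\leq Ch^{-N_0}$ for some $N_0\in\mathbb N$, since $\chi_{t_0}E_h^0$ is $O(1)$ as a compactly microlocalised Lagrangian state and $E_h^1$ is tempered. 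Expanding,
\begin{equation*}
(\chi\tilde U(t_0))^N\chi_{t_0}E_h=\sum_{\beta\in B^N}\chi\tilde U_{\beta_N}\cdots\chi\tilde U_{\beta_1}\chi_{t_0}E_h+O(h^\infty).
\end{equation*}
By Lemma \ref{theclash} combined with Remark \ref{animals} (which allows one to discard words containing a $B_2$-index isolated far from both ends, exactly as in Lemma \ref{ibnsahl}) and geodesic convexity (Lemma \ref{saladin}, eliminating interior $0$-indices as in Lemma \ref{violoncelle}), all sequences $\beta\notin\mathcal B_N$ contribute $O(h^\infty)$.

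For the surviving $\beta\in\mathcal B_N$ with $N\geq 2\tins'+2$, write $\beta=\beta^{(1)}\beta^{(2)}\beta^{(3)}$ with $|\beta^{(1)}|=|\beta^{(3)}|=\tins'+1$ and $\beta^{(2)}\in B_1^{N-2\tins'-2}$. The endpoint blocks have combinatorial multiplicity $(\#(B_1\cup B_2))^{2\tins'+2}=:C_1$ and are uniformly bounded $L^2\to L^2$. To the middle block, Lemma \ref{grenadine} gives
\begin{equation*}
\|\tilde U_{\beta^{(2)}}\|_{L^2\to L^2}\leq Ch^{-d/2}(1+\epsilon_0)^{N-2\tins'-2}\prod_{j=1}^{N-2\tins'-2}\exp\!\tfrac12 S_{t_0}(V_{\beta^{(2)}_j}),
\end{equation*}
and summing over $\beta^{(2)}\in B_1^{N-2\tins'-2}$ using (\ref{rainy}) yields
\begin{equation*}
\sum_{\beta^{(2)}}\|\tilde U_{\beta^{(2)}}\|_{L^2\to L^2}\leq C'h^{-d/2}\exp\Bigl\{N\bigl[\log(1+\epsilon_0)+t_0\bigl(\mathcal P(1/2)+2\epsilon_0\bigr)\bigr]\Bigr\}.
\end{equation*}
By the choices of $\epsilon_0$ and $t_0$ made just after (\ref{rainy}) (possibly enlarging $t_0$ so that the stronger bracketed inequality holds, which is possible because $\mathcal P(1/2)+2\epsilon_0<0$), the exponent equals $-N\gamma$ for some $\gamma>0$. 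Combining with the $h^{-N_0}$ bound on $\chi_{t_0}E_h$ yields
\begin{equation*}
\|(\chi\tilde U(t_0))^N\chi_{t_0}E_h\|_{L^2}\leq C''h^{-d/2-N_0}e^{-N\gamma}+O(h^\infty),
\end{equation*}
so setting $M_r:=(r+d/2+N_0)/\gamma$ gives $O(h^r)$ for $N\geq M_r|\log h|$.

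The main obstacle is the combinatorial bookkeeping in the reduction from $B^N$ to $\mathcal B_N$: one must systematically eliminate words with an interior zero (via Lemma \ref{saladin} and the argument of Lemma \ref{violoncelle}) and words in which a $B_2$-index persists far enough from both endpoints (via Egorov as in Lemma \ref{ibnsahl}). Once this is done, the pressure bound (\ref{rainy}) and Lemma \ref{grenadine} apply directly. A minor point is that the $\chi$-multipliers inserted between the $\tilde U_b$ do not spoil the dispersion estimate, since $\chi$ is a uniformly bounded multiplier and is absorbed in the implicit constants.
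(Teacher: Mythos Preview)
Your proposal is correct and follows essentially the same approach as the paper: reduce the sum over $B^N$ to admissible words, apply the hyperbolic dispersion estimate (\ref{louisxvi}) to the $B_1$-core, sum via (\ref{rainy}), and use temperedness of $E_h$ to absorb the $h^{-d/2-N_0}$ loss by choosing $M_r$ large. The paper's proof is more compact only because the reduction step has already been packaged into equation (\ref{apolinaris2}) (derived from Lemma \ref{ghost} and Lemma \ref{ibnsahl}) just before the statement of the lemma; you essentially re-derive this reduction inline, citing the classical Lemma \ref{violoncelle} for the zero-index elimination where the paper uses its quantum analogue Lemma \ref{ghost}. One small imprecision: words with a $0$-index are not all $O(h^\infty)$ in general, only those with a $0$-index away from the endpoints (this is why the paper keeps the boundary factors $(\chi\tilde U(t_0))^{N_\chi+1}$ and $(\chi\tilde U_0)^{N_\chi-l}$ in (\ref{apolinaris2})); but since you already account for ``endpoint blocks'' of bounded length, this does not affect the estimate.
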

\begin{proof}
We use (\ref{apolinaris2}), Lemma
\ref{grenadine} and the topological pressure assumption to obtain:
\[
\begin{aligned}
\big{\|}\big{(}\chi \tilde{U}(t_0)\big{)}^N
\chi_{t_0} E_h\big{\|}_{L^2} &\leq  
 C \Big{\|}\sum_{\beta\in \mathcal{B}_{N-2N_\chi-2}} \tilde{U}_\beta
\chi_{t_0} E_h\Big{\|} +O(h^\infty)\\
& \leq C \sum_{\beta\in B_1^{N-2\tins-2N_\chi-4}} \big{\|}\tilde{U}_\beta
\chi_{t_0} E_h\big{\|}\\
&\leq C h^{-d/2} (1+\epsilon_0)^{N} \sum_{\beta\in B_1^{N-2\tins-2N_\chi-4}}
\prod_{j=1}^{N-2\tins-2}  \exp \big{[} \frac{1}{2} S_{t_0}
(V_{\beta_j})\big{]} \big{\|}\chi_{t_0} E_h\big{\|}\\
&\leq C  h^{-d/2} (1+\epsilon_0)^{N}  \Big{(}\sum_{b\in
B_1} \exp \big{[} \frac{1}{2} S_{t_0} (V_{b})\big{]}\Big{)}^N \big{\|}\chi_{t_0}
E_h\big{\|} \\
&\leq  C  h^{-d/2} (1+\epsilon_0)^{N} \exp \{ Nt_0
(\mathcal{P}(1/2) + 2N\epsilon_0) \} \big{\|}\chi_{t_0} E_h\big{\|}.
\end{aligned}
\]
By assumption, $E_h$ is a tempered distribution, so that $\|\chi_{t_0} E_h\|_{L^2}\leq C/h^{\mathrm{r}''}$.
Therefore
\begin{equation*}\Big{\|}\big{(}\chi \tilde{U}(t_0)\big{)}^N
\chi_{t_0} E_h\Big{\|}_{L^2} \leq C h^{-r''-d/2-\epsilon} \exp \{ Nt_0
(\mathcal{P}(1/2) + 2N\epsilon_0) \},
\end{equation*}
for some small $\epsilon$.
The lemma follows by taking $M_r$ large enough.
\end{proof}
Using Lemma \ref{infinitif}, and equation(\ref{apolinaris3}), we may rewrite equation (\ref{canardsauvage}) as
\begin{equation*}
\begin{aligned}
\chi E_h &= \sum_{n=1}^{M_r |\log h|} \sum_{l=0}^{N_\chi+1}(\chi \tilde{U}(t_0))^{N_\chi+1} \Big{(} \sum_{\beta\in \mathcal{B}_{n+3\tins'+2}} \tilde{U}_\beta \Big{)} \big{(}\chi \tilde{U}_0\big{)}^{l} (1-\chi)\chi_{t_0} E^0_h\\
& - \sum_{n=M_r|\log h|-N_\chi}^{M_r|\log h|} \sum_{l=0}^{M_r|\log h|-N_\chi-2-n} (\chi \tilde{U}(t_0))^{N_\chi+1} \Big{(} \sum_{\beta\in \mathcal{B}_{n+3\tins'+2}} \tilde{U}_\beta \Big{)} \big{(}\chi \tilde{U}_0\big{)}^{l}(1-\chi)\chi_{t_0} E^0_h\\
&+ \sum_{l=0}^{3\tins'+N_\chi+3} \big{(} \chi \tilde{U}(t_0)\big{)}^l(1-\chi)\chi_{t_0} E^0_h + O_{L^2}(h^r).
\end{aligned}
\end{equation*}
The second term may be bounded by $O(h^r)$ thanks to Lemma \ref{infinitif}. By using equations (\ref{inv}) and (\ref{australia}), we get

\begin{equation}\label{canardsauvage2}
\begin{aligned}
\chi E_h 
&= \sum_{n=1}^{M_r |\log h|}(\chi \tilde{U}(t_0))^{N_\chi+1} \Big{(} \sum_{\beta\in \mathcal{B}_{n+3\tins'+2}} \tilde{U}_\beta \Big{)}\overline{\chi} E^0_h\\
&+ \sum_{l=0}^{3\tins'+N_\chi+3} \big{(} \chi \tilde{U}(t_0)\big{)}^l(1-\chi)\chi_{t_0} E^0_h + O_{L^2}(h^r)
\end{aligned}
\end{equation}

\subsection{Evolution of the WKB states}
\subsubsection{Construction of $\tilde{\mathcal{B}}_0$}
From now on, we fix $b\in B_1$ and $r\bel 1$. We may write
\begin{equation}
\mathcal{U}_b\Pi_b\sum_{l=0}^{3\tins'+N_\chi+3}\big{(}\chi \tilde{U}(t_0)\big{)}^l(1-\chi)\chi_{t_0} E_h^0 = 
\sum_{l=0}^{N_\chi+ 3\tins'+3}\sum_{\beta\in B^l} \mathcal{U}_b\Pi_b U_\beta^\chi(1-\chi)\chi_{t_0} E_h^0,
\end{equation}
where we have used the notation 
\begin{equation}\label{petanque}
U_\beta^\chi = \chi  \tilde{U}_{\beta_l} \chi... \chi \tilde{U}_{\beta_0}.
\end{equation}

Note that each of the $\mathcal{U}_b\Pi_b U^\chi_\beta$ is a Fourier Integral Operator from $L^2(X)$ to $L^2(\mathbb{R}^d)$. Thanks to Corollary \ref{marianne}, we may use Lemma \ref{pacal} to describe the action of each of these Fourier Integral Operators on the Lagrangian state $(1-\chi)\chi_{t_0} E_h^0$.
If we denote by $\tilde{\mathcal{B}}_0$ the set $\bigcup_{l=0}^{N_\chi+ 3\tins'+3} B^l$, we may write
\begin{equation}
\mathcal{U}_b\Pi_b\sum_{l=0}^{N_\chi+ 3\tins'+3}\big{(}\chi \tilde{U}(t_0)\big{)}^l(1-\chi)\chi_{t_0} E_h^0 = 
\sum_{\beta\in \tilde{\mathcal{B}}_0} e_{0,\beta,b},
\end{equation}
where $e_{0,\beta,b}(y^b)=e^{\phi_{0,\beta,b}(y^{\rho_b})/h}a_{0,\beta,b}(y^{\rho_b};h)$, with $a_{0,\beta,b}$ and $\phi_{0,\beta,b}$ as in the statement of Theorem \ref{ibrahim}.

Let us now consider the other terms on the right-hand side of equation (\ref{canardsauvage2}), which will be indexed by $\tilde{\mathcal{B}}_n$, $n\geq 1$.
\subsubsection{Evolution in the intermediate region}\label{dominus}
Let $n\geq 1$. and let  $\beta\in \mathcal{B}_{n+3\tins'+2}$. By definition of $\mathcal{B}_{n+3\tins'+2}$, we have
$\beta_i\in B_1$ for $\tins'+1\leq i\leq n+2\tins'+1$.

According to Theorem \ref{Cyril},
$\Phi^{2\tins'+1,t_0}_\beta (\Lag)$ consists of a
single Lagrangian manifold, which is $\ins$-unstable in the symplectic coordinates in $V_{\beta_{2\tins'+1}}$.

Therefore, we may say that $\tilde{U}_{\beta_0...\beta_{2\tins'+1}}
\big{(}\overline{\chi} E_h^0\big{)}$ is a Lagrangian state associated
to the Lagrangian manifold
$\Phi^{2\tins'+1,t_0}_\beta (\Lag)$. Thanks to Lemma \ref{diderot}, we may use Lemma
\ref{pacal}, to write:

\begin{equation*}
\big{(}\mathcal{U}_{\beta_{2\tins'+1}} \Pi_{\beta_{2\tins'+1}} \tilde{U}_{\beta_0...\beta_{2\tins'+1}}
\big{(}\overline{\chi} E_h^0\big{)}\big{)}(y^{\rho_{\beta_{2\tins'+1}}})=a(y^{\rho_{\beta_{2\tins'+1}}};h)e^{i\phi(y^{\rho_{\beta_{2\tins'+1}}})/h}
\end{equation*}
for some $a\in S^{comp}(\mathbb{R}^d)$.

\subsubsection{Propagation of Lagrangian states close to the trapped set}\label{consul}

To lighten the notations, let us write
$\hat{n}:=n+2\tins'+1$.

 For each $2\tins'+1\leq k\leq \hat{n}$, we write
\begin{equation*}T_{\beta_{k'+1},\beta_{k'}} := \mathcal{U}_{\beta_{k'+1}}
\tilde{U}_{\beta_{k'+1}} \mathcal{U}^*_{\beta_k'}.
\end{equation*}

$T_{\beta_{k'+1},\beta_{k'}}$ is an operator quantising the map
$\kappa_{\beta_{k'},\beta_{k'+1}}$ obtained by expressing $\Phi^{t_0}$ in
the coordinates $(y^{\beta_{k'}},\eta^{\beta_{k'}}) \mapsto
(y^{\beta_{k'+1}},\eta^{\beta_{k'+1}})$. It is of the form (\ref{blabla}).

We will write
\begin{equation*}T_{\beta}^{2\tins'+1,\hat{n}}:= T_{\beta_{\hat{n}},\beta_{\hat{n}}}
\circ...\circ
T_{\beta_{2\tins'+2},\beta_{2\tins'+1}}.
\end{equation*}

Thanks to Remark \ref{napoleon}, we may apply Proposition \ref{haroun} to
describe the action of $T_{\beta}^{2\tins'+1,\hat{n}}$ on the
Lagrangian state $\mathcal{U}_{\beta_{2\tins'}+1} \tilde{U}_{\beta_0...\beta_{2\tins'+1}} \big{(}\overline{\chi} E_h^0\big{)}$.
Note that \begin{equation*}
T_{\beta}^{2\tins'+1,\hat{n}}\mathcal{U}_{\beta_{2\tins}+1} \tilde{U}_{\beta_0...\beta_{2\tins+1}}= \mathcal{U}_{\beta_{\hat{n}}}
\tilde{U}_{\beta_0...\beta_{\hat{n}}}
\end{equation*}

We obtain that
$
\mathcal{U}_{\beta_{\hat{n}+1}} \Pi_{\beta_{\hat{n}+1}} \tilde{U}_{\beta_0...\beta_{\hat{n}}}
\big{(}\overline{\chi} E_h^0\big{)}) = e_{\hat{n},\beta}
$, with
\begin{equation} \label{Gainsbourg}
e_{\hat{n},\beta}(y)= a^{\hat{n},\beta}(y)
e^{i \phi_{\hat{n},\beta}(y)/h} ,~~~~~~ y\in \mathbb{R}^d.
\end{equation}

In the notations of Section \ref{bismark}, we have by Remark
\ref{napoleon} that for any
$\tins'+1\leq k'\leq \hat{n}$ $D_{k'}=S_T(V_{\beta_{k'}})
\big{(}1+O(\epsilon^\mathsf{p})\big{)}< 1$.
 We therefore set
\begin{equation}\label{goodtimes}
J_{\beta_{\tins'+1}...\beta_{\hat{n}}}:= \prod_{k'=\tins+1}^{\hat{n}} \Big{(}S_{t_0}(V_{\beta_{k'}})
\big{(}1+O(\epsilon^\mathsf{p})\big{)}\Big{)}.
\end{equation}

Thanks to equation (\ref{republique}) in Proposition \ref{haroun} and equation (\ref{highway78}), we obtain for any $\ell\in \mathbb{N}$:

\begin{equation}\label{republique2}
\|a^{\hat{n},\beta}\|_{C^\ell}\leq \big{(}1+ C_\ell h \big{)} C'_{\ell}J_{\beta_{\tins'+1}...\beta_{\hat{n}}}(\hat{n}+1)^{\ell},
\end{equation}
for some constants $C_{\ell},C'_\ell$.

\subsubsection{End of the propagation}\label{lizard}
Using equation (\ref{canardsauvage2}) and the results of the previous subsection, we have

\begin{equation}\label{canardsauvage3}
\begin{aligned}
\chi E_h &= \sum_{n=1}^{M_r |\log h|} (\chi \tilde{U}(t_0))^{N_\chi+1} \Big{(}
\sum_{\beta\in \mathcal{B}_{n+3\tins'+2}}\tilde{U}_{\beta_{\hat{n}}...\beta_n}
\mathcal{U}^*_{\beta_{\hat{n}}} e_{\hat{n},\beta}\Big{)}\\
&+\sum_{l=0}^{N_\chi+ 3\tins'+3}\big{(}\chi \tilde{U}(t_0)\big{)}^l(1-\chi)\chi_{t_0} E_h^0 +O_{L^2}(h^r),
\end{aligned}
\end{equation}
with
\begin{equation*}
\mathcal{U}_b\Pi_b\sum_{l=0}^{N_\chi+ 3\tins'+3}\big{(}\chi \tilde{U}(t_0)\big{)}^l(1-\chi)\chi_{t_0} E_h^0 = 
\sum_{\beta\in \tilde{\mathcal{B}}_0} e_{0,\beta,b}.
\end{equation*}

To finish the proof, we have to apply $\mathcal{U}_b \Pi_b (\chi \tilde{U}(t_0))^{N_\chi+1} \tilde{U}_{\beta_{\hat{n}}...\beta_{n}}
\mathcal{U}^*_{\beta_{\hat{n}}}$ to $e_{\hat{n},\beta}$.

To do this, one should once again decompose the propagator, and study
 \begin{equation}\label{tutiresoutupointes}
\sum_{\beta'\in B^{N_\chi+1}}\mathcal{U}_b \Pi_b  U_{\beta'}^{\chi} \tilde{U}_{\beta_{\hat{n}}...\beta_{n}}
\mathcal{U}^*_{\beta_{\hat{n}}}e_{\hat{n},\beta},
\end{equation}
with $U_{\beta'}^\chi$ as in (\ref{petanque}).
To analyse each of the terms on the right-hand side of (\ref{tutiresoutupointes}), we use once again Lemma \ref{pacal} (the lemma may be
applied, thanks to Theorem \ref{Cyril} and to Lemma \ref{diderot}).

We obtain that 
\begin{equation}\label{sheep}
\mathcal{U}_b \Pi_b  U_{\beta'}^{\chi} \tilde{U}_{\beta_{\hat{n}}...\beta_{n}}
\mathcal{U}^*_{\beta_{\hat{n}}}e_{\hat{n},\beta}(y) = a^{n,\beta,\beta'}(y)
e^{i \phi_{n,\beta,\beta'}(y)/h} ,~~~~~~ y\in \mathbb{R}^d,
\end{equation}
and thanks to equation (\ref{republique2}), we get
\begin{equation}\label{republique3}
\|a^{n,\beta,\beta'}\|_{C^\ell}\leq \big{(}1+ C_\ell h \big{)} C'_{\ell} J_{\beta_{\tins'+1}...\beta_{\hat{n}}}(\hat{n}+1)^{\ell},
\end{equation}
for some constants $C_{\ell},C'_\ell$.

For any $n\geq 1$, we write 
\begin{equation}\label{BN}\tilde{\mathcal{B}}_n= \mathcal{B}_{n+3 \tins'+2} \times B^{N_\chi+1}.
\end{equation}
As announced, the cardinal of $\tilde{\mathcal{B}}_n$ grows exponentially with $n$.
If $\beta= (\beta',\beta'')\in \tilde{\mathcal{B}}_n$ with $\beta \in \mathcal{B}_{n+2 \tins'+1}$, we define

\begin{equation*}
\begin{aligned}
a_{n,\beta,b} &=  a^{N_{n+2N_\chi+2,l},\beta,\beta'}\\
\phi_{n,\beta,b} &= \phi_{N_{n+2N_\chi+2,l},\beta,\beta'}.
\end{aligned}
\end{equation*}

With these notations, combining (\ref{canardsauvage3}) with (\ref{sheep}) gives us the decomposition (\ref{greenday}).

The key point to obtain estimate (\ref{sheriff}) is to notice that for any $N\geq \tins'+1$, we have thanks to (\ref{rainy}) 
\begin{equation}\label{presheriff}
\begin{aligned}
\sum_{\beta_{\tins'+1}...\beta_{\hat{N}}\in B_1^{N-\tins'-1}} J_{\beta_{\tins'+1}...\beta_{\hat{N}}} &= \Big{(}\sum_{b\in B_1} S_{t_0} (V_b)\big{(}1+O(\epsilon^\mathsf{p})\big{)}\Big{)}^{N-\tins'-1}\\
&\leq \exp \Big{[} (N-\tins'-1) (t_0 \mathcal{P}(1/2) \big{(}1+O(\epsilon^\mathsf{p})\big{)}\Big{]}.
\end{aligned}
\end{equation}
By applying (\ref{presheriff}) for $N= N_{n+2N_\chi+2,l}$, and combining it with (\ref{republique3}), we get (\ref{sheriff}).
\end{proof}
Note that, although the statement of Theorem \ref{ibrahim} describes the generalized eigenfunctions $E_h$ only very close to the trapped set, equation (\ref{canardsauvage3}) can be used to describe $E_h$ in any compact set, though in a less explicit way.

Using the estimate (\ref{sheriff}) as well as the fact that $\|\chi \tilde{U}(t_0)\|_{L^2\rightarrow L^2} \leq 1$ and $\|\mathcal{U}_b\|_{L^2\rightarrow L^2}\leq 1$, we deduce Theorem \ref{davies}.

\section{Semiclassical measures}\label{SC}
The main ingredient in the proof of Corollary \ref{blacksabbath} is  non-stationary phase. Let us recall the estimate we will use, and which can be proven by integrating by parts.

Let $a, \phi\in S^{comp}(X)$.
We consider the oscillatory integral:
$$I_h(a,\phi):= \int_{X} a(x) e^{\frac{i\phi(x,h)}{h}} \mathsf{d}x.$$
\begin{proposition}\label{nonstat}
Let $\epsilon>0$. Suppose that there exists $C>0$ such that, $\forall x\in spt(a), \forall  0<h<h_0$, $|\partial \phi(x,h)|\geq C h^{1/2-\epsilon}$. Then
$$I_h(a,\phi)=O(h^\infty).$$
\end{proposition}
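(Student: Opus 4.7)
The plan is the standard non-stationary phase argument via iterated integration by parts, carefully tracking the $h$-dependence since $|\partial\phi|$ is allowed to shrink like $h^{1/2-\epsilon}$.

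First I would work in local coordinates on a chart containing $spt(a)$ (which is compact), and define the first-order differential operator
$$ L := \frac{h}{i|\partial \phi|^2}\,\partial\phi \cdot \partial, $$
which is smooth on $spt(a)$ because the hypothesis guarantees $|\partial\phi|\bel 0$ there, and which satisfies $L\, e^{i\phi/h} = e^{i\phi/h}$ by direct computation. Writing $L = h M$, where $M$ is a first-order differential operator \emph{independent of $h$} whose coefficients are smooth rational expressions in $\partial\phi$ with denominator $|\partial\phi|^2$, I would iterate $L$ and integrate by parts $N$ times (no boundary terms since $a$ is compactly supported) to obtain
$$ I_h(a,\phi) = \int_X a\, L^N(e^{i\phi/h})\, \mathsf{d}x = h^N \int_X (M^t)^N a \cdot e^{i\phi(x,h)/h}\, \mathsf{d}x. $$

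The key step is the inductive bound: on $spt(a)$,
$$ \big|(M^t)^N a(x)\big| \leq \frac{C_N}{|\partial\phi(x,h)|^{2N}}, $$
where $C_N$ depends only on finitely many $S^{comp}$ semi-norms of $a$ and $\phi$ (hence is uniform in $h$). The base case $N=1$ is direct: expanding $M^t a = -\frac{1}{i}\partial\cdot\big(\tfrac{a\,\partial\phi}{|\partial\phi|^2}\big)$, the worst contribution is of the form $a\,\partial^2\phi/|\partial\phi|^2$, bounded by $C/|\partial\phi|^2$ since $a,\phi\in S^{comp}$ have uniformly bounded derivatives. For the inductive step, each fresh application of $M^t$ either introduces an additional coefficient $|\partial\phi|^{-2}$ (from the shape of $M^t$) or differentiates an existing $|\partial\phi|^{-2k}$ factor, producing an extra $|\partial\phi|^{-1}$ times a uniformly bounded second derivative of $\phi$; tracking the worst terms shows the singularity grows by exactly a factor $|\partial\phi|^{-2}$ per step.

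Combining this bound with $|\partial\phi| \geq C h^{1/2-\epsilon}$ on $spt(a)$ and the compactness of $spt(a)$ gives
$$ |I_h(a,\phi)| \leq h^N \int_{spt(a)} |(M^t)^N a|\, \mathsf{d}x \leq C_N' h^N \cdot h^{-2N(1/2-\epsilon)} = C_N'\, h^{2N\epsilon}. $$
Since $N$ is arbitrary and $\epsilon\bel 0$, this proves $I_h(a,\phi) = O(h^\infty)$. The only non-trivial point is the inductive bound on $(M^t)^N a$, which is pure bookkeeping; the essential feature is that $\epsilon\bel 0$ strictly, so each integration by parts gains a genuine positive power $h^{2\epsilon}$ rather than merely breaking even at the borderline exponent $1/2$.
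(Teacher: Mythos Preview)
Your proof is correct and follows essentially the same integration-by-parts scheme as the paper's sketch, with the same operator $L=\frac{h}{i|\partial\phi|^2}\partial\phi\cdot\partial$ and the same bookkeeping that each iteration gains a factor $h^{2\epsilon}$. One small wording slip: since $\phi=\phi(x,h)$, the operator $M$ is not literally ``independent of $h$'', but your subsequent inductive bound $|(M^t)^N a|\leq C_N|\partial\phi|^{-2N}$ with $C_N$ depending only on the $S^{comp}$ seminorms of $a$ and $\phi$ is exactly the right statement and makes the argument go through.
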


We shall only give a sketch of proof here, and refer to \cite[\S7.7]{hormander35001analysis} for more details.

\begin{proof}[Sketch of proof]
To prove this result, we simply integrate by parts, noting that
$$I_h(a,\phi) = \frac{h}{i}\int_X \frac{a}{|\partial \phi|^2} \partial \phi \cdot \partial (e^{\frac{i\phi(x,h)}{h}}) \mathsf{d}x.$$
Hence, when we integrate by parts, the worst term in the integrand will involve second derivatives of $\phi$ times $\frac{h}{|\partial \phi|^2}$, and will therefore be a $O(h^{2\epsilon})$ by assumption. By integrating by parts more times, we will gain a factor $h^{2\epsilon}$ every time, so that $I_h(a,\phi)$ is actually a $O(h^\infty)$.
\end{proof}
Note that the sketch of proof above tells us that, if we could say that when $\partial \phi(x,h)$ is small, then the higher derivatives of $\phi$ are small as well, i.e., if we had
$$\forall k\geq 2, \exists C_k \text{ such that } |\partial^k \phi(x,h)|\leq C_k |\partial \phi(x,h)|,$$
then we would have $I_h(a,\phi)= O(h^\infty)$ provided $|\partial \phi(x,h)|\geq C h^{1-\epsilon}$. However, it is not clear that we can estimate the higher derivatives of the phase functions which appear in this section.

\subsection{Distance between the Lagrangian manifolds}
To take advantage of Proposition \ref{nonstat}, we need a lower bound on the distance between the Lagrangian manifolds which make up $\Phi^{n,t_0}(\Lag)\cap V_b$. To prove such a lower bound, let us first state an elementary topological lemma.

\begin{lemme}\label{vittel}
There exists $c_0\bel 0$ such that for any $\rho,\rho'\in T^*X_0\cap \mathcal{E}$ such that $d(\rho,\rho')< c_0$, there exists $b\in B$ such that $\rho,\rho'\in V_b$.
\end{lemme}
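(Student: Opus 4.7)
The plan is to apply the Lebesgue number lemma to the open cover $(V_b)_{b\in B}$ restricted to the compact set $\mathcal{E}\cap T^*X_0$.

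First I would verify that $\mathcal{E}\cap T^*X_0$ is compact. The set $X_0=\{x\in X;\, b(x)\geq \epsilon_0/2\}$ is compact because $b$ is a boundary defining function of the compactification $\overline{X}$, so $\{b\geq \epsilon_0/2\}$ is a closed subset of $\overline{X}$ that does not meet $\partial \overline{X}$, hence compact in $X$. On this compact base, the fibrewise compactness of $\mathcal{E}\cap T^*_xX$ (stemming from $p(x,\xi)=|\xi|^2+V(x)=1$) together with continuity of $V$ gives a uniform bound on $|\xi|$, so $\mathcal{E}\cap T^*X_0$ is compact.

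Next, recall that by construction of Section \ref{Benoit} (see Remark \ref{animals}), the family $(V_b)_{b\in B}$ is an open cover of $\mathcal{E}$, and in particular of the compact set $\mathcal{E}\cap T^*X_0$. Applying the Lebesgue number lemma to this open cover of the compact metric space $(\mathcal{E}\cap T^*X_0,d)$ produces a Lebesgue number $c_0>0$ such that every subset of $\mathcal{E}\cap T^*X_0$ of diameter $<c_0$ is contained in at least one $V_b$. Given $\rho,\rho'\in \mathcal{E}\cap T^*X_0$ with $d(\rho,\rho')<c_0$, the two-point set $\{\rho,\rho'\}$ has diameter $<c_0$, so it lies in some $V_b$, which is the desired conclusion.

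There is no real obstacle here: the statement is a routine consequence of compactness together with the fact that the $(V_b)_{b\in B}$ cover $\mathcal{E}$. The only mild point of care is checking that $X_0$ (and hence $\mathcal{E}\cap T^*X_0$) is indeed compact, which follows directly from Hypothesis \ref{Guepard}.
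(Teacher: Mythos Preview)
Your proof is correct and follows essentially the same approach as the paper: both rely on the compactness of $\mathcal{E}\cap T^*X_0$ and the fact that $(V_b)_{b\in B}$ is an open cover. The paper simply reproves the Lebesgue number lemma inline via a contradiction and sequential compactness argument, whereas you invoke it by name (and additionally spell out why $\mathcal{E}\cap T^*X_0$ is compact, which the paper takes for granted).
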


\begin{proof}
Suppose for contradiction that for any $\epsilon\bel 0$, there exists $\rho_\epsilon, \rho'_\epsilon$ such that $d(\rho_\epsilon,\rho'_\epsilon)< \epsilon$ and such that for all $b\in B$ such that $\rho_\epsilon\in V_b$, we have $y_\epsilon\notin V_b$. By compactness of $T^*X_0\cap \mathcal{E}$, we may suppose that $\rho_\epsilon$ converges to some $\rho$. We then have $\rho'_\epsilon \longrightarrow x$, and if $b\in B$ is such that $\rho\in V_b$, then and $\rho_\epsilon,\rho'_\epsilon\in V_b$ for $\epsilon$ small enough, a contradiction.
\end{proof}

We may now state our lower bound on the distance between the Lagrangian leaves which make up  $\Phi^{n,t_0}(\Lag)\cap V_b$.

Let $N\in \mathbb{N}$, $\beta \in B^N$ and $b\in B_1$. 
The set $\Phi_\beta^{n,t_0}(\Lag) \cap V_b$ may be written in the form $\{(y^{\rho_b}, \partial\tilde{\phi}_{n,\beta,b}(y^{\rho_b})\}$, for some smooth function $\tilde{\phi}_{N,\beta,b}$.

For any
$\beta\in B^N$,$\beta'\in B^{N'}$, let us denote by
$\sigma(\beta,\beta'):= \max(N-\tau(\beta),N'-\tau(\beta'))$, with $\tau(\beta)$ defined as in (\ref{rollingstone}).

\begin{proposition}\label{velvetunderground}
There exist constants $C_1',C'_2\bel 0$ such that for any $N,N'\in \mathbb{N}$, for any $\beta\in B^N,\beta'\in B^{N'}$, for any $b\in B_1$ and for any $y^{\rho_b}$, we have either $\partial \tilde{\phi}_{N,\beta,b}(y^{\rho_b}) = \partial \tilde{\phi}_{N',\beta',b}(y^{\rho_b}) $ or 
\begin{equation*}
|\partial \tilde{\phi}_{N,\beta,b} (y^{\rho_b})- \partial \tilde{\phi}_{N',\beta',b}(y^{\rho_b})|\geq C'_1 e^{C'_2 \sigma(\beta,\beta')}. 
\end{equation*}
\end{proposition}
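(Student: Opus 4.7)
The plan is to exploit the hyperbolic structure of the flow near the trapped set, in the spirit of the Inclination Lemma (Lemma~\ref{Benjamin}) read in reverse time. Without loss of generality, I assume that the maximum $\sigma:=\sigma(\beta,\beta')$ is attained by $\beta$, i.e.\ $\sigma=N-\tau(\beta)$. By the definition of $\tau$, the tail $(\beta_{\tau(\beta)+1},\ldots,\beta_{N-1})$ lies in $B_1\cup B_2$, so the backward orbit of $\rho:=\kappa_b^{-1}\bigl(y^{\rho_b},\partial\tilde{\phi}_{N,\beta,b}(y^{\rho_b})\bigr)\in V_b$ under iterations of $\Phi^{-t_0}$ stays inside the interaction region $\mathcal{E}^\delta$ for $\sigma$ steps, where the Poincaré map formalism of Section~\ref{banff} applies uniformly. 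Similarly set $\rho':=\kappa_b^{-1}\bigl(y^{\rho_b},\partial\tilde{\phi}_{N',\beta',b}(y^{\rho_b})\bigr)$; since $\rho$ and $\rho'$ share the same $y^{\rho_b}$-coordinate, their distance in the adapted metric is, up to a universal multiplicative constant, equal to $\delta:=|\partial\tilde{\phi}_{N,\beta,b}(y^{\rho_b})-\partial\tilde{\phi}_{N',\beta',b}(y^{\rho_b})|$.

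The main step is to iterate the inverse Poincaré maps backward for $\sigma$ steps, tracking the evolution of the pair $(\rho,\rho')$. By Corollary~\ref{marianne}, both underlying leaves remain $\ins$-unstable throughout the backward propagation, so they remain Lagrangian graphs that can be compared across successive twisted charts via Lemma~\ref{Gates}. The linearization of each inverse Poincaré map at the base point is block-diagonal with blocks $A^{-1}$ on the $y$-component and ${}^{t}A$ on the $\eta$-component, with $\|A^{-1}\|\leq\nu<1$ by (\ref{fourmi}); thus the $\eta$-component of the separation expands by at least $\nu^{-1}(1-O(\epsilon^{\mathsf p}))$ per step. Crucially, the nonlinear corrections $\tilde{\alpha},\tilde{\beta}$ from (\ref{Claire}) contribute only at higher order thanks to the vanishing conditions (\ref{poincarestabel}), exactly as in the proof of Lemma~\ref{Benjamin}. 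Iterating for $\sigma$ backward steps gives a quantitative control relating $\delta$ to the separation of the backward-propagated pair at time $-\sigma t_0$, of size $\sim\nu^{-\sigma}\delta$.

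The conclusion comes from a dichotomy at time $-\sigma t_0$: either the two backward orbits remain in a common chart $V_{b''}$ in which they coincide as points of a single $\ins$-unstable Lagrangian leaf of $\Phi^{(N-\sigma)t_0}(\Lag)$ (a consequence of Hypothesis~\ref{chaise} and Lemma~\ref{tisane}), in which case forward propagation to $V_b$ produces $\rho=\rho'$ and hence $\delta=0$, placing us in the first alternative of the proposition; or the two backward-propagated points fail to share a common chart, in which case Lemma~\ref{vittel} gives a uniform lower bound $c_0$ on their separation. Combining this lower bound with the exponential estimate of the previous step yields the inequality $\delta\geq C_1' e^{C_2' \sigma}$, with $C_1'$ proportional to $c_0$ and $C_2'$ a constant depending only on $\nu$ and the Hölder constants of (\ref{evier}).

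The main technical obstacle I foresee is the careful coordinate-change bookkeeping across the $\sigma$ backward steps, in the spirit of Lemma~\ref{Benjamin} but in reverse time: one must verify that the expansion factor $\nu^{-1}$ is realized uniformly in the chart-to-chart transitions despite the Hölder errors in (\ref{molson})--(\ref{evier}), and one must handle with care the moment when the backward orbit exits the interaction region through $V_0$, where the Poincaré-map description breaks down and the argument must transition to Lemma~\ref{vittel} together with the invariance Hypothesis~\ref{chaise} on $\Lag$.
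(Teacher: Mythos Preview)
Your approach is substantially more elaborate than the paper's, and while the underlying hyperbolic-expansion idea is sound, the argument as written has a gap in the dichotomy step.

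The paper's proof avoids the Poincar\'e-map machinery entirely. It uses only: (i) a crude Lipschitz bound $d(\Phi^t(\rho),\Phi^t(\rho'))\le e^{Ct}d(\rho,\rho')$ coming from compactness of the interaction region; (ii) the combinatorial observation that if $\Phi^{-kt_0}(\rho)$ and $\Phi^{-kt_0}(\rho')$ share a common chart $V_{b_k}$ for \emph{every} $0\le k\le\sigma$, then both points lie in $\Phi_{\beta''}^{\max(N,N'),t_0}(\Lag)$ for a single sequence $\beta''$ (obtained by padding with $0$'s), and Corollary~\ref{marianne} forces them onto the same leaf and hence to coincide; (iii) if instead there is some $k\le\sigma$ at which no common chart exists, Lemma~\ref{vittel} gives $d(\Phi^{-kt_0}(\rho),\Phi^{-kt_0}(\rho'))\ge c_0$, and the Lipschitz bound yields $d(\rho,\rho')\ge c_0e^{-Ckt_0}\ge c_0e^{-C\sigma t_0}$. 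No Inclination Lemma, no twisted coordinates, no tracking of expansion rates through chart transitions.

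The gap in your dichotomy is this: you place the alternative at the single time $-\sigma t_0$ and claim that ``same chart at time $-\sigma$'' implies the two points lie on a common leaf of $\Phi^{(N-\sigma)t_0}(\Lag)$. But Lemma~\ref{tisane} and Corollary~\ref{marianne} give uniqueness only for a \emph{fixed truncation sequence}; two points arriving in the same $V_{b''}$ at one time via different histories need not lie on the same leaf. What forces coincidence is the stronger hypothesis that they share a chart at \emph{every} intermediate step $0\le k\le\sigma$, which is what lets you build a common sequence $\beta''$ and invoke Corollary~\ref{marianne}. Once this is corrected, your exponential-expansion step becomes unnecessary: the crude Lipschitz estimate already converts the separation $c_0$ at some intermediate time into the desired lower bound at time $0$. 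Your route would produce a sharper constant $C_2'$ (comparable to the Lyapunov exponent rather than a generic compactness constant), but the statement does not require this, and the added bookkeeping---controlling how the $y$- and $\eta$-components of the separation mix under the nonlinearities $\tilde\alpha,\tilde\beta$ and the H\"older chart changes, and handling the step where the backward orbit of $\rho'$ may exit into $V_0$ before $\sigma$ steps have elapsed---is not needed for the result.
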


\begin{proof}
Since $T^*X_0\cap \mathcal{E}$ is compact, we may find a constant $C\bel 0$ such that for any $\rho,\rho'\in \mathcal{E}\cap T^*X_0$, 
\begin{equation}\label{flowers}
d(\Phi^t(\rho),\Phi^t(\rho'))\leq e^{Ct} d(\rho,\rho'),
\end{equation}
where $d$ is the distance on the energy layer which we introduced in section \ref{averse}.

Let $b\in B_1$, and $y^{\rho_b}\in D_{\beta,b}\cap D_{\beta',b}$ be such that $\partial \tilde{\phi}_{N,\beta,b}(y^{\rho_b}) \neq \partial \tilde{\phi}_{N',\beta',b}(y^{\rho_b}) $.
 Let us denote by $\rho$ the point $(y^{\rho_b};\partial \tilde{\phi}_{N,\beta,b}(y^{\rho_b}))$ and by $\rho'$ the point $(y^{\rho_b};\partial \tilde{\phi}_{N',\beta',b}(y^{\rho_b}))$.
 
  We claim that there exists $0\leq k\leq \sigma(\beta,\beta')$ such that for each $b'\in B$, if $\Phi^{-k t_0}(\rho)\in V_{b'}$, then $\Phi^{-k t_0}(\rho')\notin V_b$.
Indeed, if no such $k$ existed, then for each $k$, there would exist $b_k\in B$ such that $\Phi^{-k t_0}(\rho)\in V_{b_k}$ and $\Phi^{-k t_0}(\rho')\in V_{b_k}$ for each $0\leq k\leq \sigma(\beta,\beta')$. We would then have $\rho\in \Phi_{\beta''}^{\max(N,N'), t_0}(\Lag)$ and $\rho'\in \Phi_{\beta''}^{\max(N,N'), t_0}(\Lag)$ for some sequence $\beta''$ built by possibly adding some $0$'s in front of the sequences $\beta$ and $\beta'$. This would contradict the statement of Corollary \ref{marianne}.

Thanks to Lemma \ref{vittel}, we deduce from this that there exists $0\leq k \leq \sigma (\beta,\beta')$ such that $d(\Phi^{-kt_0}(\rho),\Phi^{-k t_0} (\rho'))\geq c_0$. 
Combining this fact with equation (\ref{flowers}), we get
\begin{equation*}
d(\rho,\rho')\geq c_0 e^{-C \sigma(\beta,\beta')}
\end{equation*}
Using the fact that all metrics are equivalent on a compact set, we may compare $d(\rho,\rho')$ with $|\partial \tilde{\phi}_{N,\beta,b} (y^{\rho_b})- \partial \tilde{\phi}_{N',\beta',b}(y^{\rho_b})|$ and we deduce from this the proposition.
\end{proof}

Using the definition of $\tilde{\mathcal{B}}_n$, we deduce the following result about the functions $\phi_{n,\beta,b}$ in the statement of Theorem \ref{ibrahim}.
\begin{corolaire}\label{petrole}
There exist constants $C'_1,C'_2\bel 0$ such that for any $n,n'\in \mathbb{N}$, for any $\beta\in \tilde{\mathcal{B}}_n, \beta'\in \tilde{\mathcal{B}}_{n'}$, for any $b\in B_1$ and for any $y^{\rho_b}$,
we have either $\partial \phi_{n,\beta,b}(y^{\rho_b}) = \partial \phi_{n',\beta',b}(y^{\rho_b}) $ or 
\begin{equation*}
|\partial \phi_{n,\beta,b} (y^{\rho_b})- \partial \phi_{n',\beta',b}(y^{\rho_b})|\geq C'_1 e^{C'_2 \min(n,n')}. 
\end{equation*}
\end{corolaire}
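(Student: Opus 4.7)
The plan is to express each $\phi_{n,\beta,b}$ as a phase of the form $\tilde\phi_{N,\alpha,b}$ appearing in Proposition \ref{velvetunderground}, then to invoke the back-propagation / uniqueness argument used to prove that proposition, but with the back-propagation truncated at the shorter of the two sequence lengths.

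First, by unwinding the iterated application of Lemma \ref{pacal} in Section \ref{lizard}, one sees that for $\beta=(\beta^{(1)},\beta^{(2)})\in\tilde{\mathcal{B}}_n=\mathcal{B}_{n+3\tins'+2}\times B^{N_\chi+1}$, the phase $\phi_{n,\beta,b}$ is the generating function in the straight coordinates centred at $\rho_b$ of the Lagrangian manifold $V_b\cap\Phi^{N_n,t_0}_{\alpha(\beta)}(\Lag)$, where $\alpha(\beta)\in B^{N_n}$ is the concatenation of $\beta^{(1)}$ and $\beta^{(2)}$, preceded by leading $0$'s accounting for the time $\Lag$ spends outside the interaction region before first entering $V_{\beta^{(1)}_0}$, and $N_n=n+O(1)$. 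Thus $\phi_{n,\beta,b}$ coincides with $\tilde\phi_{N_n,\alpha(\beta),b}$ in the notation of Proposition \ref{velvetunderground}.

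Second, set $\rho:=(y^{\rho_b},\partial\phi_{n,\beta,b}(y^{\rho_b}))$ and $\rho':=(y^{\rho_b},\partial\phi_{n',\beta',b}(y^{\rho_b}))$, suppose $\rho\neq\rho'$, and assume without loss of generality $n\leq n'$. Set $L:=N_n-1=n+O(1)$. The claim is that there exists some $0\leq k\leq L$ at which $\Phi^{-kt_0}(\rho)$ and $\Phi^{-kt_0}(\rho')$ lie in no common $V_{b''}$. Otherwise, one could select $b_0=b,b_1,\ldots,b_L\in B$ with $\Phi^{-kt_0}(\rho),\Phi^{-kt_0}(\rho')\in V_{b_k}$ for every $0\leq k\leq L$; closing the recursion via Hypothesis \ref{chaise} (to transport the step $k=L$ back to $\Lag$), one would deduce that both $\rho$ and $\rho'$ lie in $V_b\cap\Phi^{N_n,t_0}_{(b_L\cdots b_0)}(\Lag)$, which by Corollary \ref{marianne} is a single Lagrangian manifold, forcing the $\eta^{\rho_b}$-components of $\rho$ and $\rho'$ to agree, a contradiction. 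Given such a $k$, Lemma \ref{vittel} yields $d(\Phi^{-kt_0}(\rho),\Phi^{-kt_0}(\rho'))\geq c_0$, the Lipschitz estimate (\ref{flowers}) propagates this forward to $d(\rho,\rho')\geq c_0\,e^{-Ckt_0}\geq c_0\,e^{-Ct_0(n+O(1))}$, and equivalence of metrics on a compact coordinate patch near $\rho_b$ converts this into the desired lower bound on $|\partial\phi_{n,\beta,b}(y^{\rho_b})-\partial\phi_{n',\beta',b}(y^{\rho_b})|$.

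The main obstacle lies in the second step. A naive application of Proposition \ref{velvetunderground} would only yield $\sigma(\alpha(\beta),\alpha(\beta'))\leq\max(n,n')+O(1)$, producing a bound involving $\max(n,n')$ in the exponent. The improvement to $\min(n,n')$ rests on the observation that the back-propagation of $\rho$, which is attached to the shorter of the two sequences, reaches $\Lag$ in exactly $N_n$ steps, allowing the Corollary \ref{marianne} recursion to close at the shorter length $N_n$ rather than $\max(N_n,N_{n'})$.
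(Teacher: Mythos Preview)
Your proof is correct and follows the paper's approach: the paper's own argument is the single sentence ``Using the definition of $\tilde{\mathcal{B}}_n$, we deduce the following result,'' so your identification of $\phi_{n,\beta,b}$ with a phase $\tilde\phi_{N_n,\alpha(\beta),b}$ and your appeal to the back-propagation mechanism of Proposition \ref{velvetunderground} is exactly what is intended. You in fact supply more than the paper does, by noting that a naive invocation of Proposition \ref{velvetunderground} would only produce the exponent $\max(n,n')$ and by explaining how Hypothesis \ref{chaise} lets one close the recursion at the shorter length to obtain $\min(n,n')$; this is precisely the step the paper leaves implicit.
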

\subsection{Proof of Corollary \ref{blacksabbath}}
We shall now prove Corollary \ref{blacksabbath}, which we recall.
 \begin{corolaire}
There exists a constant $0< c \leq 1$ and functions $e_{n,\beta,b}$ for $n\in \mathbb{N}$, $\beta\in \tilde{\mathcal{B}}_n$ and $b\in B_1$ such that for any $a\in C_c^\infty(T^*X)$ and for any $\chi\in C_c^\infty(X)$, we have
\begin{equation*}\langle Op_h(\pi_b^2 a) \chi E_h,  \chi E_h\rangle = \int_{T^*X} a(x,v)
\mathrm{d}\mu_{b,\chi}(x,v) +
O(h^c),
\end{equation*}
with \begin{equation*}\mathrm{d}\mu_{b,\chi}(\kappa_b^{-1}(y^{\rho_b},\eta^{\rho_b})) = \sum_{n=0}^{\infty} \sum_{\beta\in
\tilde{\mathcal{B}}_n}  e_{n,\beta,b}(y^{\rho_b}) \delta_{\{\eta^{\rho_b}=\partial
\phi_{j,n}(y^{\rho_b})\}} d y^{\rho_b},
\end{equation*}
The functions $e_{n,\beta,b}$ satisfy the estimate (\ref{sheriff}).
\end{corolaire}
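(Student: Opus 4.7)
The plan is to plug the decomposition (\ref{greenday}) of Theorem \ref{ibrahim} into the matrix element and compute the resulting double sum of oscillatory integrals. First I would use the composition formula for pseudodifferential operators and the microlocal unitarity of $\mathcal{U}_b$ to rewrite
\begin{equation*}
\langle Op_h(\pi_b^2 a)\chi E_h,\chi E_h\rangle = \langle Op_h(a)\mathcal{U}_b\Pi_b\chi E_h,\mathcal{U}_b\Pi_b\chi E_h\rangle + O(h),
\end{equation*}
modulo taking $a$ multiplied by $\chi$'s via extension, since $\pi_b$ is supported in $V_b$ where $\mathcal{U}_b^*\mathcal{U}_b\equiv I$ microlocally. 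Then (\ref{greenday}) turns the right-hand side into a double sum $\sum_{n,\beta}\sum_{n',\beta'}\mathcal{I}_{n,\beta;n',\beta'}(h)$ of oscillatory integrals with phase $\phi_{n,\beta,b}-\phi_{n',\beta',b}$, plus an $O(h^r)$ remainder coming from $R_r$.

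The diagonal contribution is the one where $\partial\phi_{n,\beta,b}(y^{\rho_b})=\partial\phi_{n',\beta',b}(y^{\rho_b})$ on the overlap of supports; by the standard stationary-phase expansion each such $\mathcal{I}_{n,\beta;n',\beta'}(h)$ equals
\begin{equation*}
\int a\bigl(\kappa_b^{-1}(y^{\rho_b},\partial\phi_{n,\beta,b}(y^{\rho_b}))\bigr)\,\overline{a^{0}_{n',\beta',b}}\,a^{0}_{n,\beta,b}\,\mathsf{d}y^{\rho_b} + O(h),
\end{equation*}
and after grouping $(n',\beta')$ according to the equivalence relation "produces the same phase derivative", I would define $e_{n,\beta,b}$ as the squared modulus of the sum of the corresponding $a^{0}_{n',\beta',b}$. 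The bound (\ref{sheriff}) then transfers to $e_{n,\beta,b}$ and ensures the sum defining $\mu_{b,\chi}$ converges absolutely.

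For the off-diagonal contribution I split according to $\min(n,n')$. When $\min(n,n')\leq c|\log h|$ with $c$ small, Corollary \ref{petrole} gives $|\partial(\phi_{n,\beta,b}-\phi_{n',\beta',b})|\geq C_1'e^{-|C_2'|\min(n,n')}\geq C_1'h^{c|C_2'|}$, which beats $h^{1/2-\epsilon}$ for $c$ small enough, so Proposition \ref{nonstat} gives an $O(h^\infty)$ bound on each such integral; the number of pairs grows only polynomially in $h^{-1}$ (as $n,n'\leq M_r|\log h|$ and $|\tilde{\mathcal{B}}_n|$ is exponential in $n$), so the total off-diagonal contribution from this regime is $O(h^\infty)$. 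For pairs with $\min(n,n')>c|\log h|$ I would instead apply Cauchy--Schwarz and the $L^2$ bound coming from $\|a_{n,\beta,b}\|_{C^0}$ together with the pressure estimate (\ref{sheriff}): this gives a geometric sum bounded by $e^{c|\log h|(\mathcal{P}(1/2)+\epsilon)}=h^{c|\mathcal{P}(1/2)+\epsilon|}$, so one picks $c$ and $\epsilon$ to obtain a positive exponent.

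The main obstacle is the non-stationary phase step: one must ensure that the $O(h^\infty)$ in Proposition \ref{nonstat} is uniform in $n,\beta,n',\beta'$ as $n,n'$ range up to $M_r|\log h|$. The integration-by-parts procedure produces constants that depend on $C^\ell$ norms of $\phi_{n,\beta,b}$, of $a_{n,\beta,b}$, and of $|\partial(\phi_{n,\beta,b}-\phi_{n',\beta',b})|^{-1}$. The first two grow at most polynomially in $n$ by Remark \ref{urss} and estimate (\ref{republique2}), while the third is controlled by the exponential lower bound of Corollary \ref{petrole}. Balancing these polynomial losses against the gains from non-stationary phase (one factor of $h^{c|C_2'|}$ per integration by parts) yields a single exponent $c>0$ for which the total remainder is $O(h^c)$, which will be the constant claimed in the statement.
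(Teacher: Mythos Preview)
Your proposal is correct and follows essentially the same route as the paper's proof. The only organisational difference is that the paper first truncates both sums at $n,n'\leq M|\log h|$ with $M=\tfrac{1}{2C_2'}-\epsilon$ (the tail being $O(h^c)$ by (\ref{sheriff}) directly), and then applies Proposition \ref{nonstat} to \emph{all} remaining off-diagonal terms at once, since for such $n,n'$ Corollary \ref{petrole} already gives $|\partial(\phi_{n,\beta,b}-\phi_{n',\beta',b})|\geq Ch^{1/2-\epsilon}$. Your split according to $\min(n,n')\lessgtr c|\log h|$ amounts to the same bookkeeping.

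One point you gloss over and that the paper makes explicit: when you group $(n',\beta')$ by the relation ``$\partial\phi_{n',\beta',b}=\partial\phi_{n,\beta,b}$ on the overlap of supports'', you need to know that this set $Y_{n',\beta'}$ is a union of connected components of $\mathrm{supp}\,a_{n',\beta',b}$, so that the diagonal/off-diagonal split does not create artificial boundary terms. The paper checks this by observing that $Y_{n',\beta'}$ is both open (the condition $\Phi^{-k t_0}(\rho)\in V_{\beta_k}\cap V_{\beta_k'}$ for all $k$ is open) and closed (by continuity of the phase gradients), hence clopen. You should add this remark to make the grouping step rigorous.
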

\begin{proof}
Take any small $\epsilon \bel 0$, and set 
\begin{equation*}
\begin{aligned}
M&:= \frac{1}{2 C_2'}-\epsilon,\\
c&:= (M - \epsilon) \mathcal{P}(1/2) = \frac{\mathcal{P}(1/2)}{2C'_2}-\epsilon
\end{aligned}
\end{equation*}
where $C_2'$ comes from Corollary \ref{petrole}.

Let $a\in C_c^\infty(T^*X)$, $\chi\in C_c^\infty(X)$ and $b\in B_1$. Using the fact that $Op_h(\mathrm{ab})= Op_h(\mathrm{a}) Op_h(\mathrm{b}) +O_{L^2\rightarrow L^2}(h)$ for any $\mathrm{a},\mathrm{b}\in S^{comp}(X)$, the self-adjointness of $\Pi_b$, and the unitarity of $\mathcal{U}_b$ on the micro-support of $\Pi_b$, we see that we have
\begin{equation*}
\begin{aligned}
\langle Op_h(\pi_b^2 a) \chi E_h, \chi E_h \rangle_{L^2(X)} &=
\langle Op_h(a)\Pi_b \chi E_h, \Pi_b \chi E_h \rangle_{L^2(X)} + O(h)\\
 &= \langle \mathcal{U}_b Op_h(a) \mathcal{U}^*_b \mathcal{U}_b\Pi_b E_h, \mathcal{U}_b \Pi_b \chi E_h \rangle_{L^2(X)} + O(h).
\end{aligned}
\end{equation*}

Now, using Egorov's Theorem (\cite[Theorem 11.1]{Zworski_2012}), we know that
\begin{equation*}
\mathcal{U}_b Op_h(a) \mathcal{U}^*_b \mathcal{U}_b \Pi_b= Op_h (a_b) \mathcal{U}_b \Pi_b + O_{L^2(X)\rightarrow L^2(\mathbb{R}^d)} (h^\infty),
\end{equation*}
where $a_b= a\circ \kappa_b +O_{L^2}(h)$. Using decomposition (\ref{greenday}), we have
\begin{equation}\label{groove2}
\begin{aligned}
&\langle Op_h(\pi_b^2 a) \chi E_h, \chi E_h \rangle_{L^2(X)}\\
&= \sum_{n=0}^{\lfloor
M_c |\log h|\rfloor}
\sum_{\beta\in \tilde{\mathcal{B}}_n} \Big{\langle} Op_h(a_b) \big{[}e^{i \phi_{n,\beta,b}/h}
a_{n,\beta,b}\big{]}, \sum_{n'=0}^{\lfloor
M_c |\log h|\rfloor}
\sum_{\beta'\in \tilde{\mathcal{B}}_{n'}} e^{i \phi_{n',\beta',b}/h}
a_{n',\beta',b} \Big{\rangle} + O(h^c),
\end{aligned}
\end{equation}
But thanks to estimate (\ref{sheriff}), 
$$\sum_{n=0}^{\lfloor
M_c |\log h|\rfloor}
\sum_{\beta\in \tilde{\mathcal{B}}_{n}} e^{i \phi_{n,\beta,b}/h}
a_{n,\beta,b}=\sum_{n=0}^{\lfloor
M |\log h|\rfloor}
\sum_{\beta\in \tilde{\mathcal{B}}_{n}} e^{i \phi_{n,\beta,b}/h}
a_{n,\beta,b} +O_{L^2}(h^c),$$
so that
\begin{equation}\label{groove}
\begin{aligned}
&\langle Op_h(\pi_b^2 a) \chi E_h, \chi E_h \rangle_{L^2(X)}\\
&= \sum_{n=0}^{\lfloor
M |\log h|\rfloor}
\sum_{\beta\in \tilde{\mathcal{B}}_n} \Big{\langle} Op_h(a_b) \big{[}e^{i \phi_{n,\beta,b}/h}
a_{n,\beta,b}\big{]}, \sum_{n'=0}^{\lfloor
M |\log h|\rfloor}
\sum_{\beta'\in \tilde{\mathcal{B}}_{n'}} e^{i \phi_{n',\beta',b}/h}
a_{n',\beta',b} \Big{\rangle} + O(h^c),
\end{aligned}
\end{equation}

We now want to fix a $n\leq M |log h|$ and a $\beta\in \tilde{\mathcal{B}}_n$, and to analyse the behaviour of $$ \Big{\langle} Op_h(a_b) \big{[}e^{i \phi_{n,\beta,b}/h}
a_{n,\beta,b}\big{]}, \sum_{n'=0}^{\lfloor
M |\log h|\rfloor}
\sum_{\beta'\in \tilde{\mathcal{B}}_n} e^{i \phi_{n',\beta',b}/h}
a_{n',\beta',b}\Big{\rangle}.$$

Let us denote by $Y_{n',\beta'}=\{ y^{\rho_b}\in spt(\phi_{n,\beta,b})\cap spt(\phi_{n',\beta',b}); \partial\phi_{n',\beta',b}(y^{\rho_b})= \partial\phi_{n,\beta,b}(y^{\rho_b})\}$. We have

\begin{equation}\label{amsterdam}
\begin{aligned}
& \Big{\langle}  Op_h(a_b) \big{[}e^{i \phi_{n,\beta,b}/h}
a_{n,\beta,b}\big{]}, e^{i \phi_{n',\beta',b}/h}
a_{n',\beta',b}  \Big{\rangle} \\
&= \int_{Y_{n',\beta'}} \Big{(}Op_h(a_b) \big{[}e^{i \phi_{n,\beta,b}/h}
a_{n,\beta,b}\big{]}\Big{)}(y^{\rho_b}) e^{i \phi_{n',\beta',b}(y^{\rho_b})/h}
a_{n',\beta',b} (y^{\rho_b};h) dy^{\rho_b}\\
 &+ \int_{\mathbb{R}^d\backslash Y_{n',\beta'}} \Big{(}Op_h(a_b) \big{[}e^{i \phi_{n,\beta,b}/h}
a_{n,\beta,b}\big{]}\Big{)}(y^{\rho_b}) e^{i \phi_{n',\beta',b}(y^{\rho_b})/h}
a_{n',\beta',b} (y^{\rho_b};h) dy^{\rho_b}.
\end{aligned}
\end{equation}

Recall that the integrals are well defined, because the phase functions are well-defined in a neighbourhood of the functions $a_{n,\beta,b}$.

The second term on the right hand side of (\ref{amsterdam}) is a $O(h^\infty)$. Indeed, the image of a Lagrangian state by a pseudo-differential operator is still a Lagrangian state with the same phase. Therefore, we are computing scalar products between Lagrangian states with respective phases $\phi_{n,\beta,b}$ and $\phi_{n',\beta',b}$.

Now, by the choice of $M$, and by Corollary \ref{petrole}, we know that for each $y^{\rho_b}\in \mathbb{R}^d\backslash Y_{n',\beta'}$ we have $|\partial \phi_{n,\beta,b}(y^{\rho_b})- \partial \phi_{n',\beta',b}(y^{\rho_b})|\geq C h^{1/2+\epsilon}$ for some $C,\epsilon \bel 0$. Hence by Proposition \ref{nonstat}, we deduce that the second term on the right hand side of (\ref{amsterdam}) is a $O(h^\infty)$.

We should now try to understand the properties of the set $Y_{n',\beta'}$.

First of all, $Y_{n',\beta'}$ is an open set. Indeed, if $y^{\rho_b}\in Y_{n',\beta'}$, this means that the point $\rho =(y^{\rho_b}, \partial \phi_{n,\beta,b}(y^{\rho_b}))$ (in the coordinates centred at $\rho_b$) belongs to $\Phi_\beta^{n,t_0}(\Lag)$ as well as to $\Phi_{\beta'}^{n',t_0}(\Lag)$ in the notations of Proposition \ref{velvetunderground}. Suppose for simplicity that $n=n'$ (the general case works the same). Then the condition $y^{\rho_b}\in Y_{n',\beta'}$ simply means that at each intermediate time $k$, $\Phi^{n-k}(\rho)$ was both in $V_{\beta_k}$ and in $V_{\beta'_{k}}$. This clearly an open condition.

On the other hand, by continuity of the phase functions, $Y_{n',\beta'}$ is a closed set. Therefore, $Y_{n',\beta'}$ consists of a certain number of connected components of the support of $\phi_{n',\beta'}$.

We know that the support of 
$a_{n',\beta',b}$ is included in the domain of definition of $\phi_{n',\beta',b}$. Therefore, some of the connected components of $spt(a_{n',\beta',b})$ may be included in $Y_{n',\beta'}$, while others are included in $\mathbb{R}^d\backslash Y_{n',\beta'}$, but none of them may intersect both sets. Therefore, if we set $a^{n,\beta}_{n',\beta',b} (y^{\rho_b}) = a_{n',\beta', b}(y^{\rho_b})$ if $y^{\rho_b}\in Y_{n',\beta'}$, 0 otherwise, then $a^{n,\beta}_{n',\beta',b}\in S$, and we have

\begin{equation*}
\begin{aligned}
&\Big{\langle} Op_h(a_b) \big{[}e^{i \phi_{n,\beta,b}/h}
a_{n,\beta,b}\big{]}, \sum_{n'=0}^{\lfloor
M |\log h|\rfloor}
\sum_{\beta'\in \mathcal{B}_n} e^{i \phi_{n',\beta',b}/h}
a_{n',\beta',b} \Big{\rangle} \\
&= \int_{\mathbb{R}^d} \Big{(} Op_h(a_b) \big{[}e^{i \phi_{n,\beta,b}/h}
a_{n,\beta,b}\big{]} \Big{)} (y^{\rho_b}) e^{-i \phi_{n,\beta,b}(y^{\rho_b})/h} \Big{(} \sum_{n'=0}^{\lfloor
M |\log h|\rfloor}
\sum_{\beta'\in \mathcal{B}_n}
a^{n,\beta}_{n',\beta',b} \Big{)} (y^{\rho_b}) \mathrm{d} y^{\rho_b}.
\end{aligned}
\end{equation*}

Let us write 
$$\tilde{a}_{n,\beta,b}:= \sum_{n'=0}^{\lfloor
M |\log h|\rfloor}
\sum_{\beta'\in \mathcal{B}_n}
a^{n,\beta}_{n',\beta',b}.$$ 
$\tilde{a}_{n,\beta,b}(y^{\rho_b})$ is the sum of all the symbols in the expansion (\ref{greenday}) having phase $\phi_{n,\beta,b}(y^{\rho_b})$. We see by the estimates (\ref{sheriff}) that $\tilde{a}_{n,\beta,b}$ satisfies the estimates (\ref{sheriff}) itself, and that
\begin{equation*}
\begin{aligned}
&\Big{\langle} Op_h(a_b) \big{[}e^{i \phi_{n,\beta,b}/h}
a_{n,\beta,b}\big{]}, \sum_{n'=0}^{\lfloor
M |\log h|\rfloor}
\sum_{\beta'\in \mathcal{B}_n} e^{i \phi_{n',\beta',b}/h}
a_{n',\beta',b} \Big{\rangle} \\
&= \int_{\mathbb{R}^d} \Big{(} Op_h(a_b) \big{[}e^{i \phi_{n,\beta,b}/h}
a_{n,\beta,b}\big{]} \Big{)} (y^{\rho_b}) e^{-i \phi_{n,\beta,b}(y^{\rho_b})/h} \tilde{a}_{n,\beta,b} (y^{\rho_b}) \mathrm{d} y^{\rho_b} +O(h^\infty).
\end{aligned}
\end{equation*}
We may then compute this expression using stationary phase, just as to compute the semiclassical measure of a Lagrangian state (see \cite[\S 5.1]{Zworski_2012}). We obtain
\begin{equation*}
\begin{aligned}
&\Big{\langle} Op_h(a_b) \big{[}e^{i \phi_{n,\beta,b}/h}
a_{n,\beta,b}\big{]}, \sum_{n'=0}^{\lfloor
M |\log h|\rfloor}
\sum_{\beta'\in \mathcal{B}_n} e^{i \phi_{n',\beta',b}/h}
a_{n',\beta',b} \Big{\rangle} \\
&= \int_{\mathbb{R}^{2n}} a_b \mathrm{d}\mu_{n,\beta,b}, 
\end{aligned}
\end{equation*}
where
\begin{equation*}
d\mu_{n,\beta,b} = a_{n,\beta,b}(y^{\rho_b}) \overline{\tilde{a}_{n,\beta,b}(y^{\rho_b})} \delta_{\{\eta^{\rho_b}=\partial \phi_{n,\beta,b}(y^{\rho_b})\}} \mathrm{d} y^{\rho_b}.
\end{equation*}

Summing over all $n,\beta$ and using equation (\ref{groove}), we obtain indeed that

\begin{equation*}\langle Op_h(\pi_b^2 a) E_h,  E_h\rangle = \int_{T^*X} a(x,\xi)
\mathrm{d}\mu_{b,\chi}(x,\xi) +
O(h^c),
\end{equation*}
with $(\kappa_b)^* \mu_{b,\chi} = \sum_{n=0}^\infty \sum_{\beta\in \mathcal{B}_n} \mu_{n,\beta,b}$, that is to say

\begin{equation*}\mathrm{d}\mu_{b,\chi}(\kappa_b^{-1}(y^{\rho_b},\eta^{\rho_b})) = \sum_{n=0}^{\infty} \sum_{\beta\in
\mathcal{B}_n}  e_{n,\beta,b}(y^{\rho_b}) 
\delta_{\{\eta^{\rho_b}=\partial
\phi_{n,\beta,b}(y^{\rho_b})\}} d y^{\rho_b},
\end{equation*}
where $e_{n,\beta,b}(y^{\rho_b}):=\lim_{h\rightarrow 0}\big{(}a_{n,\beta,b} \overline{\tilde{a}_{n,\beta,b}}\big{)}(y^{\rho_b})$.
This concludes the proof of Corollary \ref{blacksabbath}.
\end{proof}

\subsection{Construction of the measure $\mu^\xi$}\label{appendice}
In the introduction we defined the measure $\mu^\xi$ by
\begin{equation*}\int_{T^*\R^d} a \mathrm{d}\mu^\xi := \lim \limits_{t\rightarrow \infty} \int_{T^*\R^d} a\circ\Phi^{t}
\mathrm{d}\mu^\xi_0,
\end{equation*}
for any $a\in C_c^0(T^*\R^d)$. We will now give a sketch of the proof of why the hyperbolicity and transversality hypotheses, along with the assumption that $\mathcal{P}(1)<0$, imply that the above limit exists. 

Note that the assumption $\mathcal{P}(1)<0$ is really less restrictive than $\mathcal{P}(1/2)<0$. For instance, if we assume that the flow $(\Phi^t)$ is \emph{Axiom A}, that is to say, that the periodic orbits are dense in $K$, then \cite[\S 4.C]{bowen1975equilibrium} guaranties us that $\mathcal{P}(1)<0$.

Note that, if $a$ is non-negative, then $t\mapsto  \int_{T^*\R^d} a\circ\Phi^{t}
\mathrm{d}\mu^\xi_0$ is non-decreasing, so that we only have to show that this quantity is bounded.

If $\mu$ is a measure, we define $\Phi^t_* \mu$ by
$$\int_{T^*\R^d} a \mathrm{d}(\Phi^t_*\mu) :=  \int_{T^*\R^d} a\circ\Phi^{t}
\mathrm{d}\mu.$$
If $\pi\in C^\infty(T^*\R^d; [0,1])$ we define the measure $\pi\mu$ by
$$\int_{T^*\R^d} a \mathrm{d}(\pi\mu) :=  \int_{T^*X} a \pi
\mathrm{d}\mu.$$
\begin{remarque}
Note that if $\mu$ is the semiclassical measure associated to a Lagrangian state $\phi_h$, then $\pi \mu$ is the semiclassical measure associated to $ \sqrt{pi} \phi_h$, and, by Egorov's theorem, $\Phi^{t}_* \mu$ is the semiclassical measure associated to $U(t) \phi_h$. 
\end{remarque}

We shall use the functions $\pi_b$ from section \ref{partition}. If $\beta\in B^n$, we set 
$$\Phi_\beta\mu := \pi_{\beta_n}\Phi^{t_0}_*\Big{(}...\pi_{\beta_2}\Phi^{t_0}_*\Big{(}\pi_{\beta_1} \Phi^{t_0}_*\mu \Big{)}\Big{)}.$$

Let $\phi_h$ be a Lagrangian state associated to a Lagrangian manifold which is $\gamma$-unstable in the coordinates $(y^\rho,\eta^\rho)$, and let 
 $\mu$ be the semiclassical measure associated to $\phi_h$. The propagation $U_\beta \phi_h$ can be described using the methods of section \ref{bismark} along with the results of section \ref{houx}. In particular, we obtain, like in \cite[(7.12)]{NZ} that we may find $C,\epsilon>0$ such that for all $N\in \mathbb{N}$ and all $\beta\in B_1^N$, we have
$$\|U_\beta \phi_h\|_{L^2}\leq C(1+C\epsilon)^N \prod_{j=1}^{N} \exp \big{[} \frac{1}{2} S_{t_0} (V_{\beta_j})\big{]}.$$

We may deduce from this the following bound for the measure $\Phi_\beta \mu$. Note that this could also be deduced directly from the transport equations for measures, without using Schrödinger propagators and Egorov's theorem.

For any $a\in C_c^0(T^*X)$, if $\beta\in B_1^N$, we have that
\begin{equation*}
\langle\Phi_\beta \mu,a\rangle
\leq C(a) (1+C\epsilon)^N
\prod_{j=1}^{N} \exp \big{[} S_{t_0} (V_{\beta_j})\big{]}.
\end{equation*}

By possibly taking the sets $V_b$ smaller, we may ensure just like in section \ref{Benoit} that 
\begin{equation*}\sum_{b\in B_1} \exp \{ S_{t_0} (V_b) \} \leq \exp \{ t_0
(\mathcal{P}(1) + \epsilon) \} .
\end{equation*}

Therefore, we obtain that
\begin{equation}\label{dispersionmesure}
\sum_{\beta\in B_1^N} \langle \Phi_\beta \mu, a \rangle \leq C(a) \exp \big{[}- N t_0 (\mathcal{P}(1)-\epsilon)\big{]}.
\end{equation}

If we assume that the flow $(\Phi^t)$ is \emph{Axiom A}, that is to say, that the periodic orbits are dense in $K$, then \cite[\S 4.C]{bowen1975equilibrium} guaranties us that $\mathcal{P}(1)<0$.

Now, we have that 
$$\Phi^{Nt_0}_* \mu^\xi = \sum_{\beta\in \tilde{\mathcal{B}}^N} \Phi_\beta \mu^\xi,$$
and we may use (\ref{dispersionmesure}) along with the assumption that $\mathcal{P}(1)<0$ to show that, if $a$ is non-negative, $t\mapsto  \int_{T^*\R^d} a\circ\Phi^{t}
\mathrm{d}\mu^\xi_0$ is bounded.

Showing that $\mu^\xi$ is the semiclassical measure associated to $E_h$ follows from \cite[\S 5.1]{DG} (which relies on Egorov's theorem), along with estimate (\ref{japon}).

\bibliographystyle{alpha}
\bibliography{references}
\end{document}